\documentclass[acmsmall, authorversion]{acmart}

\usepackage{prelude}

\usemathitv

\theoremstyle{acmcase}
\newtheorem{step}{Step}
\cCrefname{step}{Step}{Steps}

\newcommand{\iftwocol}{\ifdefstring{\acmformat}{sigconf}}
\makeatletter
\newcommand{\ifanonymous}[2]{\if@ACM@anonymous#1\else#2\fi}
\makeatother

\ifanonymous{\includecomment{anononly}}{\excludecomment{anononly}}

\newlength{\captionsqueeze}
\iftwocol{\setlength{\captionsqueeze}{\baselineskip}}{}

\newcommand{\proofin}[2][]{%
  \IfSubStr{#1}{,}{\begin{proof}[Proofs of \cref{#1}]}{\begin{proof}}
    See \cref{#2}. \noqed
  \end{proof}}

\newcommand{\tail}[2][]{\appopt{\subopt{\ol{F}}{#1}}{#2}}
\newcommand{\density}[2][]{\appopt{\subopt{f}{#1}}{#2}}
\newcommand{\hazard}[2][]{\appopt{\subopt{h}{#1}}{#2}}
\newcommand{\gail}[2][]{\appopt{\subopt{\ol{G}}{#1}}{#2}}
\newcommand{\gensity}[2][]{\appopt{\subopt{g}{#1}}{#2}}

\newsavebox{\upbox}
\sbox{\upbox}{%
  \begin{tikzpicture}[baseline=0cm, transform canvas={scale=0.6}]
    \draw[line width=0.4 / 0.6, ->] (0, 0) -- (0, 0.14 / 0.6);
  \end{tikzpicture}}
\newcommand{\up}{{\mkern 3mu\vphantom{+}\usebox{\upbox}\mkern 3mu}}

\newsavebox{\mathquotequotes}
\newsavebox{\mathquotecontent}
\makeatletter
\newcommand{\mathquote}[1]{%
  \savebox{\mathquotequotes}{\text{``}}%
  \savebox{\mathquotecontent}{$\displaystyle #1$}%
  \raisebox{\dimexpr\ht\mathquotecontent-\ht\mathquotequotes\relax}{``}%
  #1%
  \raisebox{\dimexpr\ht\mathquotecontent-\ht\mathquotequotes\relax}{''}}

\newcommand{\rank}[2][]{\appopt{\subopt{r}{#1}}{#2}}
\newcommand{\rankp}[2][]{\appopt{\subopt{r'}{#1}}{#2}}
\newcommand{\rankup}[2][]{\appopt{\subopt{r^\up}{#1}}{#2}}

\newcommand{\y}{y}
\newcommand{\z}{z}

\newcommand{\coload}[1]{\appopt{\ol{\rho}}{#1}}
\newcommand{\excess}[1]{\appopt{\tau}{#1}}
\newcommand{\waiting}[2][]{\appopt{\subopt{Q}{#1}}{#2}}
\newcommand{\residence}[2][]{\appopt{\subopt{R}{#1}}{#2}}
\newcommand{\response}[2][]{\appopt{\subopt{T}{#1}}{#2}}

\newcommand{\policyname}{\textsf}
\newcommand{\gittins}{\policyname{Gittins}}
\newcommand{\serpt}{\policyname{SERPT}}
\newcommand{\mserptmath}{\policyname{M\=/SERPT}}
\newcommand{\mserpt}{\ifmmode\mserptmath\else M\=/SERPT\fi}

\newcommand{\shortensubscripts}{%
  \renewcommand{\gittins}{\policyname{G}}
  \renewcommand{\serpt}{\policyname{S}}
  \renewcommand{\mserptmath}{\policyname{MS}}}

\makeatletter
\catcode`\$=3 

\usetikzlibrary{backgrounds, math, shapes, positioning, decorations.markings}

\def\@above{above}
\def\@below{below}
\def\@negative{-}
\newcommand{\downmode}{%
  \def\@above{below}
  \def\@below{above}
  \def\@negative{}}

\newcommand{\xscale}{10}
\newcommand{\yscale}{6}
\newcommand{\ticksize}{3.7}
\newcommand{\arrowsize}{8}
\newcommand{\xarrowsize}{(\arrowsize/\xscale)}
\newcommand{\yarrowsize}{(\arrowsize/\yscale)}

\newcommand{\xscaleload}{%
  \renewcommand{\xscale}{150}
  \renewcommand{\yscale}{10}}

\newcommand{\projx}[1]{($ (0,0)!#1!(1,0) $)}
\newcommand{\projxheight}[2]{($ (0,#2)!#1!(1,#2) $)}
\newcommand{\midpt}[2]{($ #1!0.5!#2 $)}

\newcommand{\xguide}[3][]{%
  \draw[axis] ({#2}, 0) -- ({#2}, {\@negative(\ticksize)/\yscale})
  node[\@below] {\ifempty{#1}{$#2$}{#1}\vphantom{by}};
  \draw[guide] ({#2}, 0) -- ({#2}, {#3});}
\newcommand{\yguide}[3][]{%
  \draw[axis] (0, {#3}) -- (-\ticksize/\xscale, {#3})
  node[left] {\ifempty{#1}{$#3$}{#1}};
  \draw[guide] (0, {#3}) -- ({#2}, {#3});}

\newcommand{\xguidept}[2]{%
  \ifempty{#1}{}{%
    \draw[axis] \projx{#2} -- \projxheight{#2}{{\@negative(\ticksize)/\yscale}}
    node[\@below] {#1\vphantom{by}}};
  \draw[guide] \projx{#2} -- #2;}

\newcommand{\axes}[4]{%
  \draw[axis, ->] (-\ticksize/\xscale, 0)
  node[left] {$0$\vphantom{by}} -- ({#1 + \xarrowsize}, 0)
  node[right] {#3\vphantom{by}};
  \draw[axis, ->] (0, \@negative\ticksize/\yscale)
  node[\@below] {$0$\vphantom{by}} -- (0, {#2 + \yarrowsize})
  node[\@above] {#4\vphantom{by}};}

\newcommand{\axesxtorsor}[4]{%
  \draw[axis, ->] (-\ticksize/\xscale, 0)
  node[left] {$0$\vphantom{by}} -- ({#1 + \xarrowsize}, 0)
  node[right] {#3\vphantom{by}};
  \draw[axis, ->] (0, 0) -- (0, {#2 + \yarrowsize})
  node[\@above] {#4};}

\newcommand{\snakefirst}[2]{#1 parabola bend #1 \midpt{#1}{#2}}
\newcommand{\snakesecond}[2]{\midpt{#1}{#2} parabola bend #2 #2}
\newcommand{\snake}[2]{\snakefirst{#1}{#2} -- \snakesecond{#1}{#2}}

\newcommand{\intervallabel}[3]{%
  \begin{scope}
    \node (A) at \projx{#1} {};
    \node (B) at \projx{#2} {};
    \node[\@above=1.5pt, inner sep=0pt] at \midpt{(A)}{(B)} {%
      \footnotesize #3\vphantom{ly}};
  \end{scope}}
\newcommand{\intervalsnake}[5][--]{%
  \begin{scope}[on background layer]
    \fill[#2] \projx{#3} -- \snake{#3}{#4} -- \projx{#4};
  \end{scope}
  \draw[guide] \projx{#3} #1 \snake{#3}{#4};
  \intervallabel{#3}{#4}{#5}}
\newcommand{\intervalsnakesecond}[5][--]{%
  \begin{scope}[on background layer]
    \fill[#2] \projx{#3} -- #3 parabola bend #4 #4 -- \projx{#4};
  \end{scope}
  \draw[guide] \projx{#3} #1 #3 parabola bend #4 #4;
  \intervallabel{#3}{#4}{#5}}
\newcommand{\intervalspecial}[6][--]{%
  \begin{scope}[on background layer]
    \fill[#2]
      \projx{#3}
      -- #3
      parabola bend #4 #4
      parabola bend #4 #5
      -- \projx{#5};
  \end{scope}
  \draw[guide] \projx{#3} #1 #3;
  \intervallabel{#3}{#5}{#6}}

\tikzset{%
  figure/.style={x=\xscale, y=\yscale, thick, font=\small,
    baseline={([yshift={-\ht\strutbox}]current bounding box.north)}},
  axis/.style={thick},
  guide/.style={black!42, thick, densely dotted},
  primary/.style={ultra thick, color=cyan!97!black},
  secondary/.style={ultra thick, color=green!41!yellow!78!black},
  cutoff/.style={line width=3.2pt, dotted, color=magenta!92},
  hill/.style={color=red!37!yellow!25},
  valley/.style={color=green!74!yellow!24}}

\catcode`\$=13 
\makeatother

\setcopyright{acmcopyright}
\acmJournal{POMACS}
\acmYear{2020}
\acmVolume{4}
\acmNumber{1}
\acmArticle{11}
\acmMonth{3}
\acmPrice{15.00}
\acmDOI{10.1145/3379477}

\received{October 2019}
\received[revised]{December 2019}
\received[accepted]{January 2020}

\begin{CCSXML}
<ccs2012>
<concept>
<concept_id>10002944.10011123.10011674</concept_id>
<concept_desc>General and reference~Performance</concept_desc>
<concept_significance>500</concept_significance>
</concept>
<concept>
<concept_id>10002950.10003648.10003688.10003689</concept_id>
<concept_desc>Mathematics of computing~Queueing theory</concept_desc>
<concept_significance>500</concept_significance>
</concept>
<concept>
<concept_id>10003033.10003079.10003080</concept_id>
<concept_desc>Networks~Network performance modeling</concept_desc>
<concept_significance>500</concept_significance>
</concept>
<concept>
<concept_id>10003752.10003809.10003636.10003811</concept_id>
<concept_desc>Theory of computation~Routing and network design problems</concept_desc>
<concept_significance>300</concept_significance>
</concept>
<concept>
<concept>
<concept_id>10010147.10010341.10010342</concept_id>
<concept_desc>Computing methodologies~Model development and analysis</concept_desc>
<concept_significance>300</concept_significance>
</concept>
<concept>
<concept_id>10011007.10010940.10010941.10010949.10010957.10010688</concept_id>
<concept_desc>Software and its engineering~Scheduling</concept_desc>
<concept_significance>300</concept_significance>
</concept>
</ccs2012>
\end{CCSXML}

\ccsdesc[500]{General and reference~Performance}
\ccsdesc[500]{Mathematics of computing~Queueing theory}
\ccsdesc[500]{Networks~Network performance modeling}
\ccsdesc[300]{Theory of computation~Routing and network design problems}
\ccsdesc[300]{Computing methodologies~Model development and analysis}
\ccsdesc[300]{Software and its engineering~Scheduling}

\keywords{%
  M/G/1;
  response time;
  latency;
  sojourn time;
  Gittins policy;
  shortest expected remaining processing time (SERPT),
  monotonic SERPT (M-SERPT);
  approximation ratio;
  multilevel processor sharing (MLPS);
  foreground-background (FB);
  shortest remaining processing time (SRPT)}

\begin{document}

\title{Simple Near-Optimal Scheduling for the M/G/1}
\ifanonymous{%
  \subtitle{%
    One-shot revision of 2020 summer deadline \#15.
    \color{red}%
    See \cref{app:rebuttal} for responses to reviewer comments.}}{}

\author{Ziv Scully}
\affiliation{%
  \institution{Carnegie Mellon University}
  \department{Computer Science Department}
  \streetaddress{5000 Forbes Ave}
  \city{Pittsburgh}
  \state{PA}
  \postcode{15213}
  \country{USA}}
\email{zscully@cs.cmu.edu}

\author{Mor Harchol-Balter}
\affiliation{%
  \institution{Carnegie Mellon University}
  \department{Computer Science Department}
  \streetaddress{5000 Forbes Ave}
  \city{Pittsburgh}
  \state{PA}
  \postcode{15213}
  \country{USA}}
\email{harchol@cs.cmu.edu}

\author{Alan Scheller-Wolf}
\affiliation{%
  \institution{Carnegie Mellon University}
  \department{Tepper School of Business}
  \streetaddress{5000 Forbes Ave}
  \city{Pittsburgh}
  \state{PA}
  \postcode{15213}
  \country{USA}}
\email{awolf@andrew.cmu.edu}

\begin{abstract}
  We consider the problem of preemptively scheduling jobs
  to minimize mean response time of an M/G/1 queue.
  When we know each job's size,
  the \emph{shortest remaining processing time} (SRPT) policy is optimal.
  Unfortunately,
  in many settings we do not have access to each job's size.
  Instead, we know only the job size distribution.
  In this setting the \emph{Gittins} policy
  is known to minimize mean response time,
  but its complex priority structure can be computationally intractable.
  A much simpler alternative to Gittins is the
  \emph{shortest expected remaining processing time} (SERPT) policy.
  While SERPT is a natural extension of SRPT to unknown job sizes,
  it is unknown whether or not SERPT
  is close to optimal for mean response time.

  We present a new variant of SERPT called \emph{monotonic SERPT} (\mserpt{})
  which is as simple as SERPT
  but has provably near-optimal mean response time at all loads
  for any job size distribution.
  Specifically,
  we prove the mean response time ratio between \mserpt{} and Gittins is
  at most~$3$ for load $\rho \leq 8/9$
  and at most~$5$ for any load.
  This makes \mserpt{} the only non-Gittins scheduling policy
  known to have a constant-factor approximation ratio for mean response time.
\end{abstract}

\maketitle
\newlength{\colwidth}
\iftwocol{%
  \setlength{\colwidth}{\linewidth}}{%
  \setlength{\colwidth}{0.5\linewidth}}

\section{Introduction}

Scheduling to minimize mean response time in a preemptive M/G/1 queue
is a classic problem in queueing theory.
When job sizes are known,
the \emph{shortest remaining processing time} (SRPT) policy
is known to minimize mean response time \citep{srpt_optimal_schrage}.
Unfortunately, determining or estimating a job's exact size
is difficult or impossible in many applications,
in which case SRPT is impossible to implement.
In such cases we only learn jobs' sizes after they have completed,
which can give us a good estimate of the \emph{distribution} of job sizes.

When individual job sizes are unknown but the job size distribution is known,
the \emph{Gittins} policy minimizes mean response time
\citep{m/g/1_gittins_aalto, book_gittins}.
Gittins has a seemingly simple structure:
\begin{itemize}
\item
  Based on the job size distribution,
  Gittins defines a \emph{rank function}
  that maps a job's \emph{age},
  which is the amount of service it has received so far,
  to a \emph{rank}, which denotes its priority \citep{soap_scully}.
\item
  At every moment in time,
  Gittins applies the rank function to each job's age
  and serves the job with the best rank.
\end{itemize}
Unfortunately, hidden in this simple outline is a major obstacle:
computing the rank function from the job size distribution requires
solving a nonconvex optimization problem for every possible age.
Although the optimization can be simplified for
specific classes of job size distributions \citep{m/g/1_gittins_aalto},
it is intractable in general.

In light of the difficulty of computing the Gittins rank function,
practitioners turn to a wide variety of simpler scheduling policies,
each of which has good performance in certain settings.
Three of the most famous are the following:
\begin{itemize}
\item
  \emph{First-come, first-serve} (FCFS)
  serves jobs nonpreemptively in the order they arrive.
  \begin{itemize}
  \item
    FCFS generally performs well for low-variance job size distributions
    and is optimal for those with the
    \emph{new better than used in expectation} property
    \citep{dhr_dmrl_optimality_righter, m/g/1_gittins_aalto}.
  \end{itemize}
\item
  \emph{Foreground-background} (FB)
  always serves the job of minimal age,
  splitting the server evenly in case of ties.
  \begin{itemize}
  \item
    FB generally performs well for high-variance job size distributions
    and is optimal for those with the \emph{decreasing hazard rate} property
    \citep{fb_optimality_misra, dhr_dmrl_optimality_righter,
      dhr_ihr_optimality_righter, m/g/1_gittins_aalto}.
  \end{itemize}
\item
  \emph{Processor sharing} (PS)
  splits the server evenly between all jobs currently in the system.
  \begin{itemize}
  \item
    PS has appealing insensitivity
    \citep{ps_analysis_kleinrock, ps_insensitive_bonald,
      ps_insensitive_cheung}
    and fairness \citep{fairness_raz, fairness_wierman} properties
    which ensure passable mean response time for all job size distributions,
    but it is only optimal in the trivial special case of
    exponential job size distributions.
  \end{itemize}
\end{itemize}
These are a few of the many scheduling heuristics
studied in the past several decades
\citep{book_harchol-balter, book_kleinrock, lps_tail_zwart,
  lps_design_yamazaki, mlps_analysis_guo, mlps_delay_aalto,
  mlps_two-level_aalto, smart_insensitive_wierman}.
Unfortunately, there are
\emph{no guarantees of near-optimal mean response time}
for any non-Gittins policy
that hold across all job size distributions.
In fact, we show in \cref{app:infinite_ratio} that
FCFS, FB, and PS
can have \emph{infinite} mean response time ratio compared to Gittins.
We therefore ask:
\begin{quote}
  Is there a \emph{simple} scheduling policy
  with near-optimal mean response time
  for all job size distributions?
\end{quote}

\newcommand{\figresponsecomparison}{%
  \centering
  \begin{minipage}[t]{\colwidth}
    \centering
    \textsc{Mean Response Time Relative to Gittins} \\
    \includegraphics[width=\linewidth]{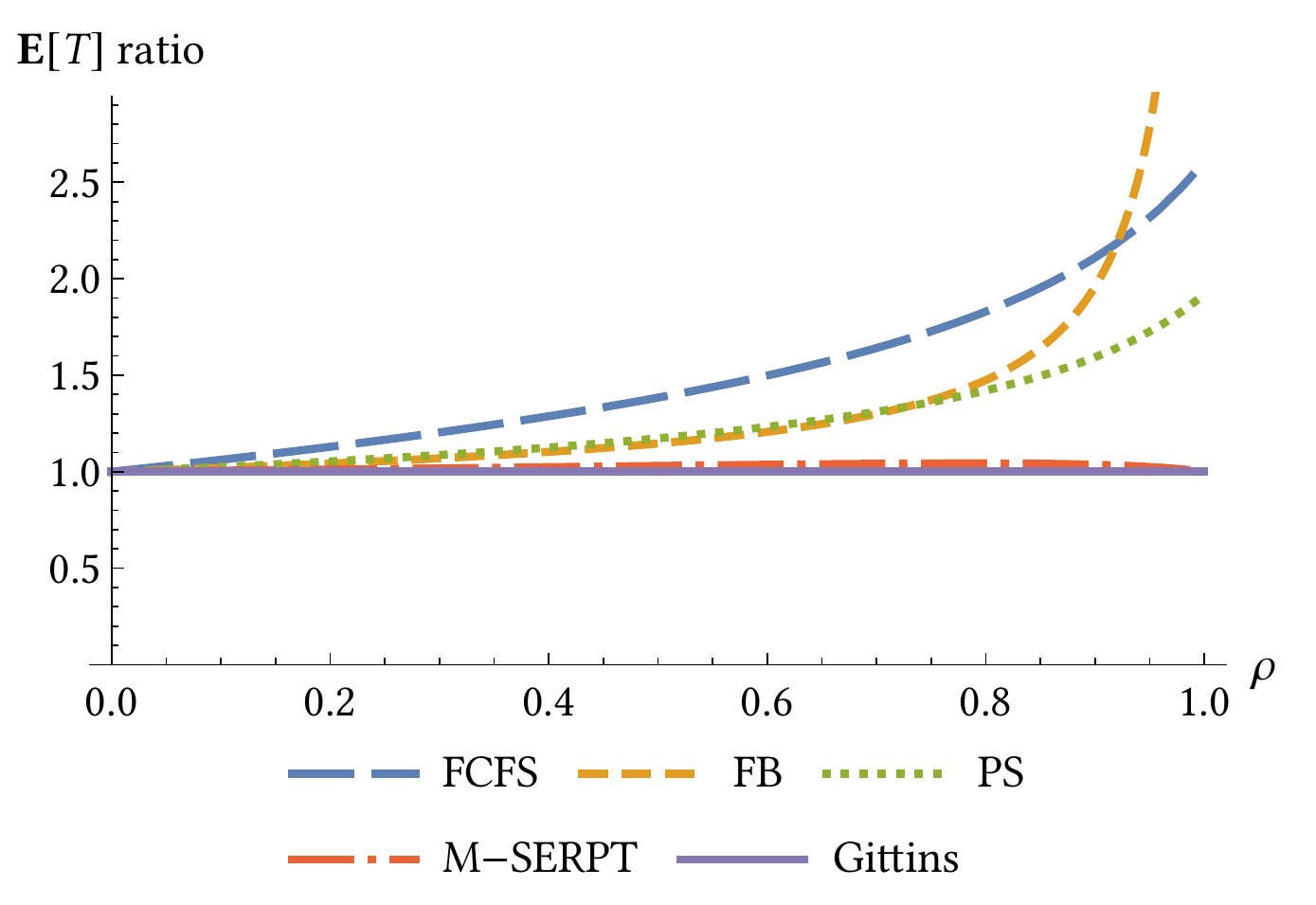}
  \end{minipage}%
  \hfill
  \begin{minipage}[t]{\colwidth}
    \centering
    \textsc{Job Size Distribution} \\
    \includegraphics[width=\linewidth]{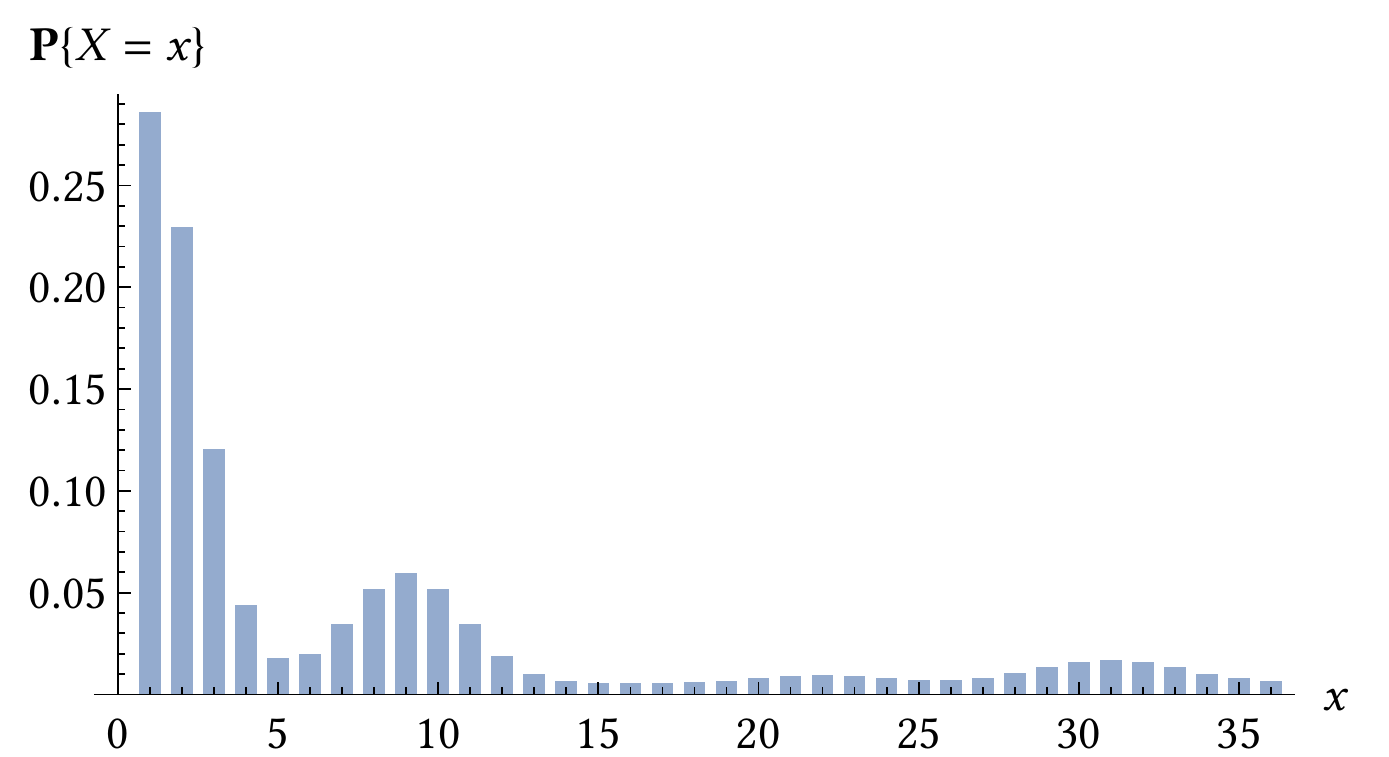}
  \end{minipage}
  \vspace{-\captionsqueeze}
  \caption{Mean Response Time Comparison}
  \label{fig:response_comparison}}

\newcommand{\tabcomplexity}{%
  \begin{table}
    \centering
    \caption{Comparison of Gittins, SERPT, and \mserpt{}}
    \label{tab:complexity}
    \vspace{-0.5\captionsqueeze}
    \begin{tabular}{@{}llll@{}}
      \toprule
      \textsc{Policy}
      & \multicolumn{2}{c}{\textsc{Computation}}
      & \textsc{Optimality} \\
      \cmidrule(lr){2-3}
      & \textit{Discrete} & \textit{Continuous} & \\
      \midrule
      Gittins & $O(n^2)$ & intractable & optimal \\
      SERPT & $O(n)$ & tractable & unknown \\
      \mserpt & $O(n)$ & tractable & $5$\=/approximation or better \\
      \bottomrule
    \end{tabular}
  \end{table}}

\iftwocol{\begin{figure*}\figresponsecomparison\end{figure*}}{}

\iftwocol{\tabcomplexity}{}

One candidate for such a policy is
\emph{shortest expected remaining processing time} (SERPT).
Like Gittins, SERPT assigns each job a rank as a function of its age,
but SERPT has a much simpler rank function:
a job's rank is its \emph{expected remaining size}.
That is, if the job size distribution is~$X$,
then under SERPT, a job's rank at age~$a$ is
\begin{align*}
  \rank[\serpt]{a} = \E{X - a \given X > a},
\end{align*}
where lower rank means better priority.
Intuitively, it seems like SERPT should have low mean response time
because it prioritizes jobs that are short in expectation,
analogous to what SRPT does for known job sizes.
SERPT is certainly much simpler than Gittins,
as summarized in \cref{tab:complexity}
and discussed in detail in \cref{app:gittins_hard}.
\begin{itemize}
\item
  For discrete job size distributions with $n$ support points,
  the best known algorithms compute Gittins's rank function
  in $O(n^2)$ time \citep{gittins_index_computation_chakravorty}.
  In contrast, SERPT's rank function takes just $O(n)$ time to compute.
\item
  For continuous job size distributions,
  computing Gittins's rank function
  is intractable with known methods:
  it requires solving a nonconvex optimization problem
  at every age~$a$,
  and the objective of the optimization
  requires numerical integration to compute.
  In contrast, SERPT's rank function requires just numerical integration.
\end{itemize}

\iftwocol{}{\tabcomplexity}

\subsection{Challenges}
\label{sub:challenges}

SERPT is intuitively appealing and simple to compute,
but does it have near-optimal mean response time?
This question is open:
there is \emph{no known bound} on the performance gap
between SERPT and Gittins.
To be precise, letting\footnote{%
  The mean response time ratio $C_\serpt(X)$
  also depends on the load~$\rho$,
  but we omit~$\rho$ from the notation to reduce clutter.}
\begin{align*}
  C_\serpt(X)
  = \frac{\E{\response[\serpt]{X}}}{\E{\response[\gittins]{X}}}
\end{align*}
be the mean response time ratio between SERPT and Gittins
for a given job size distribution~$X$,
there is no known bound on
\begin{align*}
  \text{approximation ratio of SERPT} = \sup_X C_\serpt(X).
\end{align*}
This approximation ratio is difficult to bound because
we have to consider \emph{all} possible job size distributions~$X$.

In fact, until recently it was unknown how to compute $C_\serpt(X)$
even given a \emph{specific} job size distribution~$X$.
This changed with the introduction of the \emph{SOAP} technique
\citep{soap_scully},
which can analyze the mean response time of any scheduling policy
that can be specified by a rank function.
We can use SOAP to \emph{numerically} compute $C_\serpt(X)$
for any given job size distribution~$X$.
However, SOAP does not give a bound on SERPT's approximation ratio,
which requires considering all possible~$X$.

One might hope to derive a general expression for $C_\serpt(X)$ using SOAP.
While this is possible in principle,
the resulting expression is intractable (\cref{sub:why_not_soap}).
In light of this, our strategy is to create a new scheduling policy
that captures the essence of SERPT
but has a tractable mean response time expression
in terms of~$X$.

\subsection{A New Simple Scheduling Policy: \mserpt{}}

In this paper we introduce a new policy
called \emph{monotonic SERPT} (\mserpt{})
that is simple to compute
and has provably near-optimal mean response time.
Like Gittins and SERPT, we specify \mserpt{} using a rank function.
\mserpt{}'s rank function is like SERPT's,
except \emph{a job's rank never improves:}
\begin{align*}
  \rank[\mserpt]{a}
  = \max_{0 \leq b \leq a} \rank[\serpt]{b}.
\end{align*}

We prove that \mserpt{} is a $5$\=/approximation for mean response time,
meaning its mean response time is at most $5$~times that of Gittins.
This makes \mserpt{} the first non-Gittins scheduling policy
known to have a constant-factor approximation ratio.
The approximation ratio is even smaller at low and moderate loads.
For example, \mserpt{} is a $3$\=/approximation for load $\rho \leq 8/9$.
Remarkably, \mserpt{} achieves its constant-factor approximation ratio
with a rank function that is as simple to compute as SERPT's
(\cref{tab:complexity}).

\iftwocol{}{\begin{figure}\figresponsecomparison\end{figure}}

Our approximation ratio for \mserpt{} is a worst-case upper bound.
There are many distributions where \mserpt{}'s performance is
equal or very close to Gittins's.
For example, \cref{fig:response_comparison} compares the mean response times
of several policies, including \mserpt{}, to that of Gittins,
where the job size distribution is the mixture of four bell curves pictured.
In this example, \mserpt{}'s mean response time is within $4\%$ of Gittins's
across all loads.\footnote{%
  For the specific distribution in \cref{fig:response_comparison},
  SERPT has mean response time between Gittins and \mserpt{}.
  However, there are examples where SERPT has
  greater mean response time than \mserpt{},
  and whether SERPT has a constant-factor approximation ratio
  remains an open problem.}
In further preliminary numerical experiments,
omitted for lack of space,
we only observed a mean response time difference of more than $15\%$
in a specific pathological scenario (\cref{sec:pathological}).

\subsection{Contributions}

We introduce \mserpt{},
the first non-Gittins policy
proven to achieve mean response time within a constant factor of Gittins's.
Our specific contributions are as follows:
\begin{itemize}
\item
  We define the \emph{monotonic SERPT} (\mserpt{}) policy,
  a new variant of SERPT
  (\cref{sec:model}).
\item
  We introduce a new simplification of the SOAP response time analysis
  that yields a tractable mean response time expression for \mserpt{}
  (\cref{sec:outline, sec:hills_valleys}).
\item
  We prove that \mserpt{} is a $5$\=/approximation
  for minimizing mean response time,
  with an even smaller approximation ratio at low and moderate loads
  (\cref{sec:approximation_ratio}).
\item
  We use the fact that \mserpt{} is a $5$\=/approximation
  to resolve two open questions in M/G/1 scheduling theory
  (\cref{sec:implications}).
\item
  We construct a pathological job size distribution for which
  the mean response time ratio between \mserpt{} and Gittins is~$2$,
  which is the largest ratio we have observed
  (\cref{sec:pathological}).
\end{itemize}
\mserpt{}'s approximation ratio is therefore between $2$ and~$5$.
We conclude by discussing in detail why this gap is hard to close
and pointing out several possible avenues of attack
(\cref{sec:tightening}).

\subsection{Related Work}
\label{sub:related_work}

In this paper we consider minimizing mean response time
in the setting of an M/G/1 queue
with unknown job sizes but known job size distribution.
We are not aware of prior work on approximation ratios in this exact setting,
but there is prior work in related settings.

\Citet{smart_insensitive_wierman}
study the M/G/1 with \emph{known} job sizes.
They prove that all scheduling policies in a class called \emph{SMART}
are $2$\=/approximations for mean response time,
where the baseline for this setting is SRPT \citep{srpt_optimal_schrage}.
All SMART policies use job size information,
so they cannot be applied to our setting of unknown job sizes.
Proving approximation ratios in our setting
is significantly more challenging because
the scheduling policies involved, namely \mserpt{} and Gittins,
have much more complicated mean response time formulas
than SRPT and the SMART class \citep{smart_insensitive_wierman, soap_scully}.

We now turn to settings with unknown job sizes.
\Citet{rmlf_pruhs} propose a policy called
\emph{randomized multilevel feedback} (RMLF) for the case
where neither job sizes \emph{nor the job size distribution} are known.
RMLF has been studied in two specific settings:
\begin{itemize}
\item
  In the \emph{worst-case} setting,
  meaning job sizes and arrival times are chosen adversarially,
  RMLF has mean response time $O(\log n)$ times that of SRPT,
  where $n$ is the number of jobs in the arrival sequence
  \citep{rmlf_pruhs, rmlf_leonardi}.
  Up to constant factors,
  this is the best possible performance in the worst-case setting
  \citep{nonclairvoyant_motwani}.
\item
  In the \emph{stochastic} GI/GI/1 setting,
  \citet{rmlf_zwart} prove that as the load $\rho$ approaches~$1$,
  \begin{align*}
    \frac{\E{T_{\policyname{RMLF}}}}{\E{T_{\policyname{SRPT}}}}
    = O\gp[\bigg]{\log\frac{1}{1 - \rho}}.
  \end{align*}
\end{itemize}
These results differ from ours in two important ways.
First, the results do not prove constant-factor approximation ratios:
they give asymptotic ratios that become arbitrarily large
in the $n \to \infty$ and $\rho \to 1$ limits, respectively.
In contrast,
we show that \mserpt{} is a $5$\=/approximation at all loads~$\rho$,
even in the $\rho \to 1$ limit.
Second, the results compare RMLF with SRPT, not with Gittins,
even though job sizes are unknown.
This is because optimal policies for the worst-case and GI/GI/1 settings
are not known,
especially with unknown job size distribution,
leaving SRPT as a sensible baseline for comparison.
In contrast, in the M/G/1 setting with known job size distribution,
we know the optimal policy is Gittins,
so we compare \mserpt{} to Gittins.
Comparing RMLF to Gittins is an interesting open problem.

A final setting is a hybrid between the worst-case and M/G/1 settings.
\Citet{multiple_processors_megow} consider scheduling
jobs with stochastic sizes but adversarially chosen arrival times.
However, rather than considering the metric of mean response time,
they consider mean \emph{completion} time.
The difference between these metrics is that
a job's response time is measured relative to its arrival,
whereas a job's completion time is measured relative to time~$0$.
Completion and response times are only the same
when all the jobs arrive at once.
Thus, while \citet{multiple_processors_megow} show that
Gittins and a related policy are $2$\=/approximations for mean completion time,
this does not translate into an approximation ratio for mean response time.

\section{System Model and Preliminaries}
\label{sec:model}

\shortensubscripts

We consider scheduling policies for a single-class M/G/1 queue
in which jobs have unknown size.
We write $\lambda$ for the arrival rate and $X$ for the job size distribution,
so the load is $\rho = \lambda\E{X}$.
We assume $\rho < 1$ for stability.
Jobs may be preempted at any time without delay or loss of work.

Throughout this paper, all monotonicities are meant in the weak sense
unless otherwise specified.
For example, ``increasing'' means ``nondecreasing''.
Many quantities defined in this paper
depend on one or both of $X$ and~$\rho$,
but we usually leave this implicit in our notation to reduce clutter.

We write $\tail{}$ and $\density{}$ for
the tail and density functions of~$X$, respectively.
For ease of presentation, we assume that
\begin{itemize}
\item
  $\density{}$ is well defined and continuous,
  implying the distribution does not have atoms; and
\item
  both the SERPT rank function (\cref{def:serpt}) and the hazard rate function
  \begin{align*}
    \hazard{x} = \frac{\density{x}}{\tail{x}}
  \end{align*}
  are piecewise monotonic,
  ruling out some pathological cases.
\end{itemize}
With some effort,
one very likely can adapt our proofs to relax these assumptions.
In particular, we have confirmed our results
for discrete job size distributions,
omitting the details for lack of space.

We write $T_\pi$ for the response time distribution under policy~$\pi$,
and we write $T_\pi(x)$ for the response time distribution of
a job of size~$x$ under policy~$\pi$.
We use similar notation for
waiting time~$\waiting[\pi]{}$ and residence time~$\residence[\pi]{}$
(\cref{sub:waiting_residence})
For the most part, $\pi$ is one of
\begin{itemize}
\item
  $\gittins$, denoting Gittins;
\item
  $\serpt$, denoting SERPT; or
\item
  $\mserpt$, denoting \mserpt{}.
\end{itemize}
These policies are defined in \cref{sub:policies}.
We use the same subscripts for other quantities
that depend on the scheduling policy.
We omit the subscript when discussing a generic SOAP policy.

\subsection{SOAP Policies and Rank Functions}
\label{sub:policies}

A SOAP policy \citep{soap_scully} is specified by a \emph{rank function}
\begin{align*}
  \rank{} : \R_{\geq 0} \to \R
\end{align*}
which maps a job's \emph{age}, the amount of time it has been served,
to its \emph{rank}, or priority.\footnote{%
  The full SOAP definition \citep{soap_scully}
  allows a job's rank to also depend on characteristics
  such as its size or class,
  but we do not need this generality for the policies in this paper.}
All SOAP policies have the same core scheduling rule:
always serve the job of \emph{minimum rank},
breaking ties in first-come, first served (FCFS) order.

Gittins, SERPT, and \mserpt{} are all SOAP policies.
Their rank functions are defined as follows.

\begin{definition}
  \label{def:gittins}
  The \emph{Gittins} policy is the SOAP policy with rank function
  \begin{align*}
    \rank[\gittins]{a}
    = \inf_{b > a}
      \frac{\int_a^b \tail{t} \d{t}}{\tail{a} - \tail{b}}.
  \end{align*}
\end{definition}

\begin{definition}
  \label{def:serpt}
  The \emph{shortest expected remaining processing time} (SERPT) policy
  is the SOAP policy with rank function
  \begin{align*}
    \rank[\serpt]{a}
    = \E{X - a \given X > a}
    = \frac{\int_a^\infty \tail{t} \d{t}}{\tail{a}}.
  \end{align*}
\end{definition}

\begin{definition}
  \label{def:increasing_envelope}
  The \emph{increasing envelope} of function~$\rank{}$ is
  \begin{align*}
    \rankup{a}
    = \max_{0 \leq b \leq a} \rank{b}.
  \end{align*}
\end{definition}

\begin{definition}
  \label{def:mserpt}
  The \emph{monotonic SERPT} (\mserpt{}) policy
  is the SOAP policy whose rank function is the increasing envelope
  of SERPT's rank function:
  \begin{align*}
    \rank[\mserpt]{a}
    = \rankup[\serpt]{a}
    = \max_{0 \leq b \leq a} \rank[\serpt]{b}.
  \end{align*}
\end{definition}

\Cref{fig:mserpt} illustrates an example of
the relationship between the SERPT and \mserpt{} rank functions.
Under our assumptions on the job size distribution,
each of Gittins, SERPT, and \mserpt{}
has a continuous, piecewise monotonic rank function
\citep{mlps_gittins_aalto}.

\begin{figure}
  \centering
  \begin{tikzpicture}[figure]
  \axes{16.5}{10}{age~$a$}{rank~$\rank{a}$}

  \draw[primary]
  \snake{(0, 2)}{(2.5, 4)}
  -- \snake{(2.5, 4)}{(3.5, 1.5)}
  -- \snake{(3.5, 1.5)}{(4.25, 3.5)}
  -- \snake{(4.25, 3.5)}{(5, 3)}
  -- \snake{(5, 3)}{(8, 5)}
  -- \snake{(10.5, 5)}{(12, 7)}
  -- \snake{(12, 7)}{(13.5, 4)}
  -- \snake{(13.5, 4)}{(15.5, 10)}
  -- \snakefirst{(15.5, 10)}{(17.5, 11)};

  \draw[cutoff, dash phase=0.5pt]
  \snake{(0, 2)}{(2.5, 4)}
  -- \snakesecond{(5, 3)}{(8, 5)}
  -- \snake{(10.5, 5)}{(12, 7)}
  -- \snakesecond{(13.5, 4)}{(15.5, 10)}
  -- \snakefirst{(15.5, 10)}{(17.5, 11)};
\end{tikzpicture}\\[-0.5\captionsqueeze]
\begin{tikzpicture}[figure]
  \draw[primary] (0, 0) -- ++(1.9, 0);
  \node[anchor=west] at (2, 0) {SERPT};
  \draw[cutoff] (6.5, 0) -- ++(1.9, 0);
  \node[anchor=west] at (8.5, 0) {\mserpt{}};
\end{tikzpicture}

  \vspace{-0.5\captionsqueeze}
  \caption{Example of SERPT and \mserpt{} Rank Functions}
  \label{fig:mserpt}
\end{figure}
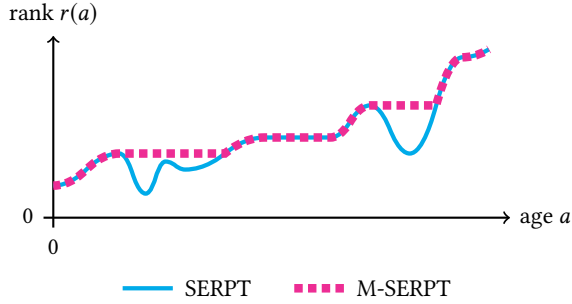

\section{Key Ideas}
\label{sec:outline}

We now give a high-level overview of how we prove our main result,
namely an upper bound on \mserpt{}'s approximation ratio.
The purpose of this section is to communicate, with minimal notation,
(1) the main ideas of our proof and
(2) the novelty of our approach.
As such, we discuss simplified versions of our key definitions and lemmas,
deferring the full versions to later in the paper.
For example, our main result in \cref{thm:approximation_ratio}
bounds \mserpt{}'s approximation ratio as a function of load,
but here we focus on a simpler corollary:
\begin{align}
  \label{eq:approximation_ratio_simple}
  \E{\response{}_\mserpt} \leq 5\E{\response{}_\gittins}.
\end{align}

\subsection{Waiting Time and Residence Time}
\label{sub:waiting_residence}

To prove \cref{eq:approximation_ratio_simple},
we first split response time into two pieces:
\begin{itemize}
\item
  \emph{residence time}~$\residence{}$,
  which is the response time
  of jobs that arrive to an empty system; and
\item
  \emph{waiting time}~$\waiting{}$,
  which is the extra delay due to the fact that the system is not always empty.
\end{itemize}
For SOAP policies,
response time is equal in distribution to
the independent sum of the waiting and residence times \citep{soap_scully}:
\begin{align*}
  \response{} = \waiting{} + \residence{}.
\end{align*}

The bound in \cref{eq:approximation_ratio_simple}
follows from two main lemmas,
one bounding each of \mserpt{}'s mean waiting and residence times.
Specifically, \cref{lem:waiting} implies
\begin{align}
  \label{eq:waiting_simple}
  \E{\waiting{}_\mserpt} \leq 2\E{\waiting{}_\gittins},
\end{align}
and \cref{lem:residence} implies
\begin{align}
  \label{eq:residence_simple}
  \E{\residence{}_\mserpt}
  \leq \E{\waiting{}_\mserpt} + \E{\response{}_\gittins}.
\end{align}
The proofs of \cref{eq:waiting_simple, eq:residence_simple}
constitute the main technical contribution of our work,
as their combination immediately yields \cref{eq:approximation_ratio_simple}:
\begin{align*}
  \E{\response{}_\mserpt}
  &= \E{\waiting{}_\mserpt} + \E{\residence{}_\mserpt} \\
  &\leq 2\E{\waiting{}_\mserpt} + \E{\response{}_\gittins} \\
  &\leq 4\E{\waiting{}_\gittins} + \E{\response{}_\gittins} \\
  &\leq 5\E{\response{}_\gittins}.
\end{align*}

\subsection{Why SOAP Is Not Enough}
\label{sub:why_not_soap}

How might we prove \cref{eq:waiting_simple, eq:residence_simple}?
One might think of using
the SOAP response time analysis of \citet{soap_scully}.
Their main result \citep[Theorem~5.5]{soap_scully}
takes a rank function $\rank{}$
and yields closed-form expressions
for $\E{\waiting{}}$ and $\E{\residence{}}$.
By ``closed-form'' expressions,
we mean functions of
the job size distribution's tail function~$\tail{}$ and the load~$\rho$
that can be written with just arithmetic and integrals.
However, the dependence on~$\rank{}$ is much more complicated.
This is a major obstacle for \mserpt{} and Gittins
because their rank functions depend on the job size distribution.
This makes it intractable to directly apply the SOAP analysis
to comparing \mserpt{} with Gittins over all job size distributions.

Much of the complexity of the SOAP analysis of \citet{soap_scully}
comes from being general enough to handle multiclass systems,
namely those in which different jobs follow different rank functions.
We only consider single-class systems in this paper.
Our approach is therefore
to simplify the SOAP analysis to our single-class setting
(\cref{sub:hills_valleys_outline}).
This results in much simpler expressions
for $\E{\waiting{}}$ and $\E{\residence{}}$,
partly because we are willing to settle for bounds.
The resulting simple expressions make it possible
to compare \mserpt{} to Gittins over all job size distributions
(\cref{sub:waiting_outline, sub:residence_outline}).

\subsection{Hills and Valleys}
\label{sub:hills_valleys_outline}

\begin{figure}
  \centering
  \begin{tikzpicture}[figure]
  \intervalsnake{hill}{(0, 2)}{(2.5, 4)}{hill}
  \intervalsnake{valley}{(2.5, 4)}{(6.5, 4)}{valley}
  \intervalsnakesecond{hill}{(6.5, 4)}{(8, 5)}{hill}
  \intervalsnake{valley}{(8, 5)}{(10.5, 5)}{valley}
  \intervalsnake{hill}{(10.5, 5)}{(12, 7)}{hill}
  \intervalsnake{valley}{(12, 7)}{(14.5, 7)}{valley}
  \intervalspecial{hill}{(14.5, 7)}{(15.5, 10)}{(16.5, 10.5)}{hill}

  \axes{16.5}{10}{age~$a$}{rank~$\rank{a}$}
  \xguide[$\y(x)$]{2.5}{0}
  \xguide[$x$]{4.25}{0}
  \xguide[$\z(x)$]{6.5}{0}

  \draw[primary]
  \snake{(0, 2)}{(2.5, 4)}
  -- \snake{(2.5, 4)}{(3.5, 1.5)}
  -- \snake{(3.5, 1.5)}{(4.25, 3.5)}
  -- \snake{(4.25, 3.5)}{(5, 3)}
  -- \snake{(5, 3)}{(8, 5)}
  -- \snake{(10.5, 5)}{(12, 7)}
  -- \snake{(12, 7)}{(13.5, 4)}
  -- \snake{(13.5, 4)}{(15.5, 10)}
  -- \snakefirst{(15.5, 10)}{(17.5, 11)};
\end{tikzpicture}

  \vspace{-\captionsqueeze}
  \caption{Hills and Valleys}
  \label{fig:hills_valleys}
\end{figure}
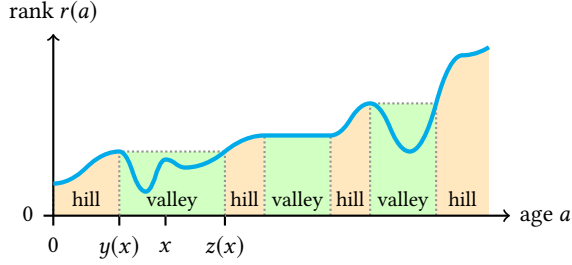

Suppose we are using a SOAP policy with rank function~$\rank{}$.
We use~$\rankup{}$, the increasing envelope of~$\rank{}$
(\cref{def:increasing_envelope}),
to classify ages into two types:
\begin{itemize}
\item
  \emph{hill ages},
  those at which $\rankup{}$ is strictly increasing; and
\item
  \emph{valley ages},
  those at which $\rankup{}$ is constant.
\end{itemize}
We call an interval of hill ages or valley ages
a \emph{hill} or \emph{valley}, respectively.
\Cref{fig:hills_valleys}, which shows an example of hills and valleys,
clarifies two points:
\begin{itemize}
\item
  Hill ages are those at which~$\rankup{}$,
  not just~$\rank{}$, is strictly increasing.
  For $a$ to be a hill age,
  not only must $\rank{}$ be increasing at age~$a$,
  but $\rank{}$ must not attain a greater rank at any earlier age.
\item
  Valley ages are those at which~$\rankup{}$,
  not~$\rank{}$, is constant.
  In general,
  $\rank{}$ might increase, decrease, or be constant at valley ages.
\end{itemize}
Given a size~$x$, we define two ages:
\begin{itemize}
\item
  the \emph{previous hill age}~$\y(x)$ is the greatest hill age $\leq x$, and
\item
  the \emph{next hill age}~$\z(x)$ is the least hill age $\geq x$.
\end{itemize}
If $x$ is a hill age, then $\y(x) = x = \z(x)$,
and if $x$ is a valley age, then $\y(x) < x < \z(x)$,
as illustrated in \cref{fig:hills_valleys}.\footnote{%
  We address some corner cases in the definitions
  of hills, valleys, $\y$, and~$\z$ in \cref{sub:hills_valleys_defs}.}

For any SOAP policy,
we can bound $\E{\waiting{}}$ and $\E{\residence{}}$
in terms of $\y$ and~$\z$.
\Cref{prop:waiting_hills_valleys} implies
\begin{align}
  \label{eq:waiting_hills_valleys}
  \E{\waiting{}}
  \geq \int_0^\infty
      \frac{\excess{\z(x)}}{\coload{\y(x)} \cdot \coload{\z(x)}}
    \density{x} \d{x},
\end{align}
and \cref{prop:residence_hills_valleys} implies
\begin{align}
  \label{eq:residence_hills_valleys}
  \E{\residence{}}
  \leq \int_0^\infty \frac{x}{\coload{\y(x)}} \density{x} \d{x},
\end{align}
with both bounds becoming equalities for \mserpt{}.
Here $\coload{}$ and~$\excess{}$ (\cref{def:load, def:excess})
are functions that do not depend on the scheduling policy.

Hills and valleys are important for two reasons.
First, the expressions in
\cref{eq:waiting_hills_valleys, eq:residence_hills_valleys}
depend on the scheduling policy only via $\y(x)$ and~$\z(x)$,
the previous and next hill ages of each size~$x$.
This means relating the mean response times of \mserpt{} and Gittins
partly reduces to relating the hills and valleys of \mserpt{} and Gittins.
Second, as we will soon see,
hills and valleys turn out to be important tools
for organizing the computations in the proofs of our two main bounds,
\cref{eq:waiting_simple, eq:residence_simple}.

\subsection{Outline of Waiting Time Bound}
\label{sub:waiting_outline}

We now outline the proof of \cref{eq:waiting_simple}, namely
$\E{\waiting[\mserpt]{}} \leq 2\E{\waiting[\gittins]{}}$.
By~\cref{eq:waiting_hills_valleys},
\begin{align}
  \label{eq:waiting_integral}
  \frac{\E{\waiting[\mserpt]{}}}{\E{\waiting[\gittins]{}}}
  \leq \cfrac{
      \displaystyle
      \int_0^\infty \frac{
          \excess{\z_\mserpt(x)}
        }{
          \coload{\y_\mserpt(x)} \cdot \coload{\z_\mserpt(x)}
        }
      \density{x} \d{x}
    }{
      \displaystyle
      \int_0^\infty \frac{
          \excess{\z_\gittins(x)}
        }{
          \coload{\y_\gittins(x)} \cdot \coload{\z_\gittins(x)}
        }
      \density{x} \d{x}
    }.
\end{align}
Our strategy for proving \cref{eq:waiting_simple}
is to split the integration regions in \cref{eq:waiting_integral} into chunks
and prove the bound for each chunk $[u, v]$:
\begin{align}
  \label{eq:waiting_chunk}
  \cfrac{
    \displaystyle
    \int_u^v \frac{
        \excess{\z_\mserpt(x)}
      }{
        \coload{\y_\mserpt(x)} \cdot \coload{\z_\mserpt(x)}
      }
    \density{x} \d{x}
  }{
    \displaystyle
    \int_u^v \frac{
        \excess{\z_\gittins(x)}
      }{
        \coload{\y_\gittins(x)} \cdot \coload{\z_\gittins(x)}
      }
    \density{x} \d{x}
  }
  \leq 2.
\end{align}
The key to this approach is to choose the right chunks.
It turns out that a good choice is for each Gittins hill and valley
to be a chunk.

As mentioned at the end of \cref{sub:hills_valleys_outline},
a key to comparing \mserpt{} to Gittins is comparing their hills and valleys.
We show in \cref{lem:hill_subset}
that every Gittins hill age is also an \mserpt{} hill age,
but not necessarily vice versa.
This implies that for any size~$x$,
\begin{align}
  \label{eq:age_ordering_simple}
  \y_\gittins(x) \leq \y_\mserpt(x)
  \leq x
  \leq \z_\mserpt(x) \leq \z_\gittins(x).
\end{align}

Proving \cref{eq:waiting_chunk} when chunk $[u, v]$ is a Gittins hill case
is simple.
When $x$ is a Gittins hill age,
\cref{eq:age_ordering_simple} collapses to an equality,
so the left-hand side of \cref{eq:waiting_chunk} is~$1$.

Proving \cref{eq:waiting_chunk} when chunk $[u, v]$ is a Gittins valley
is much more complicated.
As illustrated in \cref{fig:hills_valleys},
for all $x \in [u, v]$, we have $\y_\gittins(x) = u$ and $\z_\gittins(x) = v$,
simplifying the denominator in \cref{eq:waiting_chunk}.
Since $\excess{}$ is increasing (\cref{tab:monotonicity}),
it suffices to show $q(u, v) \leq 2$, where
\begin{align}
  \label{eq:waiting_ratio_chunk}
  q(a, b)
  = \int_a^b
      \frac{
        \coload{u} \cdot \coload{v}
      }{
        \coload{\y_\mserpt(x)} \cdot \coload{\z_\mserpt(x)}
      }
      \cdot \frac{\density{x}}{\tail{u} - \tail{v}}
    \d{x}.
\end{align}
We bound $q(u, v)$ by splitting it into
$q(u, v) = q(u, x_*) + q(x_*, v)$ for some~$x_*$.
Because the $\coload{}$~ratio in the integrand
is increasing in~$x$ (\cref{tab:monotonicity}),
the idea is to carefully choose $x_*$ such that
\begin{itemize}
\item
  $q(u, x_*) \leq 1$
  because the $\coload{}$~ratio is not too large for $x \in [u, x_*]$, and
\item
  $q(x_*, v) \leq 1$ because, roughly speaking,
  $f(x)$ is not too large for $x \in [x_*, v]$.
\end{itemize}
It turns out there is a natural choice for~$x_*$,
and the above strategy works when $x_*$ is an \mserpt{} hill age.
When $x_*$ is an \mserpt{} valley age,
we have to split $q(u, v)$ into three pieces instead,
with the third piece handling the valley containing~$x_*$,
but the upper bounds on the three pieces still add up to at most~$2$.

The proof of \cref{lem:waiting} in \cref{pf:waiting}
closely follows the strategy outlined in this section.
The main difference between \cref{eq:waiting_simple, lem:waiting}
is that the latter's bound is smaller at lower load.

\subsection{Outline of Residence Time Bound}
\label{sub:residence_outline}

We now outline the proof of \cref{eq:residence_simple}, namely
$\E{\residence[\mserpt]{}}
\leq \E{\waiting[\mserpt]{}} + \E{\response[\gittins]{}}$.
The first and more important step is \cref{lem:residence}, which says
\begin{align}
  \label{eq:residence_simple-ish}
  \E{\residence[\mserpt]{}}
  \leq \E{\waiting[\mserpt]{}}
    + \gp[\bigg]{\frac{1}{\rho}\log\frac{1}{1 - \rho}}\E{X}.
\end{align}
The second step uses a result of \citet[Theorem~5.8]{smart_insensitive_wierman}
to upper bound the last term in \cref{eq:residence_simple-ish}
by $\E{\response[\gittins]{}}$,
which yields \cref{eq:residence_simple}.

Our strategy for proving \cref{eq:residence_simple-ish} is, roughly speaking,
to integrate \cref{eq:waiting_hills_valleys, eq:residence_hills_valleys}
by parts:
\begin{align*}
  \E{\waiting[\mserpt]{}}
  &= \int_0^\infty
      \tail{x}
      \cdot\!
      \mathquote{
        \dd{x}
        \frac{
          \excess{\z_\mserpt(x)}
        }{
          \coload{\y_\mserpt(x)} \cdot \coload{\z_\mserpt(x)}
        }
      }\!
    \d{x} \\
  \E{\residence[\mserpt]{}}
  &= \int_0^\infty
      \tail{x}
      \cdot\!
      \mathquote{\dd{x} \frac{x}{\coload{\y_\mserpt(x)}}}
    \d{x}.
\end{align*}
These integral expressions are not rigorous
and are presented for intuition only.
Specifically, $\y_\mserpt$ and $\z_\mserpt$ have discontinuities,
so the derivatives are not well defined everywhere,
thus the quotation marks.
Again we split the integrals into chunks,
this time based on \mserpt{} hills and valleys,
and prove the bound for each chunk.

Proving the bound for \mserpt{} hills is simple
because $\y_\mserpt(x) = x = \z_\mserpt(x)$ when $x$ is an \mserpt{} hill age.
This means the derivatives are well defined,
and they even have a term in common, making them easy to compare.
In fact, we do not need any special properties of \mserpt{}
for this part of the argument.

Proving the bound for \mserpt{} valleys is more complicated.
Discontinuities of $\y_\mserpt$ and $\z_\mserpt$
occur at the boundaries of valleys.
Handling this requires some care,
but we nevertheless obtain simple expressions
for the waiting time and residence time chunks.
The main difficulty is that the expressions are difficult to compare.
It is this comparison that requires special properties of \mserpt{}.

The proof of \cref{lem:residence} in \cref{pf:residence}
closely follows the strategy outlined in this section.
However, when we put everything together to prove our main result,
\cref{thm:approximation_ratio},
it turns out jumping
from \cref{eq:residence_simple-ish} to \cref{eq:residence_simple}
is only a good idea at very high loads,
whereas using \cref{eq:residence_simple-ish} directly
yields a better bound at most loads.

\section{Hills and Valleys}
\label{sec:hills_valleys}

Hills and valleys are new concepts that play several important roles
in our bound of \mserpt{}'s approximation ratio
(\cref{sub:hills_valleys_outline, sub:waiting_outline, sub:residence_outline}).
The purpose of this section is to formally state
definitions and results relating to hills and valleys.
Throughout this section we work with
a generic SOAP policy with rank function~$r$.

\subsection{Defining Hills and Valleys}
\label{sub:hills_valleys_defs}

\begin{definition}
  \label{def:hill_age_valley_age}
  \leavevmode
  \begin{itemize}[topsep=0pt]
  \item
    A \emph{valley age} is an age $a > 0$ at which
    the increasing envelope~$\rankup{}$ of the rank function~$\rank{}$
    is locally constant,
    meaning there exists some $\epsilon > 0$ such that
    $\rankup{b} = \rankup{a}$ for all $b \in (a - \epsilon, a + \epsilon)$.
  \item
    A \emph{hill age} is an age that is not a valley age.
  \end{itemize}
\end{definition}

\begin{definition}
  \label{def:pred_succ}
  \leavevmode
  \begin{itemize}[topsep=0pt]
  \item
    The \emph{previous hill age} of size~$x$ is
    the latest hill age before~$x$:
    \begin{align*}
      \y(x) = \sup\{a < x \mid a \text{ is a hill age}\}.
    \end{align*}
  \item
    The \emph{next hill age} of size~$x$ is
    the earliest hill age after~$x$:\footnote{%
      There is a corner case for $x = 0$:
      we define $\y(0) = 0$ and $\z(0) = \z(0+)$,
      where postfix $+$ denotes a right limit.}
    \begin{align*}
      \z(x) = \inf\{a \geq x \mid a \text{ is a hill age}\}.
    \end{align*}
  \end{itemize}
\end{definition}

The difference in inequality strictness between $\y$ and $\z$
comes from how $\y$ and $\z$ are used to bound
mean waiting and residence times (\cref{app:soap_hills_valleys}).
For the most part,
$\y(x) = x = \z(x)$ for any hill age~$x$,
but there is an exception when $x$ is
preceded by an interval $(x - \epsilon, x)$ of valley ages.
This distinction is occasionally important,
so we extend our terminology to capture it.

\begin{definition}
  \label{def:hill_age_valley_age}
  \leavevmode
  \begin{itemize}[topsep=0pt]
  \item
    A \emph{hill size} is a size~$x$ such that $\y(x) = x = \z(x)$.
  \item
    A \emph{valley size} is a size that is not a hill size.
  \end{itemize}
\end{definition}

\begin{definition}
  \label{def:hills_valleys}
  \leavevmode
  \begin{itemize}[topsep=0pt]
  \item
    A \emph{hill} is an interval of hill sizes.
  \item
    A \emph{valley} is an interval $(u, v]$ of valley sizes
    where $u$ and~$v$ are hill ages.
  \end{itemize}
\end{definition}

\Cref{def:pred_succ, def:hills_valleys}
are illustrated in \cref{fig:hills_valleys}.
The distinction between hill ages and hill sizes
is important only for the upper boundaries of valleys,
which are hill ages but not hill sizes.\footnote{%
  There is another corner case for~$0$:
  it is always a hill age, but it is not a hill size if $\z(0+) > 0$.}

\subsection{Response Time Bounds}
\label{sub:hills_valleys_response_time}

We now use hills and valleys to write down simple bounds
on $\E{\waiting{x}}$ and $\E{\residence{x}}$,
the expected waiting and residence times (\cref{sub:waiting_residence}),
respectively,
of a job of size~$x$.

\begin{definition}
  \label{def:load}
  The \emph{$a$-truncated load complement} is
  one minus what the load of the system would be
  if every job's size were truncated at age~$a$:
  \begin{align*}
    \coload{a}
    = 1 - \lambda\E{\min\{X, a\}}
    = 1 - \int_0^a \lambda \tail{t} \d{t}.
  \end{align*}
\end{definition}

\begin{definition}
  \label{def:excess}
  The \emph{$a$-truncated second moment factor} is
  \begin{align*}
    \excess{a}
    = \frac{\lambda}{2}\E{(\min\{X, a\})^2}
    = \int_0^x \lambda t \tail{t} \d{t}.
  \end{align*}
\end{definition}

\begin{proposition}
  \label{prop:waiting_hills_valleys}
  Under any SOAP policy, the expected waiting time of a job of size~$x$
  is bounded by
  \begin{align*}
    \E{\waiting{x}}
    \geq \frac{\excess{\z(x)}}{\coload{\y(x)} \cdot \coload{\z(x)}},
  \end{align*}
  with equality if the policy has a monotonic rank function.
\end{proposition}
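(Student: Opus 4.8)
The plan is to obtain $\E{\waiting{x}}$ from the SOAP response-time analysis \citep[Theorem~5.5]{soap_scully}, specialized from its general multiclass form to our single-class setting, and then recognize it as a nested busy period whose parameters are exactly the truncated quantities of \cref{def:load,def:excess}. The conceptual bridge is that hills and valleys encode the running maximum of the rank function: since $\rankup{}$ is the increasing envelope of~$\rank{}$ (\cref{def:increasing_envelope}), the worst rank a tagged job of size~$x$ ever reaches is $\rankup{x}$, attained at age~$\y(x)$; the ages at which the running-maximum rank is at most~$\rankup{x}$ are exactly those less than~$\z(x)$; and the ages at which it is strictly below~$\rankup{x}$ are exactly those less than~$\y(x)$. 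These two thresholds are what the SOAP analysis attaches, respectively, to the old jobs that can delay the tagged job and to the rate at which that delay resolves.

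I would first establish equality for a monotonic rank function. Fix a tagged size-$x$ job; its waiting time is the busy period generated by the ``old'' work present on arrival of rank at most~$\rankup{x}$, drained by a server that is simultaneously stolen by ``new'' work of rank \emph{strictly} below~$\rankup{x}$ (a new arrival that merely ties~$\rankup{x}$ loses the FCFS tiebreak to the earlier-arriving tagged job). Under a monotonic~$\rank{}$, a job of size~$y$ is old-relevant at all ages less than~$\z(x)$ and new-relevant at all ages less than~$\y(x)$, so it contributes $\min\{y, \z(x)\}$ of old work and $\min\{y, \y(x)\}$ of new work. By PASTA and the standard mean-work-in-system computation --- now with sizes truncated at~$\z(x)$ --- the mean old-relevant work on arrival is $\excess{\z(x)}/\coload{\z(x)}$; inflating by the new-work busy-period factor $1/\coload{\y(x)}$ gives $\E{\waiting{x}} = \excess{\z(x)}/(\coload{\y(x)}\,\coload{\z(x)})$. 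This is already the general monotonic statement, since on a flat stretch of~$\rank{}$ the ages $\y(x)$ and~$\z(x)$ remain nontrivial; for FB it collapses to $\excess{x}/\coload{x}^{2}$.

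For an arbitrary SOAP policy I would compare $\rank{}$ with its envelope~$\rankup{}$, which is monotonic and, crucially, has the same hill ages --- hence the same $\y$ and~$\z$ --- as~$\rank{}$. At every age less than~$\y(x)$ we have $\rankup{} < \rankup{x}$, and on $(\y(x), \z(x))$ we have $\rankup{} = \rankup{x}$, so under the monotonic envelope the relevant sets of a size-$y$ job are exactly the intervals below~$\z(x)$ (old) and below~$\y(x)$ (new), as above. Under the original~$\rank{}$ these sets can only grow: wherever $\rank{}$ dips below~$\rankup{x}$ at a valley age, or dips back below~$\rankup{x}$ at an age past~$\z(x)$, an additional job becomes relevant at an additional age and preempts the tagged job there. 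Hence every relevant-work quantity under~$\rank{}$ is at least its value under~$\rankup{}$, and since the waiting time is monotone in these quantities (equivalently, using that $\coload{}$ is decreasing, $\excess{}$ increasing, and $\y(x) \leq x \leq \z(x)$), we conclude $\E{\waiting{x}} \geq \excess{\z(x)}/(\coload{\y(x)}\,\coload{\z(x)})$, with equality precisely when no such dips occur --- that is, when $\rank{}$ is monotonic.

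I expect the main obstacle to be making this last step rigorous. The SOAP expression for $\E{\waiting{x}}$ under a general rank function is not literally a single busy period but a sum of sub-busy-period contributions indexed by the tagged job's ``recycling'' intervals, so the clean nested-busy-period picture above is really a template that must be reconciled with that sum; the cleanest route is probably to push the SOAP computation through term by term and bound each contribution below by the corresponding piece of the monotonic formula, rather than to construct an explicit coupling. Along the way one must also dispatch the corner cases flagged after \cref{def:pred_succ,def:hills_valleys} --- the age $x = 0$, the upper boundary of a valley (a hill age but not a hill size), and ties among the old jobs --- but none of these affects the direction of the inequality.
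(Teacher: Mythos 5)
Your proposal takes essentially the same route as the paper's proof in \cref{app:soap_hills_valleys}: both invoke the SOAP response-time formula of \citet[Theorem~5.5]{soap_scully}, establish equality for monotonic rank functions (where the worst future rank is constant and no recycling occurs, so the higher-index $X^{\text{old}}_i$ terms vanish), and obtain the inequality for general rank functions from the extra nonnegative SOAP terms. The paper's proof is simply a more terse notational translation pinpointing exactly which two inequalities in the SOAP formula become equalities for increasing rank functions, whereas your busy-period narrative and your closing observation that the rigorous argument must push through the SOAP sum term by term describe the same computation from the intuitive side.
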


\begin{proposition}
  \label{prop:residence_hills_valleys}
  Under any SOAP policy, the expected residence time of a job of size~$x$
  is bounded by
  \begin{align*}
    \E{\residence{x}}
    \leq \frac{x}{\coload{\y(x)}},
  \end{align*}
  with equality if the policy has a monotonic rank function.
\end{proposition}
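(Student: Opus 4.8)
The plan is to identify $\residence{x}$ with the length of a sub-busy period started by the tagged job, to bound the work contained in that busy period one arrival at a time using the hill/valley structure of~$\rank{}$, and then to apply the standard M/G/1 busy-period mean. A job of size~$x$ arriving to an empty system keeps the server continuously busy until it departs, so $\residence{x}$ is exactly the time the server needs to process the tagged job together with all work that the policy ever prioritizes over it. The crux is the following claim: during the tagged job's residence, every other job — each of which arrives after the tagged job — receives \emph{at most} $\min\{y, \y(x)\}$ units of service, where~$y$ is that job's size, with \emph{equality} whenever $\rank{}$ is monotonic.

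To prove the claim I would first bound the tagged job's rank. Throughout its residence the tagged job's age lies in $[0, x]$; since $\rankup{}$ is nondecreasing and, by the definition of a valley, constant across the valley containing~$x$, while $\rank{}$ coincides with $\rankup{}$ at hill ages (in particular at~$\y(x)$), the tagged job's rank never exceeds $\rank{\y(x)} = \rankup{\y(x)}$. Because the tagged job arrived first, it wins every FCFS tie, so a competing job is served only while its rank is \emph{strictly} below the tagged job's current rank, hence strictly below $\rank{\y(x)}$. A competing job's age advances only while it is being served, so it can move only through $\{b : \rank{b} < \rank{\y(x)}\}$ and becomes permanently stuck once its age reaches $\inf\{b : \rank{b} \ge \rank{\y(x)}\} \le \y(x)$; hence it accrues at most $\min\{y, \y(x)\}$ service. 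For the equality part, when $\rank{}$ is monotonic a leapfrogging argument shows that a competing job whose rank sits below the tagged job's regains priority each time the tagged job advances, and is therefore pushed all the way up to age $\min\{y, \y(x)\}$ — completing if $y \le \y(x)$, and parking at age $\y(x)$ otherwise — before the tagged job finishes.

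Given the claim, I would finish by coupling with an auxiliary M/G/1 queue fed the same Poisson arrivals but with service requirements $\min\{X, \y(x)\}$, started with~$x$ units of initial work; let $\tilde R$ be the length of its first busy period. At every time $t < \residence{x}$ the real server has done exactly~$t$ units of work, all of which is chargeable to the auxiliary workload — the tagged job's share is below~$x$ and each arrival's share is below $\min\{\cdot, \y(x)\}$ by the claim — so that workload is still positive; hence $\residence{x} \le \tilde R$ pointwise, with equality when $\rank{}$ is monotonic. The standard M/G/1 busy-period formula gives $\E{\tilde R} = x/(1 - \lambda\E{\min\{X, \y(x)\}}) = x/\coload{\y(x)}$, which is finite because $\coload{\y(x)} \ge 1 - \rho > 0$ (\cref{def:load}); taking expectations proves \cref{prop:residence_hills_valleys}.

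The main obstacle is the per-job bound of the second paragraph. Capping the tagged job's rank by $\rank{\y(x)}$ relies on the corner-case facts about hills and valleys — that $\rankup{}$ is flat on a valley, and that $\rank{} = \rankup{}$ at every hill age, including the exceptional case where $\y(x) < x$ for a hill age~$x$ — and the equality direction requires making the leapfrogging argument fully rigorous, in particular verifying that a competing job of size larger than $\y(x)$ is still in the system when the tagged job's age crosses $\y(x)$.
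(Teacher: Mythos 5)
Your proof is correct and reaches the right conclusion, but it takes a genuinely different route from the paper. The paper does not re-derive this bound at all: \cref{prop:residence_hills_valleys} is proved in \cref{app:soap_hills_valleys} by citing \citet[Theorem~5.5]{soap_scully} and translating notation, observing that $\coload{\y(x)} = 1 - \rho^{\text{new}}[r^{\text{worst}}_x(0)] \geq 1 - \rho^{\text{new}}[r^{\text{worst}}_x(a)]$, with equality when the rank function is increasing because then $r^{\text{worst}}_x(a) = r^{\text{worst}}_x(0)$ for all $a$. You instead give a self-contained tagged-job argument: cap the tagged job's rank by $\rank{\y(x)}$ using the hill/valley structure, deduce that each later arrival accrues at most $\min\{y, \y(x)\}$ service during the residence, and couple with an auxiliary M/G/1 whose job sizes are truncated at $\y(x)$ so that the standard busy-period mean $x/\coload{\y(x)}$ applies. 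This is, in effect, a re-derivation of the single-class special case of the SOAP residence formula, and it is sound: $\y(x)$ is indeed a hill age (the set of valley ages is open), so $\rank{\y(x)} = \rankup{\y(x)} \geq \rankup{a_T}$ throughout the tagged job's life, and the work-conservation coupling $t = W_T(t) + \sum_i W_i(t) < x + \sum_i \min\{Y_i, \y(x)\}$ gives $\residence{x} \leq \tilde R$ pathwise. What your approach buys is transparency and independence from the multiclass SOAP machinery; what the paper's approach buys is brevity and reuse of a result already proved in full generality. The one place you should be careful, as you yourself flag, is the equality direction for monotonic rank functions: making the leapfrogging argument rigorous requires resolving the FCFS tie-break into the limiting processor-sharing behavior at equal ranks, which is precisely the content that \citet{soap_scully} handle via their $r^{\text{worst}}$ and relevant-job bookkeeping. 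You would also want to confirm, as the paper does, that the degenerate case of a \emph{decreasing} monotonic rank function reduces to FCFS and hence to a constant rank function before invoking the leapfrogging picture.
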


\proofin[prop:waiting_hills_valleys,
  prop:residence_hills_valleys]{app:soap_hills_valleys}

\mserpt{} has a monotonic rank function,
so both \cref{prop:waiting_hills_valleys, prop:residence_hills_valleys}
yield useful equalities for \mserpt{}.
However,
to prove an upper bound on \mserpt{}'s approximation ratio,
we want lower bounds for Gittins,
for which only \cref{prop:waiting_hills_valleys} is useful.
Instead of using \cref{prop:residence_hills_valleys} for Gittins,
we use the following lower bounds.

\begin{proposition}
  \label{prop:residence_size}
  Under any SOAP policy, the mean residence time is bounded by
  $\E{\residence{}} \geq \E{X}$.
\end{proposition}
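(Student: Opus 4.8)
The statement claims that under any SOAP policy, $\E{\residence{}} \geq \E{X}$. The plan is to argue this pathwise, or more precisely on a per-size basis, by showing $\E{\residence{x}} \geq x$ for every size~$x$ and then integrating against the job size density. The intuition is simple: residence time is the response time of a job arriving to an empty system, and such a job cannot complete before receiving a full $x$ units of service, so its residence time is at least~$x$ no matter how the server divides its attention among that job and later arrivals. The subtlety is that later arrivals can delay our tagged job, which only makes $\residence{x}$ larger, so the bound should be robust.

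First I would recall the characterization of residence time for a SOAP policy: $\residence{x}$ is the time for a size-$x$ job to complete in a system that starts empty at the job's arrival, where subsequent Poisson arrivals compete according to the rank function. The key observation is that the tagged job occupies the server only when it has minimum rank, and in any case the total service it must accumulate to depart is exactly~$x$. Hence the elapsed time from arrival to departure is at least the amount of service the tagged job itself receives, which is~$x$; formally, $\residence{x} \geq x$ deterministically along every sample path. Taking expectations gives $\E{\residence{x}} \geq x$.

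The second step is to integrate: $\E{\residence{}} = \int_0^\infty \E{\residence{x}} \density{x} \d{x} \geq \int_0^\infty x \density{x} \d{x} = \E{X}$, where the first equality uses the definition of residence time as the response time of a job arriving to an empty system, conditioned on its size. This completes the argument.

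I expect the main (minor) obstacle to be stating the pathwise inequality $\residence{x} \geq x$ cleanly within the SOAP framework—making precise that "residence time" refers to the busy period started by a single size-$x$ arrival to an empty system, and that on every sample path the tagged job's sojourn time is bounded below by the work it must itself receive. This is essentially immediate once the setup is unpacked, so the proof is short; the only care needed is to confirm that the SOAP analysis of \citet{soap_scully} indeed defines $\residence{x}$ this way, so that the conditional decomposition $\E{\residence{}} = \E{\E{\residence{X} \given X}}$ is valid.
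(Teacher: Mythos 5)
Your proposal is correct and matches the paper's proof, which simply observes that a job's residence time is by definition at least its size and concludes $\E{\residence{}} \geq \E{X}$. You spell out the pathwise argument and the integration over sizes in more detail, but the underlying idea is identical.
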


\begin{proof}
  A job's residence time is, by definition (\cref{sub:waiting_residence}),
  at least its size.
\end{proof}

\begin{proposition}
  \label{prop:response_srpt}
  Under any scheduling policy, the mean response time is bounded by
  \begin{align*}
    \E{\response{}} \geq \gp[\bigg]{\frac{1}{\rho}\log\frac{1}{1 - \rho}}\E{X}.
  \end{align*}
\end{proposition}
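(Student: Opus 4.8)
The plan is to deduce the bound from the global optimality of SRPT together with a known lower bound on SRPT's mean response time. First I would invoke \citet{srpt_optimal_schrage}: SRPT minimizes mean response time over \emph{all} scheduling policies for the M/G/1, including clairvoyant ones that observe exact job sizes. Every policy~$\pi$ in our setting knows at most the job size distribution, so it is dominated by SRPT and $\E{T_\pi} \geq \E{T_{\mathsf{SRPT}}}$. It thus suffices to prove $\E{T_{\mathsf{SRPT}}} \geq \bigl(\tfrac{1}{\rho}\log\tfrac{1}{1-\rho}\bigr)\E{X}$.

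Second, I would get this lower bound on SRPT from the theory of SMART scheduling. SRPT belongs to the SMART class, and \citet[Theorem~5.8]{smart_insensitive_wierman} shows that every SMART policy has mean response time at least $\bigl(\tfrac{1}{\rho}\log\tfrac{1}{1-\rho}\bigr)\E{X}$, an \emph{insensitive} bound depending on the job size distribution only through its mean. Chaining the two inequalities yields $\E{T_\pi} \geq \E{T_{\mathsf{SRPT}}} \geq \bigl(\tfrac{1}{\rho}\log\tfrac{1}{1-\rho}\bigr)\E{X}$ for arbitrary~$\pi$, which is the claim. (As a sanity check, $\tfrac{1}{\rho}\log\tfrac{1}{1-\rho} \geq 1$ on $(0,1)$, so the bound is never weaker than the trivial $\E{T} \geq \E{X}$.)

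The real content is the lower bound on SRPT in the second step, which I would outsource to the cited SMART result rather than rederive. The only point beyond the citation is logical: that result is proved for SMART policies, which use exact remaining sizes and so are not implementable in the unknown-size setting --- but since SRPT is \emph{simultaneously} a SMART policy and the optimum over all policies, $\E{T_{\mathsf{SRPT}}}$ is sandwiched between the insensitive lower bound and every $\E{T_\pi}$, and this is what lets the bound reach arbitrary (non-clairvoyant) policies. I would state this sandwich explicitly.

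Should a self-contained argument be preferred over the citation, one can instead start from the Schrage--Miller per-size mean response time formula for SRPT and verify, after integrating against the job size density, that the result is at least $\int_0^\infty \tail{t}/\coload{t}\,\d{t}$. Substituting $u = \coload{t}$ and using $\coload{0} = 1$, $\coload{t} \to 1-\rho$, and $\d{u} = -\lambda\tail{t}\,\d{t}$ shows this integral equals $\tfrac{1}{\lambda}\log\tfrac{1}{1-\rho} = \bigl(\tfrac{1}{\rho}\log\tfrac{1}{1-\rho}\bigr)\E{X}$. This route is more delicate because SRPT's residence-time component alone falls short of that integrand --- its denominator is the un-truncated load of small jobs rather than~$\coload{t}$ --- so the waiting component must be kept too; hence I would present the citation-based argument as the main proof.
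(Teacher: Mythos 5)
Your proposal matches the paper's proof exactly: both chain the optimality of SRPT from \citet{srpt_optimal_schrage} with the insensitive lower bound on SRPT's mean response time from \citet[Theorem~5.8]{smart_insensitive_wierman}. The extra remarks (the sandwich framing and the alternative self-contained route) are correct elaborations but not a different argument.
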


\begin{proof}
  \Citet[Theorem~5.8]{smart_insensitive_wierman} show that
  the desired lower bound holds for SRPT,
  which has lower mean response time than any other policy
  \citep{srpt_optimal_schrage}.
\end{proof}

\section{Upper Bound on \mserpt's Approximation Ratio}
\label{sec:approximation_ratio}

\begin{figure}
  \centering
  \newcommand{\dx}{0.25}
\newcommand{\dy}{1.25}
\begin{tikzpicture}[figure,
  node distance=\dy and \dx,
  quantity/.style={draw, semithick, fill=white, rectangle, rounded corners=3pt},
  style={>=stealth, rounded corners=3pt}]
  \newcommand{\nodesplit}[2]{%
    \node[draw, semithick, fill=white, circle, inner sep=0.5pt, below=of #2] (#1)
    {$\displaystyle +$};}
  \newcommand{\nodejoin}[2]{%
    \node[draw, semithick, fill=white, circle, inner sep=0.5pt, above=of #2] (#1)
    {$\displaystyle +$};}
  \node[quantity] (A) at (1.375, 0) {$\displaystyle \E{\response[\mserpt]{}}$};
  \nodesplit{ABC}{A}
  \node[quantity, below right=of ABC] (C)
    {$\displaystyle \E{\residence[\mserpt]{}}$};
  \nodesplit{CDE}{C}
  \node[quantity, below left=of CDE] (D)
    {$\displaystyle \E{\waiting[\mserpt]{}}$};
  \node[quantity, below right=of CDE] (E)
    {$\displaystyle\gp*{\frac{1}{\rho}\log\frac{1}{1 - \rho}}\E{X}$};

  \node (Z') at (-3, -35)
    {\vphantom{$\displaystyle
      \min\curlgp*{
        \max\curlgp*{
          \frac{4}{1 + \sqrt{1 - \rho}},
          \frac{1}{\rho}\log\frac{1}{1 - \rho}
        },
        1 + \frac{4}{1 + \sqrt{1 - \rho}}
      }\E{\response[\gittins]{}}$}};
  \node[quantity] (Z) at (0, -35)
    {$\displaystyle
      \min\curlgp*{
        \max\curlgp*{
          \frac{4}{1 + \sqrt{1 - \rho}},
          \frac{1}{\rho}\log\frac{1}{1 - \rho}
        },
        1 + \frac{4}{1 + \sqrt{1 - \rho}}
      }\E{\response[\gittins]{}}$};
  \nodejoin{XYZ}{Z'}
  \node[quantity, above left=of XYZ] (Y)
    {$\displaystyle
      2 \cdot \frac{2}{1 + \sqrt{1 - \rho}}\E{\waiting[\gittins]{}}
      \vphantom{\min\curlgp*{
        \gp*{\frac{1}{\rho}\log\frac{1}{1 - \rho}}\E{\residence[\gittins]{}},
        \E{\response[\gittins]{}}
      }}$};
  \node[quantity, above right=of XYZ] (X)
    {$\displaystyle
      \min\curlgp*{
        \gp*{\frac{1}{\rho}\log\frac{1}{1 - \rho}}\E{\residence[\gittins]{}},
        \E{\response[\gittins]{}}
      }$};
  \node[quantity, above=3*\dy of Y] (W)
    {$\displaystyle 2 \cdot \E{\waiting[\mserpt]{}}$};
  \nodejoin{BDW}{W}

  \node[quantity] (B) at (W |- C)
    {$\displaystyle \E{\waiting[\mserpt]{}}$};
  \node (X') at (E |- X)
    {\vphantom{$\displaystyle
      \min\curlgp*{
        \gp*{\frac{1}{\rho}\log\frac{1}{1 - \rho}}\E{\residence[\gittins]{}},
        \E{\waiting[\gittins]{}} + \E{\residence[\gittins]{}}
      }$}};

  \draw[->] (A) -- (ABC);
  \draw (ABC) -- (B);
  \draw (ABC) -- (C);
  \draw (B) -- (BDW);
  \draw[->] (C) -- (CDE);
  \draw[postaction={decorate, decoration={markings, mark=at position 1 with
      {\coordinate (D') {};}}}]
    (CDE) -- (D);
  \draw (CDE) -- (E);
  \draw (D) -- (BDW);
  \draw[->] (BDW) -- (W);
  \draw[->] (E) -- (X');
  \draw[->] (W) -- (Y);
  \draw (X) -- (XYZ);
  \draw (Y) -- (XYZ);
  \draw[->] (XYZ) -- (Z');

  \node[right=3 * \dx] (CDElabel) at (CDE) {\cref{lem:residence}};
  \node[left] (EXlabel) at ($ (E.south)!0.5!(X'.north) $)
    {\cref{prop:residence_size, prop:response_srpt}};
  \node[right] (WYlabel) at ($ (W.south)!0.5!(Y.north) $)
    {\cref{lem:waiting}};

  \begin{pgfonlayer}{background}
    \fill[black!12] ($ (D'.north) + (-4.5pt, 0) $)
      rectangle (CDElabel.east |- C.south);
    \fill[black!12] ($ (E.south) + (4.5pt, 0) $)
      rectangle (EXlabel.west |- X.north);
    \fill[black!12] ($ (W.south) + (-4.5pt, 0) $)
      rectangle (WYlabel.east |- Y.north);
  \end{pgfonlayer}

  \iftwocol{%
    \newcommand{\shift}{(0, -50)}}{%
    \newcommand{\shift}{(20.625, -14.8)}}
  \begin{scope}[figure, shift={\shift}]
    \iftwocol{%
      \node[quantity] (B3) at (6, 0) {$B$};
      \coordinate (correction) at (0, 0);
    }{%
      \node[quantity] (B3) at (-2.1875, -12.5) {$B$};
      \coordinate (correction) at (0, 12.5);
    }
    \nodejoin{AB3}{B3}
    \node[quantity, above left=of AB3] (A31) {$A_1$};
    \node[quantity, above right=of AB3] (A32) {$A_2$};
    \draw (A31) -- (AB3);
    \draw (A32) -- (AB3);
    \draw[->] (AB3) -- (B3);
    \node[anchor=north] (label3) at ($ (B3) + (0, -1.2) $) {$A_1 + A_2 \leq B$};

    \coordinate (guide2) at (0, 0);
    \node[quantity] (A2) at ($ (guide2 |- A31) + (correction) $) {$A$};
    \node[quantity, semithick, fill=white, circle, inner sep=0.5pt] (AB2)
      at ($ (A2 |- AB3) + (correction) $) {$\displaystyle +$};
    \node[quantity, below left=of AB2] (B21) {$B_1$};
    \node[quantity, below right=of AB2] (B22) {$B_2$};
    \draw[->] (A2) -- (AB2);
    \draw (AB2) -- (B21);
    \draw (AB2) -- (B22);
    \node[anchor=north] at ($ (guide2) + (0, -1.2) $) {$A \leq B_1 + B_2$};

    \node[quantity] (B1) at (-6 + \dx, 0) {$B$};
    \node[quantity] (A1) at ($ (B1 |- A31) + (correction) $) {$A$};
    \draw[->] (A1) -- (B1);
    \node[anchor=north] at ($ (B1) + (0, -1.2) $) {$A \leq B$};

    \iftwocol{%
      \node[anchor=south] (legend) at ($ (A2) + (0, 1.8) $) {\textsc{Legend}};
      \draw[guide] (legend) ++(-5.4, 1.5) -- ++(10.8, 0);
    }{%
      \node[anchor=south] (legend) at ($ (B22 |- A2)!0.5!(A1) + (0, 1.8) $)
        {\textsc{Legend}};
      \coordinate (divider) at ($ (A1 |- legend) + (-2, 2) $);
      \draw[guide] (divider) -- ($ (divider |- label3) + (0, -2) $);}
  \end{scope}
\end{tikzpicture}

  \vspace{-\captionsqueeze}
  \caption{Bounding Mean Response Time of \mserpt{}}
  \label{fig:outline}
\end{figure}

In this section we prove our main result,
which is an upper bound on
the mean response time ratio between \mserpt{} and Gittins.

\begin{restatable}{theorem}{oldthmapproximationratio}
  \label{thm:approximation_ratio}
  The mean response time ratio between \mserpt{} and Gittins
  is bounded by\kern 0.083333333em\footnote{%
    The numbers $0.9587$ and $0.9898$ are approximations
    accurate to $4$ decimal places.}
  \begin{align*}
    \frac{\E{\response[\mserpt]{}}}{\E{\response[\gittins]{}}} \leq
    \begin{dcases}
      \frac{4}{1 + \sqrt{1 - \rho}} & 0 \leq \rho < 0.9587 \\
      \frac{1}{\rho}\log\frac{1}{1 - \rho} & 0.9587 \leq \rho < 0.9898 \\
      1 + \frac{4}{1 + \sqrt{1 - \rho}} & 0.9898 \leq \rho < 1.
    \end{dcases}
  \end{align*}
\end{restatable}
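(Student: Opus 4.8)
The plan is to assemble the bound from the two main lemmas, \cref{lem:waiting} and \cref{lem:residence}, following the derivation tree in \cref{fig:outline}, and then finish with an elementary analysis of two transcendental equations in~$\rho$ that determine the three load regimes.

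First I would start from the SOAP decomposition $\E{\response[\mserpt]{}} = \E{\waiting[\mserpt]{}} + \E{\residence[\mserpt]{}}$, apply \cref{lem:residence} to the residence term, and then apply \cref{lem:waiting} to the waiting term, obtaining
\begin{align*}
  \E{\response[\mserpt]{}}
  &\leq 2\E{\waiting[\mserpt]{}} + \gp*{\frac{1}{\rho}\log\frac{1}{1 - \rho}}\E{X} \\
  &\leq \frac{4}{1 + \sqrt{1 - \rho}}\E{\waiting[\gittins]{}}
    + \gp*{\frac{1}{\rho}\log\frac{1}{1 - \rho}}\E{X}.
\end{align*}
It then remains to convert the $\E{X}$ term into something comparable to $\E{\response[\gittins]{}}$, and here I would play the two Gittins lower bounds against each other. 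Using $\E{X} \leq \E{\residence[\gittins]{}}$ (\cref{prop:residence_size}) and pulling out the larger of the two coefficients gives the bound $\max\curlgp*{\tfrac{4}{1 + \sqrt{1 - \rho}}, \tfrac{1}{\rho}\log\tfrac{1}{1 - \rho}}\E{\response[\gittins]{}}$; alternatively, using $\gp*{\tfrac{1}{\rho}\log\tfrac{1}{1 - \rho}}\E{X} \leq \E{\response[\gittins]{}}$ (\cref{prop:response_srpt}) together with $\E{\waiting[\gittins]{}} \leq \E{\response[\gittins]{}}$ gives $\gp*{1 + \tfrac{4}{1 + \sqrt{1 - \rho}}}\E{\response[\gittins]{}}$. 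Taking the smaller of the two yields
\begin{align*}
  \frac{\E{\response[\mserpt]{}}}{\E{\response[\gittins]{}}}
  \leq \min\curlgp*{
    \max\curlgp*{\frac{4}{1 + \sqrt{1 - \rho}}, \frac{1}{\rho}\log\frac{1}{1 - \rho}},
    1 + \frac{4}{1 + \sqrt{1 - \rho}}
  }.
\end{align*}

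Finally I would determine which branch is active on each load range. Since $\rho \mapsto \tfrac{4}{1 + \sqrt{1 - \rho}}$ increases from~$2$ to~$4$ while $\rho \mapsto \tfrac{1}{\rho}\log\tfrac{1}{1 - \rho}$ increases from~$1$ and diverges as $\rho \to 1$, each of the difference functions $\tfrac{4}{1 + \sqrt{1 - \rho}} - \tfrac{1}{\rho}\log\tfrac{1}{1 - \rho}$ and $\gp*{1 + \tfrac{4}{1 + \sqrt{1 - \rho}}} - \tfrac{1}{\rho}\log\tfrac{1}{1 - \rho}$ is positive near~$0$ and negative near~$1$; a short sign/monotonicity argument shows each has a unique root, and a numerical root-find places them at $\rho \approx 0.9587$ and $\rho \approx 0.9898$, with the first necessarily below the second. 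Reading off the signs then gives the three cases in the statement: the bound equals $\tfrac{4}{1 + \sqrt{1 - \rho}}$ on $[0, 0.9587)$, equals $\tfrac{1}{\rho}\log\tfrac{1}{1 - \rho}$ on $[0.9587, 0.9898)$, and equals $1 + \tfrac{4}{1 + \sqrt{1 - \rho}}$ on $[0.9898, 1)$. The substantive difficulty is not at this stage at all: it lives in \cref{lem:waiting} and \cref{lem:residence}, whose chunk-by-chunk comparisons of the hills and valleys of \mserpt{} and Gittins (\cref{sub:waiting_outline, sub:residence_outline}) are the technical heart of the paper. Granting those lemmas, the theorem follows from the bookkeeping above plus the elementary crossover analysis, so the only real care needed here is verifying the monotonicity claims and pinning down the two numerical thresholds.
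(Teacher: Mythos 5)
Your proposal is correct and matches the paper's proof essentially step for step: decompose response time into waiting and residence, apply \cref{lem:waiting} and \cref{lem:residence}, then close off the $\E{X}$ term in two different ways via \cref{prop:residence_size} (yielding the $\max$ bound) and via \cref{prop:response_srpt} (yielding the $1 + \tfrac{4}{1+\sqrt{1-\rho}}$ bound), and finally take the minimum. The paper compresses the final expansion into the piecewise form with the phrase ``which expands to the desired piecewise bound,'' whereas you spell out the crossover/monotonicity analysis explicitly, but the content and logic are the same.
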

\newcommand{\thmapproximationratio}{{%
  \renewcommand{\footnote}[1]{}\oldthmapproximationratio*}}

\proofin{fig:outline, pf:approximation_ratio}

As illustrated in \cref{fig:outline},
the main steps in the proof of \cref{thm:approximation_ratio}
are \cref{lem:waiting, lem:residence} (\cref{sub:waiting, sub:residence}).
\Cref{fig:ratio_bound} plots the resulting bound as a function of load~$\rho$.
The following corollary gives intuition for this function
in terms of concrete values.

\begin{corollary}
  \label{cor:approximation_ratio}
  For the problem of preemptive scheduling to minimize mean response time
  in an M/G/1 queue with unknown job sizes,
  the approximation ratio of \mserpt{} is at most
  \begin{itemize}
  \item
    $2.5$ for load $\rho \leq 0.64$,
  \item
    $3$ for load $\rho \leq 8/9 \approx 0.89$,
  \item
    $3.3$ for load $\rho \leq 0.95$,
  \item
    $4$ for load $\rho \leq 0.98$, and
  \item
    $5$ for all loads.
\end{itemize}
\end{corollary}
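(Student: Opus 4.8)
The plan is to derive each bullet as a direct numerical consequence of \cref{thm:approximation_ratio}: for each load threshold I will identify which branches of that theorem's piecewise bound are active and evaluate them, exploiting monotonicity of each branch in~$\rho$. Abbreviate the three branches as $b_1(\rho) = 4/(1+\sqrt{1-\rho})$, $b_2(\rho) = \frac{1}{\rho}\log\frac{1}{1-\rho}$, and $b_3(\rho) = 1 + b_1(\rho)$, and let $b(\rho)$ denote the resulting upper bound on $\E{\response[\mserpt]{}}/\E{\response[\gittins]{}}$. Since $\sqrt{1-\rho}$ is decreasing, $b_1$ and hence $b_3$ are increasing in~$\rho$; I will show below that $b_2$ is increasing as well.

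For the first three bullets, the thresholds $0.64$, $8/9$, and $0.95$ all lie below the first cutoff $0.9587$, so on the corresponding interval $[0,\rho_0]$ one has $b(\rho) = b_1(\rho) \le b_1(\rho_0)$ by monotonicity of $b_1$. It then remains only to compute $b_1(0.64) = 4/(1+\sqrt{0.36}) = 4/1.6 = 2.5$, $b_1(8/9) = 4/(1+\sqrt{1/9}) = 4/(4/3) = 3$, and $b_1(0.95) = 4/(1+\sqrt{0.05}) = 3.269\ldots < 3.3$ — all routine arithmetic.

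The fourth bullet, $\rho \le 0.98$, is the first case whose interval crosses a cutoff, so I split at $0.9587$. On $[0,0.9587)$ one has $b(\rho) = b_1(\rho) < b_1(0.9587) = 4/(1+\sqrt{0.0413}) = 3.32\ldots < 4$; on $[0.9587, 0.98]$ one has $b(\rho) = b_2(\rho) \le b_2(0.98) = \frac{1}{0.98}\log 50 = 3.99\ldots < 4$, using monotonicity of $b_2$. For the last bullet I will check that each branch stays below~$5$ on its portion of $[0,1)$: $b_1(\rho) < 4$ because $\sqrt{1-\rho}\ge 0$; $b_2(\rho) < b_2(0.9898) = \frac{1}{0.9898}\log\frac{1}{0.0102} = 4.63\ldots < 5$ by monotonicity; and $b_3(\rho) = 1 + b_1(\rho) < 1 + 4 = 5$. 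Combining the three ranges gives $b(\rho) < 5$ for all $\rho \in [0,1)$.

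The one ingredient beyond arithmetic and the trivial monotonicity of $b_1$ — the only point I expect to need genuine justification — is that $b_2(\rho) = \frac{1}{\rho}\log\frac{1}{1-\rho}$ is increasing on $(0,1)$. The intended argument substitutes $u = 1-\rho$: the sign of $b_2'$ matches that of $1/u - 1 + \log u$, which vanishes at $u = 1$ and has $u$-derivative $(u-1)/u^2 < 0$ on $(0,1)$, so $1/u - 1 + \log u > 0$ throughout $(0,1)$, giving $b_2' > 0$. With this in hand, every bullet reduces to substituting a threshold into \cref{thm:approximation_ratio} and simplifying, so no further obstacles remain.
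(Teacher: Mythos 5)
Your proof is correct and takes the natural approach that the paper leaves implicit: it presents \cref{cor:approximation_ratio} as an immediate numerical consequence of \cref{thm:approximation_ratio}, without writing out the derivation. You correctly identify which branch of the piecewise bound is active at each threshold, invoke monotonicity to reduce each bullet to a single function evaluation, and supply the one genuinely nontrivial ingredient — a clean proof that $\frac{1}{\rho}\log\frac{1}{1-\rho}$ is increasing on $(0,1)$, which the paper never states. Your arithmetic checks out throughout ($b_1(0.64)=2.5$, $b_1(8/9)=3$, $b_1(0.95)\approx 3.27$, $b_2(0.98)\approx 3.99$, $b_2(0.9898)\approx 4.63$, $b_3<5$), and your case split at the cutoffs $0.9587$ and $0.9898$ for the last two bullets is exactly what is needed.
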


\begin{figure}
  \centering
  \tikzmath{%
  function ratio(\x) {
    if abs(\x) < 0.0001 then {
      return 2;
    } else {
      if abs(1 - \x) < 0.0001 then {
        return 5;
      } else {
        return min(
          max(4 / (1 + sqrt(1 - \x)), 1 / \x * ln(1 / (1 - \x))),
          1 + 4 / (1 + sqrt(1 - \x)));
      };
    };
  };
}
\xscaleload
\begin{tikzpicture}[figure]
  \xguide{8/9}{3}
  \xguide{1}{5}
  \yguide{0}{2}
  \yguide{8/9}{3}
  \yguide{1}{5}

  \axes{1}{5}{$\rho$}{%
    $\dfrac{\E{\response[\mserpt]{}}}{\E{\response[\gittins]{}}}$ bound}

  \draw[secondary, smooth, ultra thick]
    plot [domain=0:0.9587, samples=40] (\x, {ratio(\x)}) --
    plot [domain=0.9587:0.9898, samples=10] (\x, {ratio(\x)}) --
    plot [domain=0.9898:1, samples=10] (\x, {ratio(\x)});
\end{tikzpicture}

  \vspace{-\captionsqueeze}
  \caption{Bound on Mean Response Time Ratio}
  \label{fig:ratio_bound}
\end{figure}
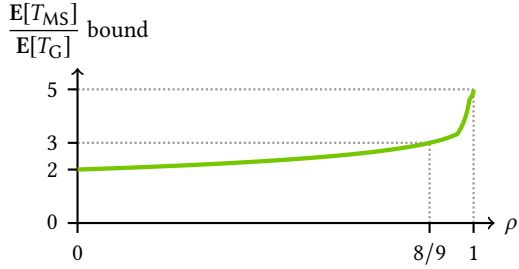

\subsection{Properties of \mserpt{} Hill Ages}

In this section we prove some properties of \mserpt{} hills and valleys,
and in particular \mserpt{} hill ages.
We begin by relating the hills and valleys of \mserpt{} and Gittins.
The following lemma builds on ideas introduced by \citet{mlps_gittins_aalto},
but it is a novel result.\footnote{%
  In particular, \cref{lem:hill_subset} is not equivalent to
  Proposition~7 of \citet{mlps_gittins_aalto}
  because hills are not simply the ages
  at which the rank function is increasing
  (\cref{def:hills_valleys}).}

\begin{restatable}{lemma}{lemhillsubset}
  \label{lem:hill_subset}
  Every Gittins hill age is also an \mserpt{} hill age,
  and similarly for hill sizes.
\end{restatable}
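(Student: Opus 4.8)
The plan is to work directly with the definitions of Gittins and SERPT rank functions and show that whenever $a$ is a Gittins hill age, the increasing envelope $\rank[\mserpt]{} = \rankup[\serpt]{}$ is strictly increasing at $a$. Equivalently, I would show the contrapositive: if $a > 0$ is an \mserpt{} valley age, meaning $\rankup[\serpt]{}$ is locally constant at $a$, then $a$ is a Gittins valley age, i.e.\ $\rank[\gittins]{}$ is locally constant on the increasing envelope as well. The key observation is the relationship between $\rank[\gittins]{}$ and $\rank[\serpt]{}$: writing $\rank[\serpt]{a} = \frac{\int_a^\infty \tail{t}\d{t}}{\tail{a}}$ and $\rank[\gittins]{a} = \inf_{b > a} \frac{\int_a^b \tail{t}\d{t}}{\tail{a} - \tail{b}}$, one has $\rank[\gittins]{a} = \inf_{b > a} \rank[\serpt]{a} \wedge \dots$ — more precisely, the Gittins ratio over $(a,b]$ is a weighted average of SERPT-like quantities, and taking $b \to \infty$ recovers $\rank[\serpt]{a}$ as one feasible value, so $\rank[\gittins]{a} \leq \rank[\serpt]{a}$ always. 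I would first establish, following the ideas of \citet{mlps_gittins_aalto}, the structural fact that on a maximal interval where $\rankup[\serpt]{}$ is constant at value $v$, the Gittins rank is also determined by a single ``long'' infimizing argument $b$ reaching past the valley, making $\rank[\gittins]{}$ constant there too.

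The main steps, in order: (1) Recall from \citet{mlps_gittins_aalto} (or reprove briefly) that $\rank[\gittins]{a} \leq \rank[\serpt]{a}$ for all $a$, with the Gittins infimum attained (under our piecewise-monotonicity assumptions) at some $b^*(a) \in (a, \infty]$. (2) Show that if $\rankup[\serpt]{}$ is locally constant at $a$ with value $v = \rankup[\serpt]{a}$, then there is an earlier hill age $u < a$ with $\rank[\serpt]{u} = v$ and $\rank[\serpt]{t} \leq v$ for all $t \in [u, a]$; intuitively, the SERPT curve ``sits below'' its running maximum throughout the valley. (3) Use this to argue the Gittins infimizing argument $b^*(u)$ overshoots $a$: because the SERPT ratio does not exceed $v$ in the valley, extending the averaging window past $a$ cannot hurt, so $\rank[\gittins]{}$ takes the same value throughout a neighborhood of $a$ that it takes at $u$. (4) Conclude $a$ is a Gittins valley age, and hence every Gittins hill age is an \mserpt{} hill age. (5) For the ``hill sizes'' clause, observe that hill sizes differ from hill ages only at upper boundaries of valleys (\cref{def:hills_valleys} and the surrounding discussion), and the same interval comparison transfers the property: if $x$ is a Gittins hill size then $\y_\gittins(x) = x = \z_\gittins(x)$, and the inclusion of hill ages just proved forces $\y_\mserpt(x) = x = \z_\mserpt(x)$ as well.

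The hard part will be step (3): making precise the claim that the Gittins infimizing window ``reaches past'' an \mserpt{} valley. This requires carefully comparing the Gittins ratio $\frac{\int_a^b \tail{t}\d{t}}{\tail{a} - \tail{b}}$ for $b$ inside versus beyond the valley, and showing that the valley structure of $\rankup[\serpt]{}$ — which is a statement about SERPT's running maximum, not about SERPT itself being monotone — still controls Gittins. The footnote in the excerpt flags exactly this subtlety: \cref{lem:hill_subset} is \emph{not} the same as Proposition~7 of \citet{mlps_gittins_aalto} precisely because hills are defined via the increasing envelope rather than via where $\rank{}$ increases, so I expect the argument to need a genuinely new lemma converting ``$\rankup[\serpt]{}$ locally constant'' into a usable bound on the SERPT ratios $\rank[\serpt]{t}$ over the whole valley, and only then feeding that into the Gittins formula. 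Everything else — the $\rank[\gittins]{} \leq \rank[\serpt]{}$ inequality, the transfer from ages to sizes, the corner cases at $0$ — I expect to be routine given the machinery already set up in \cref{sec:model,sec:hills_valleys}.
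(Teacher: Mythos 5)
Your contrapositive strategy (show every M\=/SERPT valley age is a Gittins valley age via an ``overshoot'' argument) is a genuinely different route from the paper, but step~(3) has a real gap that goes beyond ``making the overshoot precise.'' You conclude that $\rank[\gittins]{}$ \emph{takes the same value} throughout a neighborhood of $a$ that it takes at $u$; this is both stronger than what you need and false in general. The Gittins rank function is not constant on its own valleys (and hence not on the larger SERPT valleys containing them): it can dip well below the envelope value and come back up, and only the increasing envelope $\rankup[\gittins]{}$ is constant there. What you actually need is that $\rankup[\gittins]{}$ is locally constant at $a$, i.e., Gittins attains no new running maximum on $(u, z)$. Deriving that from the overshoot requires the Gittins ``committing'' property --- that $\rank[\gittins]{t} \leq \rank[\gittins]{u}$ for $t$ inside the Gittins-optimal stopping window from~$u$, the continuous-time analogue of the standard fact the paper cites from \citet[Lemma~2.2]{book_gittins} --- and even then you must handle the boundary case where the Gittins infimum at $u$ is attained \emph{inside} the SERPT valley with $\eta(u, b^*) = \rank[\serpt]{u}$ rather than strictly past $z$. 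These are exactly the places where ``hills defined via the increasing envelope'' bites, and your plan flags the difficulty without resolving it.

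The paper sidesteps all of this by arguing forward rather than via the contrapositive: fix a Gittins hill age $v$ and an arbitrary $u < v$, reduce to $u = 0$ by restarting the job size distribution at age $u$, partition $[0, v]$ into \emph{Gittins} (not SERPT) hills and valleys, and bound $\eta(u, v) \leq \rank[\gittins]{v}$ piece by piece using the mediant-style inequalities satisfied by the efficiency function $\eta$ together with the known structure of Gittins on its \emph{own} hills ($\rank[\gittins]{a} = 1/\hazard{a}$) and valleys ($\rank[\gittins]{\y_i} = \eta(\y_i, \z_i)$). This never needs to characterize Gittins's behavior in the interior of a SERPT valley that strictly contains a Gittins valley, which is precisely the thing your approach must pin down. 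Your treatment of the hill-sizes clause in step~(5) is essentially the same boundary observation the paper uses and is fine; the substance of the work lies in the hill-age claim, where the gap above remains.
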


\proofin{pf:hill_subset}

We now show a key property of \mserpt{} hill ages
that lets us to bound $\coload{}$~ratios,
such as those in~\cref{eq:waiting_ratio_chunk},
in terms of $\tail{}$~ratios.

\begin{lemma}
  \label{lem:coload}
  For any \mserpt{} hill age~$b$ and any $a \leq b$,
  \begin{align*}
    \frac{\coload{a}}{\coload{b}}
    \leq \frac{1}{1 - \rho + \rho \frac{\tail{b}}{\tail{a}}}.
  \end{align*}
\end{lemma}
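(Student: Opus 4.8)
The plan is to unfold the definition of $\coload{}$ and reduce the claim to a short computation powered by a single property of \mserpt{} hill ages. Throughout, write $I(a) = \int_a^\infty \tail{t}\,\d{t}$, so that $I(0) = \E{X}$, the map $a \mapsto I(a)$ is decreasing with $I(a) - I(b) = \int_a^b \tail{t}\,\d{t}$ for $a \le b$, and by \cref{def:load}
\[
  \coload{a} = 1 - \int_0^a \lambda\tail{t}\,\d{t} = 1 - \rho + \lambda I(a),
  \qquad
  \coload{a} - \coload{b} = \lambda\bigl(I(a) - I(b)\bigr).
\]
We may assume $\tail{a} > 0$ (otherwise $\tail{b} = 0$ as well and the stated ratio is degenerate). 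Since $\coload{b} \ge 1 - \rho > 0$ and $1 - \rho + \rho\tail{b}/\tail{a} \ge 1 - \rho > 0$, the claimed inequality is equivalent, after cross-multiplying, expanding $1 - \rho + \rho\tail{b}/\tail{a} = 1 - \rho\frac{\tail{a} - \tail{b}}{\tail{a}}$, and cancelling a factor $\lambda = \rho/\E{X}$, to
\[
  \int_a^b \tail{t}\,\d{t} \;\le\; \E{X}\,\coload{a}\cdot\frac{\tail{a} - \tail{b}}{\tail{a}}.
\]

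The key input is that, because $b$ is an \mserpt{} hill age, SERPT's rank at $b$ dominates its rank at every earlier age:
\[
  \rank[\serpt]{a} = \frac{I(a)}{\tail{a}} \;\le\; \frac{I(b)}{\tail{b}} = \rank[\serpt]{b}
  \qquad\text{for all } a \le b,
\]
equivalently $\tail{a}\,I(b) \ge \tail{b}\,I(a)$. This is what I expect to be the only delicate point, though it is short: the increasing envelope $\rank[\mserpt]{} = \rankup[\serpt]{}$ satisfies $\rank[\mserpt]{b} = \max_{0 \le c \le b}\rank[\serpt]{c}$ by definition, so it suffices to check $\rank[\mserpt]{b} = \rank[\serpt]{b}$; and if instead $\rank[\mserpt]{b} > \rank[\serpt]{b}$, then continuity of $\rank[\serpt]{}$ together with monotonicity of $\rank[\mserpt]{}$ forces $\rank[\mserpt]{}$ to be constant on a neighborhood of~$b$, making $b$ a valley age (\cref{def:hill_age_valley_age}) — a contradiction.

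Given this, the computation finishes in three elementary steps. First, the key input gives
\[
  \int_a^b \tail{t}\,\d{t} = I(a) - I(b) \;\le\; I(a) - \frac{\tail{b}}{\tail{a}}\,I(a) = I(a)\cdot\frac{\tail{a} - \tail{b}}{\tail{a}}.
\]
Second, since $\frac{\tail{a} - \tail{b}}{\tail{a}} \ge 0$ (as $\tail{}$ is decreasing), it remains only to show $I(a) \le \E{X}\,\coload{a}$. Third, using $\lambda\E{X} = \rho$, then $I(a) \le I(0) = \E{X}$ together with $1 - \rho \ge 0$,
\[
  \E{X}\,\coload{a} = \E{X}\bigl(1 - \rho + \lambda I(a)\bigr) = (1 - \rho)\E{X} + \rho\,I(a) \;\ge\; (1 - \rho)I(a) + \rho\,I(a) = I(a).
\]
Chaining these three displays gives the reduced inequality, hence the lemma. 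The hypothesis that $b$ is an \mserpt{} hill age is used exactly once, in the key input; everything else is arithmetic about $\coload{}$ and the trivial bound $I(a) \le \E{X}$.
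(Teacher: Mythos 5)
Your proposal is correct and essentially the same as the paper's proof: both rest on the single observation $\rank[\serpt]{a} \leq \rank[\serpt]{b}$ at an \mserpt{} hill age~$b$, followed by the bound $\coload{a} \leq 1$ (which you use in the equivalent form $I(a) \leq \E{X}$), and the remaining algebra is the same computation packaged backward rather than forward. A minor bonus is that you actually justify $\rank[\mserpt]{b} = \rank[\serpt]{b}$ at hill ages via the local-constancy contradiction, whereas the paper simply asserts it.
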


\begin{proof}
  Recall from \cref{def:mserpt} that
  $\rank[\mserpt]{}$ is the increasing envelope of $\rank[\serpt]{}$.
  By \cref{def:hill_age_valley_age},
  this means \mserpt{} has the same hill and valley ages as SERPT.
  We therefore have
  \begin{itemize}
  \item
    $\rank[\serpt]{a} \leq \rank[\mserpt]{a}$
    by \cref{def:mserpt},
  \item
    $\rank[\mserpt]{a} \leq \rank[\mserpt]{b}$
    because $\rank[\mserpt]{}$ is increasing, and
  \item
    $\rank[\serpt]{b} = \rank[\mserpt]{b}$
    because $b$ is a SERPT hill age.
  \end{itemize}
  Putting these together gives us $\rank[\serpt]{a} \leq \rank[\serpt]{b}$,
  which by \cref{def:serpt} is the same as
  \begin{align*}
    \frac{\int_a^\infty \tail{t} \d{t}}{\tail{a}}
    \leq \frac{\int_b^\infty \tail{t} \d{t}}{\tail{b}}.
  \end{align*}
  Multiplying both sides by $\lambda$ and applying \cref{def:load} yields
  \begin{align*}
    \frac{\coload{a} - (1 - \rho)}{\tail{a}}
    \leq \frac{\coload{b} - (1 - \rho)}{\tail{b}}.
  \end{align*}
  Letting $\zeta = \tail{b}/\tail{a}$, this rearranges to
  \begin{align*}
    \frac{\coload{b}}{\coload{a}}
    \geq \zeta + (1 - \zeta)\frac{1 - \rho}{\coload{a}}.
  \end{align*}
  Because $\coload{a} \leq 1$,
  the right-hand side is at least $1 - \rho + \zeta\rho$,
  which implies the desired inequality.
\end{proof}

The bound in \cref{lem:coload} is increasing in~$\rho$,
implying the following simpler bound.

\begin{corollary}
  \label{cor:coload}
  For any \mserpt{} hill age~$b$ and any $a \leq b$,
  \begin{align*}
    \frac{\coload{a}}{\coload{b}} \leq \frac{\tail{a}}{\tail{b}}.
  \end{align*}
\end{corollary}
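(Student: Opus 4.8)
The plan is to derive this directly from \cref{lem:coload} by exploiting monotonicity in~$\rho$, exactly as the surrounding remark suggests. Fix an \mserpt{} hill age~$b$ and some age $a \leq b$, and set $\zeta = \tail{b}/\tail{a}$; we may assume $\tail{b} > 0$, so that $\zeta$ is well defined, and since $\tail{}$ is decreasing and $a \leq b$ we have $0 < \zeta \leq 1$. \cref{lem:coload} then gives
\begin{align*}
  \frac{\coload{a}}{\coload{b}}
  \leq \frac{1}{1 - \rho + \rho\zeta}
  = \frac{1}{1 - \rho(1 - \zeta)}.
\end{align*}

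The next step is to observe that, because $1 - \zeta \geq 0$, the denominator $1 - \rho(1 - \zeta)$ is a decreasing affine function of~$\rho$ on $[0, 1)$, so the right-hand side above is increasing in~$\rho$. Taking the supremum over the stability range $\rho < 1$ — equivalently, passing to the limit $\rho \uparrow 1$ — bounds it by its value at $\rho = 1$:
\begin{align*}
  \frac{1}{1 - \rho(1 - \zeta)}
  \leq \frac{1}{1 - (1 - \zeta)}
  = \frac{1}{\zeta}
  = \frac{\tail{a}}{\tail{b}}.
\end{align*}
Chaining the two displays gives the claim.

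There is essentially no obstacle here: the corollary is a one-line weakening of \cref{lem:coload}, and the only points needing a moment's care are confirming $\zeta \leq 1$ (so that $\rho \mapsto 1 - \rho(1 - \zeta)$ is nonincreasing and the relevant extreme is $\rho \uparrow 1$, not $\rho = 0$) and noting that the degenerate case $\tail{b} = 0$ falls outside the scope of the stated inequality. If one prefers to sidestep \cref{lem:coload} entirely, the same bound also drops straight out of the intermediate inequality $\coload{b}/\coload{a} \geq \zeta + (1 - \zeta)(1 - \rho)/\coload{a}$ appearing in that lemma's proof, since $1 - \zeta \geq 0$, $1 - \rho \geq 0$, and $\coload{a} > 0$ together force $\coload{b}/\coload{a} \geq \zeta$.
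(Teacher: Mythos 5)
Your proof is correct and takes the same route as the paper: invoke \cref{lem:coload}, observe that the bound $1/(1 - \rho + \rho\zeta)$ is increasing in $\rho$ because $\zeta = \tail{b}/\tail{a} \leq 1$, and bound it by its value at $\rho = 1$, which equals $\tail{a}/\tail{b}$. This is precisely the one-line argument the paper signals with the remark preceding the corollary.
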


\subsection{Waiting Time Bound}
\label{sub:waiting}

The proofs in the remainder of this section frequently use
the monotonicity facts listed in \cref{tab:monotonicity}.
As a reminder, all monotonicities are meant in the weak sense
unless otherwise specified.
For example, ``decreasing'' means nonincreasing.
So as not to disrupt the flow of the proofs,
we use facts from \cref{tab:monotonicity} with only a reference to the table.

\begin{table}
  \centering
  \caption{Monotonicity Facts}
  \label{tab:monotonicity}
  \vspace{-0.5\captionsqueeze}
  \begin{tabular}{@{}lll@{}}
    \toprule
    \textsc{Function} & \textsc{Monotonicity} & \textsc{Defined in} \\
    \midrule
    $\tail{}$ & decreasing & \cref{sec:model} \\
    $\coload{}$ & decreasing & \cref{def:load} \\
    $\excess{}$ & increasing & \cref{def:excess} \\
    $\y, \z$ & increasing & \cref{def:pred_succ} \\
    \bottomrule
  \end{tabular}
\end{table}

\begin{lemma}
  \label{lem:waiting}
  The mean waiting time of \mserpt{} is bounded by
  \begin{align*}
    \frac{\E{\waiting[\mserpt]{}}}{\E{\waiting[\gittins]{}}}
    \leq \frac{2}{1 + \sqrt{1 - \rho}}.
  \end{align*}
\end{lemma}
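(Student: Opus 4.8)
The plan is to read off the bound from the size-conditional waiting-time estimates of \cref{prop:waiting_hills_valleys}. Since \mserpt{} has a monotonic rank function, that proposition gives $\E{\waiting[\mserpt]{x}} = \excess{\z_\mserpt(x)} / (\coload{\y_\mserpt(x)} \coload{\z_\mserpt(x)})$ exactly, while for \gittins{} it gives the matching \emph{lower} bound on $\E{\waiting[\gittins]{x}}$. Integrating each against $\density{x}\,\d{x}$ and dividing reduces the lemma to the integral inequality of \cref{eq:waiting_integral}. I would then partition $(0, \infty)$ into the \gittins{} hills and valleys and argue chunk by chunk: if on every chunk the integral of the \mserpt{}-integrand is at most $2/(1 + \sqrt{1 - \rho})$ times the integral of the \gittins{}-integrand, then the mediant inequality lifts this to the full ratio (a chunk where the \gittins{} integral vanishes also has vanishing \mserpt{} integral and is discarded).

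The hill chunks are immediate: there $\y_\gittins(x) = x = \z_\gittins(x)$, so by \cref{lem:hill_subset} and the age ordering it implies (\cref{eq:age_ordering_simple}) also $\y_\mserpt(x) = x = \z_\mserpt(x)$, and the two integrands coincide, giving chunk ratio $1$. On a \gittins{} valley $(u, v]$ the denominator collapses because $\y_\gittins \equiv u$ and $\z_\gittins \equiv v$; since $\z_\mserpt(x) \le v$ and $\excess{}$ is increasing (\cref{tab:monotonicity}), replacing $\excess{\z_\mserpt(x)}$ by $\excess{v}$ cancels the $\excess{}$ factors and reduces the chunk bound to showing $q(u, v) \le 2/(1 + \sqrt{1 - \rho})$ for the quantity $q$ of \cref{eq:waiting_ratio_chunk}.

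Bounding $q(u, v)$ is the heart of the argument, and the step I expect to be the main obstacle. The integrand of $q$ is the ``$\coload{}$-ratio'' $\coload{u}\coload{v} / (\coload{\y_\mserpt(x)} \coload{\z_\mserpt(x)})$ weighted by the conditional density $\density{x} / (\tail{u} - \tail{v})$ on $(u, v]$; since $\y_\mserpt$ and $\z_\mserpt$ are increasing and $\coload{}$ is decreasing, this ratio is increasing in $x$. I would split $q(u, v) = q(u, x_*) + q(x_*, v)$ at a carefully chosen threshold age $x_*$, with a three-way split when $x_*$ lies in the interior of an \mserpt{} valley (the extra piece covering that valley). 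On $[u, x_*]$ I would control the $\coload{}$-ratio directly: $\y_\mserpt(x)$ is an \mserpt{} hill age, so \cref{lem:coload} bounds $\coload{u}/\coload{\y_\mserpt(x)}$ in terms of the $\tail{}$-ratio $\tail{\y_\mserpt(x)}/\tail{u}$ and $\rho$, and $\coload{v}/\coload{\z_\mserpt(x)} \le 1$; the resulting integral is small because this $\coload{}$-ratio stays close to $1$ for $x$ near $u$. On $[x_*, v]$ I would instead use that the remaining tail mass $\tail{x_*} - \tail{v}$ is small, together with the uniform bound $\coload{u}\coload{v} / (\coload{\y_\mserpt(x)} \coload{\z_\mserpt(x)}) \le \tail{u}/\tail{v}$, which follows from \cref{cor:coload}. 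Choosing $x_*$ so that these two contributions balance is what produces the constant $2/(1 + \sqrt{1 - \rho})$: one piece is governed by a $\tail{}$-ratio, the other by the complementary $\tail{}$-ratio, and equalizing them forces the square root to appear. The remaining work — checking that the two or three pieces sum to at most $2/(1 + \sqrt{1 - \rho})$, handling the \mserpt{}-valley subcase, and treating the corner cases at $x = 0$ and at valley boundaries from \cref{sub:hills_valleys_defs} — is routine but lengthy, and is where the specific structure of \mserpt{} (through \cref{lem:hill_subset, lem:coload, cor:coload}) is indispensable.
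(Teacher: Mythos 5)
Your high-level structure matches the paper exactly: reduce via \cref{prop:waiting_hills_valleys}, partition into Gittins hills and valleys, dispose of hills immediately using \cref{lem:hill_subset}, and reduce the valley case to bounding $q(u,v)$ after pulling out $\excess{}$. But the specific inequalities you propose for bounding $q(u,v)$ are too loose, and I do not see how to make them close to $2/(1+\sqrt{1-\rho})$.

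There are two concrete problems. First, on $[u, x_*]$ you bound $\coload{v}/\coload{\z_\mserpt(x)} \le 1$ and control only the $\y$-factor via \cref{lem:coload}. Writing $\gail{a} = 1 - \rho + \rho\,\tail{a}/\tail{u}$, this yields an integrand proportional to $\gail{u}/\gail{\y_\mserpt(x)}$. When $(u, x_*]$ contains an \mserpt{} hill, the contribution of that hill is $\frac{\gail{u}}{\gail{u}-\gail{v}}\log\frac{\gail{\z_i}}{\gail{\y_{i+1}}}$, and as $\rho \to 1$ this logarithm can approach $\log\frac{1}{1-\rho}$, which diverges. The paper avoids the logarithm by \emph{also} bounding the $\z$-factor by a $\gail{}$-ratio, $\coload{v}/\coload{\z_\mserpt(x)} \le \gail{x_*}/\gail{\z_\mserpt(x)}$, so that the hill integrand becomes $\gail{u}\gail{x_*}/\gail{x}^2 \cdot \gensity{x}$, whose antiderivative is a rational function of $\gail{}$, not a log; this is precisely what makes the hill and valley pieces telescope into a single difference $1/\gail{\y_*} - 1/\gail{u}$. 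Second, on $[x_*, v]$ your uniform bound is $\tail{u}/\tail{v}$, which is unbounded as $\tail{v} \to 0$. The paper's uniform bound on this piece is $\gail{u}/\gail{x_*}$, which by the choice of $x_*$ equals $\coload{u}/\coload{v} \le 1/(1-\rho)$. And this is the point: $x_*$ is not chosen to make ``the two contributions balance'' in a direct sense; it is chosen so that $\gail{u}/\gail{x_*} = \coload{u}/\coload{v}$, which makes the prestar and poststar telescopes compatible, and the square root then comes from an AM--GM inequality $\gail{x_*}/\gail{u} + \gail{v}/\gail{x_*} \ge 2\sqrt{\gail{v}/\gail{u}}$ applied to the combined bound.

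So the missing ideas are: (a) bound \emph{both} $\coload{}$-factors by $\gail{}$-ratios on the prestar piece so the hill/valley sum telescopes rationally instead of producing a logarithm; (b) on the poststar piece, use the load-bounded ratio $\coload{u}/\coload{v}$, not the unbounded tail ratio; (c) choose $x_*$ by the identity $\gail{u}/\gail{x_*} = \coload{u}/\coload{v}$ rather than by balancing; and (d) finish with AM--GM. Without these, a simple check (e.g.\ $\tail{v} \approx 0.01\,\tail{u}$, $\rho = 0.9$) shows your bounds already exceed $2/(1+\sqrt{1-\rho})$ on the first piece alone.
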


\begin{proof}
  \label{pf:waiting}
  By \cref{lem:hill_subset},
  because $\y_\gittins(x) = x = \z_\gittins(x)$
  for all Gittins hill sizes~$x$,
  we have
  \begin{align*}
    \frac{
      \E{\waiting[\mserpt]{X} \given
        X \text{ is a Gittins hill size}}
    }{
      \E{\waiting[\gittins]{X} \given
        X \text{ is a Gittins hill size}}
    }
    \leq 1.
  \end{align*}
  Therefore, it suffices to show that
  for any Gittins valley $(u, v]$,
  \begin{align*}
    \frac{
      \E{\waiting[\mserpt]{X} \given X \in (u, v]}
    }{
      \E{\waiting[\gittins]{X} \given X \in (u, v]}
    }
    \leq \frac{2}{1 + \sqrt{1 - \rho}}.
  \end{align*}

  For any $x \in (u, v]$,
  \cref{lem:hill_subset} implies the following key fact:
  \begin{align}
    \label{eq:age_ordering}
    u = \y_\gittins(x) \leq \y_\mserpt(x)
    \leq x
    \leq \z_\mserpt(x) \leq \z_\gittins(x) = v.
  \end{align}
  Applying \cref{prop:waiting_hills_valleys, tab:monotonicity},
  we obtain
  \begin{align*}
    \iftwocol{\MoveEqLeft}{}
    \frac{
      \E{\waiting[\mserpt]{X} \given X \in (u, v]}
    }{
      \E{\waiting[\gittins]{X} \given X \in (u, v]}
    }
    \iftwocol{\\}{}
    &= \cfrac{
        \displaystyle
        \int_u^v
          \frac{
            \excess{\z_\mserpt(x)}
          }{
            \coload{\y_\mserpt(x)} \cdot \coload{\z_\mserpt(x)}
          }
          \cdot \frac{\density{x}}{\tail{u} - \tail{v}}
        \d{x}
      }{
        \displaystyle
        \frac{\excess{v}}{\coload{u} \cdot \coload{v}}
      } \\[0.15em]
    &\leq \int_u^v
      \frac{
        \coload{u} \cdot \coload{v}
      }{
        \coload{\y_\mserpt(x)} \cdot \coload{\z_\mserpt(x)}}
        \cdot \frac{\density{x}}{\tail{u} - \tail{v}
      }
      \d{x}.
  \end{align*}
  Let
  \begin{align*}
    q(a, b)
    = \int_a^b
        \frac{
          \coload{u} \cdot \coload{v}
        }{
          \coload{\y_\mserpt(x)} \cdot \coload{\z_\mserpt(x)}
        }
        \cdot \frac{\density{x}}{\tail{u} - \tail{v}}
      \d{x}.
  \end{align*}
  It suffices to bound $q(u, v)$.
  To do so, we split the integration region
  into three pieces at carefully chosen ages $\y_*$ and~$\z_*$,
  then we bound each of
  $q(u, \y_*)$, $q(\y_*, \z_*)$, and $q(\z_*, v)$.

  Before specifying $\y_*$ and~$\z_*$, we need two other definitions.
  First, for all $x \in (u, v]$, let
  \begin{align*}
    \gail{x} = 1 - \rho + \rho \frac{\tail{x}}{\tail{u}}.
  \end{align*}
  With this notation, \cref{lem:coload} says that
  if $x$ is an \mserpt{} hill age, then\footnote{%
    Even though $\gail{u} = 1$,
    we find that explicitly writing $\gail{u}$
    in ratios with other uses of $\gail{}$
    makes the proof easier to follow.}
  \begin{align}
    \label{eq:coload}
    \frac{\coload{u}}{\coload{x}}
    \leq \frac{\gail{u}}{\gail{x}}.
  \end{align}
  Second, let $x_* \in (u, v]$ be the age such that
  \begin{align}
    \label{eq:xstar}
    \frac{\gail{u}}{\gail{x_*}}
    = \frac{\coload{u}}{\coload{v}}.
  \end{align}
  Such an age must exist by continuity of~$\gail{}$
  because by \cref{tab:monotonicity} and~\cref{eq:coload},
  \begin{align*}
    \gail{u}
    \geq \frac{\coload{v}}{\coload{u}}\gail{u}
    \geq \gail{v}.
  \end{align*}

  We can now define
  \begin{align*}
    \y_* &= \y_\mserpt(x_*) \\
    \z_* &= \z_\mserpt(x_*).
  \end{align*}
  We bound each of
  $q(u, \y_*)$, $q(\y_*, \z_*)$, and $q(\z_*, v)$
  in \cref{step:prestar, step:star, step:poststar} below.
  The core of each step is bounding the ratios
  $\coload{u}/\coload{\y_\mserpt(x)}$
  and $\coload{v}/\coload{\z_\mserpt(x)}$.
  \begin{itemize}
  \item
    By \cref{tab:monotonicity, eq:age_ordering} we have
    \begin{align}
      \label{eq:coload_z_poststar}
      \frac{\coload{v}}{\coload{\z_\mserpt(x)}} \leq 1
    \end{align}
    and, using~\cref{eq:xstar},
    \begin{align}
      \label{eq:coload_y_poststar}
      \frac{\coload{u}}{\coload{\y_\mserpt(x)}}
      = \frac{\coload{v}}{\coload{\y_\mserpt(x)}}
        \cdot \frac{\gail{u}}{\gail{x_*}}
      \leq \frac{\gail{u}}{\gail{x_*}}.
    \end{align}
  \item
    Since $\y_\mserpt(x)$ and $\z_\mserpt(x)$ are \mserpt{} hill ages,
    by \cref{eq:coload} we have
    \begin{align}
      \label{eq:coload_y_prestar}
      \frac{\coload{u}}{\coload{\y_\mserpt(x)}}
      \leq \frac{\gail{u}}{\gail{\y_\mserpt(x)}}
    \end{align}
    and, using~\cref{eq:xstar},
    \begin{align}
      \label{eq:coload_z_prestar}
      \frac{\coload{v}}{\coload{\z_\mserpt(x)}}
      = \frac{\coload{u}}{\coload{\z_\mserpt(x)}}
        \cdot \frac{\gail{x_*}}{\gail{u}}
      \leq \frac{\gail{x_*}}{\gail{\z_\mserpt(x)}}.
    \end{align}
  \end{itemize}
  In each of \cref{step:prestar, step:star, step:poststar},
  we apply either \cref{eq:coload_z_poststar} or~\cref{eq:coload_z_prestar},
  whichever gives a tighter bound,
  and similarly for \cref{eq:coload_y_poststar, eq:coload_y_prestar}.

  We need one last definition
  before carrying out \cref{step:prestar, step:star, step:poststar}:
  to avoid mixing $\tail{}$ and~$\gail{}$, let
  \begin{align*}
    \gensity{x} = -\dd{x} \gail{x} = \rho\density{x},
  \end{align*}
  which allows us to write
  \begin{align*}
    \frac{\density{x}}{\tail{u} - \tail{v}}
    = \frac{\gensity{x}}{\gail{u} - \gail{v}}.
  \end{align*}

  \setcounter{step}{0}
  \begin{step}[{bounding $q(u, \y_*)$}]
    \label{step:prestar}
    Since both $u$ and $\y_*$ are \mserpt{} hill ages,
    we can partition $(u, \y_*]$ into \mserpt{} hills and valleys,\footnote{%
      The potential obstacle to partitioning is that $u$ or $\y_*$ might be
      in the interior of a valley (\cref{def:hills_valleys}),
      but $u$ and $\y_*$ being hill ages ensures this is not the case.}
    meaning there exist
    \begin{align*}
      u = \z_0 \leq \y_1 < \z_1 < \dots
      < \y_n < \z_n \leq \y_{n + 1} = \y_*
    \end{align*}
    such that
    \begin{itemize}
    \item
      $(\y_i, \z_i]$ is an \mserpt{} valley for all $i \in \{1, \dots, n\}$,
    \item
      $(\z_i, \y_{i + 1}]$ is an \mserpt{} hill
      for all $i \in \{1, \dots, n\}$, and
    \item
      either $\z_0 = \y_1$ or $(\z_0, \y_1]$ is an \mserpt{} hill.
    \end{itemize}
    For each \mserpt{} valley,
    we have $\y_\mserpt(x) = \y_i$ and $\z_\mserpt(x) = \z_i$
    for $x \in (\y_i, \z_i]$,
    so applying \cref{eq:coload_y_prestar, eq:coload_z_prestar}
    yields
    \begin{align}
      \label{eq:prestar_valley}
      q(\y_i, \z_i)
      &\leq \int_{\y_i}^{\z_i}
          \frac{\gail{u} \cdot \gail{x_*}}{\gail{\y_i} \cdot \gail{\z_i}}
          \cdot
          \frac{\gensity{x}}{\gail{u} - \gail{v}}
        \d{x} \\
      &= \frac{\gail{u} \cdot \gail{x_*}}{\gail{u} - \gail{v}}
        \gp[\bigg]{\frac{1}{\gail{\z_i}} - \frac{1}{\gail{\y_i}}}.
    \end{align}
    For each \mserpt{} hill,
    we have $\y_\mserpt(x) = x = \z_\mserpt(x)$ for $x \in (\z_i, \y_{i + 1}]$,
    so applying \cref{eq:coload_y_prestar, eq:coload_z_prestar} yields
    \begin{align}
      \label{eq:prestar_hill}
      q(\z_i, \y_{i + 1})
      &\leq \int_{\z_i}^{\y_{i + 1}}
          \frac{\gail{u} \cdot \gail{x_*}}{\gail{x}^2}
          \cdot
          \frac{\gensity{x}}{\gail{u} - \gail{v}}
        \d{x} \\
      &= \frac{\gail{u} \cdot \gail{x_*}}{\gail{u} - \gail{v}}
        \gp[\bigg]{\frac{1}{\gail{\y_{i + 1}}} - \frac{1}{\gail{\z_i}}}.
    \end{align}
    Combining \cref{eq:prestar_valley, eq:prestar_hill}
    for each \mserpt{} hill and valley implies
    \begin{align*}
      q(u, \y_*)
      &= \sum_{i = 1}^n q(\y_i, \z_i)
        + \sum_{i = 0}^n q(\z_i, \y_{i + 1}) \\
      &\leq \frac{\gail{u} \cdot \gail{x_*}}{\gail{u} - \gail{v}}
        \gp[\bigg]{\frac{1}{\gail{\y_{n + 1}}} - \frac{1}{\gail{\z_0}}} \\
      &= \frac{\gail{u}}{\gail{u} - \gail{v}}
        \gp[\bigg]{\frac{\gail{x_*}}{\gail{\y_*}}
          - \frac{\gail{x_*}}{\gail{u}}}.
    \end{align*}
  \end{step}

  \begin{step}[{bounding $q(\y_*, \z_*)$}]
    \label{step:star}
    If $x_*$ is an \mserpt{} hill size,
    then $q(\y_*, \z_*) = q(x_*, x_*) = 0$.
    Otherwise,
    since $\y_\mserpt(x) = \y_*$ for all $x \in (\y_*, \z_*]$,
    applying \cref{eq:coload_z_poststar, eq:coload_y_prestar} yields
    \begin{align*}
      q(\y_*, \z_*)
      &\leq \int_{\y_*}^{\z_*}
          \frac{\gail{u}}{\gail{\y_*}}
          \cdot
          \frac{\gensity{x}}{\gail{u} - \gail{v}}
        \d{x} \\
      &= \frac{\gail{u}}{\gail{u} - \gail{v}}
        \gp[\bigg]{1 - \frac{\gail{z_*}}{\gail{\y_*}}}.
    \end{align*}
  \end{step}

  \begin{step}[{bounding $q(\z_*, v)$}]
    \label{step:poststar}
    Applying \cref{eq:coload_z_poststar, eq:coload_y_poststar} yields
    \begin{align*}
      q(\z_*, v)
      &\leq \int_{\z_*}^v
          \frac{\gail{u}}{\gail{x_*}}
          \cdot
          \frac{\gensity{x}}{\gail{u} - \gail{v}}
        \d{x} \\
      &= \frac{\gail{u}}{\gail{u} - \gail{v}}
        \gp[\bigg]{\frac{\gail{\z_*}}{\gail{x_*}}
          - \frac{\gail{v}}{\gail{x_*}}}.
    \end{align*}
  \end{step}

  Combining the results of \cref{step:prestar, step:star, step:poststar}
  gives us
  \begin{align}
    \label{eq:q_bound}
    q(u, v)
    \iftwocol{&}{}
    \leq \frac{\gail{u}}{\gail{u} - \gail{v}}
      \biggl(
        \frac{\gail{x_*}}{\gail{\y_*}}
        - \frac{\gail{x_*}}{\gail{u}}
    \iftwocol{\\ &\qquad}{}
        + 1
        - \frac{\gail{z_*}}{\gail{\y_*}}
        + \frac{\gail{\z_*}}{\gail{x_*}}
        - \frac{\gail{v}}{\gail{x_*}}\biggr).
  \end{align}
  \Cref{tab:monotonicity, eq:age_ordering} imply
  \begin{align*}
    \frac{\gail{\z_*}}{\gail{x_*}} - \frac{\gail{z_*}}{\gail{\y_*}}
    &\leq 1 - \frac{\gail{x_*}}{\gail{\y_*}},
  \end{align*}
  and minimizing over possible values of $\gail{x_*}$ gives
  \begin{align*}
    \frac{\gail{x_*}}{\gail{u}} + \frac{\gail{v}}{\gail{x_*}}
    &\geq 2\sqrt{\frac{\gail{v}}{\gail{u}}}.
  \end{align*}
  Applying these to \cref{eq:q_bound}
  and using the fact that $\gail{v}/\gail{u} \geq 1 - \rho$
  yields
  \begin{align*}
    q(u, v)
    &\leq \frac{\gail{u}}{\gail{u} - \gail{v}}
      \gp[\Bigg]{2 - 2\sqrt{\frac{\gail{v}}{\gail{u}}}} \\
    &= \frac{2}{1 + \sqrt{\frac{\gail{v}}{\gail{u}}}} \\
    &\leq \frac{2}{1 + \sqrt{1 - \rho}}.
    \qedhere
  \end{align*}
\end{proof}

\subsection{Residence Time Bound}
\label{sub:residence}

\begin{lemma}
  \label{lem:residence}
  The mean residence time of \mserpt{} is bounded by
  \begin{align*}
    \E{\residence[\mserpt]{}}
    \leq \E{\waiting[\mserpt]{}}
      + \gp[\bigg]{\frac{1}{\rho}\log\frac{1}{1 - \rho}}\E{X}.
  \end{align*}
\end{lemma}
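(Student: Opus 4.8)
The plan is to work from the exact hills-and-valleys representation of \mserpt{}'s mean response time. Since \mserpt{} has a monotonic rank function, \cref{prop:waiting_hills_valleys,prop:residence_hills_valleys} hold with equality, so
\begin{align*}
  \E{\residence[\mserpt]{}}
  = \int_0^\infty \frac{x}{\coload{\y_\mserpt(x)}}\density{x}\d{x},
  \qquad
  \E{\waiting[\mserpt]{}}
  = \int_0^\infty
      \frac{\excess{\z_\mserpt(x)}}{\coload{\y_\mserpt(x)}\coload{\z_\mserpt(x)}}
      \density{x}\d{x},
\end{align*}
while $\E{X} = \int_0^\infty \tail{x}\d{x}$. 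I would partition $[0,\infty)$ into \mserpt{} hills and valleys (\cref{def:hills_valleys}) and, with
\begin{align*}
  \phi(x) = \frac{(\excess{x} - x\coload{x})\tail{x}}{\coload{x}^2}
\end{align*}
(note $\phi(0) = 0$, and $\phi(\infty) = 0$ once we assume $\E{X^2} < \infty$, as the lemma is otherwise trivial), reduce the lemma to proving, for each chunk $C = (a,b]$,
\begin{align*}
  \int_C \frac{x}{\coload{\y_\mserpt(x)}}\density{x}\d{x}
  - \int_C
      \frac{\excess{\z_\mserpt(x)}}{\coload{\y_\mserpt(x)}\coload{\z_\mserpt(x)}}
      \density{x}\d{x}
  \leq \phi(b) - \phi(a) + \frac{1}{\lambda}\log\frac{\coload{a}}{\coload{b}}.
\end{align*}
Summing over chunks, the $\phi$ terms telescope to $\phi(\infty) - \phi(0) = 0$ and the logarithms telescope to $\frac{1}{\lambda}\log\frac{\coload{0}}{\coload{\infty}} = \frac{1}{\lambda}\log\frac{1}{1-\rho} = \gp*{\frac{1}{\rho}\log\frac{1}{1-\rho}}\E{X}$, which is exactly the claimed bound.

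On a hill $H = (x_L, x_R]$ we have $\y_\mserpt(x) = x = \z_\mserpt(x)$, so the left-hand side is $\int_H g(x)\density{x}\d{x}$ with $g(x) = \frac{x}{\coload{x}} - \frac{\excess{x}}{\coload{x}^2}$. Integrating the two integrals by parts and using $\dd{x}\excess{x} = \lambda x\tail{x}$, $\dd{x}\coload{x} = -\lambda\tail{x}$, the terms $\lambda x\tail{x}^2/\coload{x}^2$ appearing in each cancel, the boundary term is exactly $-g(x)\tail{x} = \phi(x)$, and the leftover integral is $\int_H\gp*{\frac{\tail{x}}{\coload{x}} - \frac{2\lambda\tail{x}^2\excess{x}}{\coload{x}^3}}\d{x}$. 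Dropping the manifestly nonnegative second term and substituting $s = 1 - \coload{x}$ (so $\lambda\tail{x}\d{x} = \d{s}$) gives $\int_H\frac{\tail{x}}{\coload{x}}\d{x} = \frac{1}{\lambda}\log\frac{\coload{x_L}}{\coload{x_R}}$, which is the per-chunk bound. No property of \mserpt{} is used here; since $\E{\residence{x}} \leq x/\coload{x}$ and $\E{\waiting{x}} \geq \excess{x}/\coload{x}^2$ pointwise, the same argument handles hills of any SOAP policy.

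The valley case is where the real difficulty lies. On a valley $(u,v]$ we have $\y_\mserpt(x) \equiv u$ and $\z_\mserpt(x) \equiv v$, so the left-hand side is the simple expression $\frac{1}{\coload{u}}\int_u^v x\density{x}\d{x} - \frac{\excess{v}}{\coload{u}\coload{v}}\gp*{\tail{u} - \tail{v}}$; integrating $\int_u^v x\density{x}\d{x}$ by parts turns it into a closed expression in $u$, $v$, $\tail{u}$, $\tail{v}$, $\coload{u}$, $\coload{v}$, $\excess{v}$, and $\int_u^v\tail{x}\d{x}$. To compare it against the explicit right-hand side $\phi(v) - \phi(u) + \frac{1}{\lambda}\log\frac{\coload{u}}{\coload{v}}$ I would exploit the structure of \mserpt{} valleys: because $\rank[\mserpt]{}$ is the increasing envelope of $\rank[\serpt]{}$, the two policies share the same hills and valleys, so $u$ and $v$ are SERPT hill ages with $\rank[\serpt]{u} = \rank[\serpt]{v} \geq \rank[\serpt]{x}$ for all $x \in (u,v]$ (the argument is the one used in the proof of \cref{lem:coload}). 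Through \cref{def:serpt,def:load} these yield the identity $\int_u^v\tail{x}\d{x} = \gp*{\tail{u} - \tail{v}}\rank[\serpt]{v}$ together with the ratio bounds of \cref{lem:coload,cor:coload} relating $\coload{u}/\coload{v}$ to $\tail{u}/\tail{v}$. I expect this final comparison to be the main obstacle: it is a genuine inequality between several competing expressions, and making it work requires carefully tracking the $\coload{u}$-versus-$\coload{v}$ gap and the boundary $\phi$ terms --- the latter being precisely what the discontinuities of $\y_\mserpt$ and $\z_\mserpt$ at valley boundaries contribute. Everything else --- the integrations by parts, the telescoping, and the hill case --- is routine bookkeeping by comparison.
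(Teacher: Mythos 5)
Your setup is essentially the paper's own: you partition $\R_{\geq 0}$ into \mserpt{} hills and valleys, reduce to a per-chunk inequality, and telescope. Your $\phi(x) = \tail{x}\bigl(\excess{x}/\coload{x}^2 - x/\coload{x}\bigr)$ plays exactly the role of the boundary terms that the paper tracks via $\Delta_Q$, $\Delta_R$, and one-sided limits, and your hill case is the paper's \cref{step:delta_hill} written as an integration by parts rather than a derivative comparison; both amount to the same three derivatives cancelling and the same dropping of the nonnegative $2\lambda\tail{x}^2\excess{x}/\coload{x}^3$ term. Two small remarks: the lemma is \emph{not} trivial when $\E{X^2} = \infty$ (under a heavy tail \mserpt{}'s waiting time can be finite even with $\excess{\infty} = \infty$), but your telescoping is still fine because $\phi(\infty) = 0$ follows already from $\E{X} < \infty$ ($x\tail{x} \to 0$ and $\excess{x}\tail{x} \leq (\rho/2)\, x\tail{x} \to 0$), so you should justify $\phi(\infty) = 0$ that way rather than conditioning on a finite second moment.

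The genuine gap is the valley case, and it is not a matter of ``carefully tracking.'' You state the per-chunk target and list plausible ingredients (the endpoint identity $\int_u^v \tail{x}\,\d{x} = (\tail{u} - \tail{v})\,\rank[\serpt]{v}$, \cref{lem:coload,cor:coload}), but you do not carry out the comparison, and the ingredients you name are not the ones that actually close it. The paper's \cref{step:delta_valley} needs \cref{cor:coload} applied twice with \emph{different} endpoint pairs --- once with $(a,b) = (\y_i, \z_i)$ to control the jump in $\E{\waiting[\mserpt]{\,\cdot\,}}$ at the lower valley boundary, and once with $(a,b) = (x, \z_i)$ for \emph{every intermediate} $x \in (\y_i, \z_i)$ to replace $\coload{x}$ by a $\coload{\z_i}$-scaled expression inside the residence integral --- plus the integration-by-parts identity $\int_{\y_i}^{\z_i} \coload{x}\,\d{x} = \z_i\coload{\z_i} - \y_i\coload{\y_i} + \excess{\z_i} - \excess{\y_i}$, after which a nontrivial cancellation leaves exactly the surviving $\excess{}$-terms that match the lower bound on $\Delta_Q$. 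Nothing in your sketch shows that the SERPT-rank endpoint identity substitutes for the intermediate-$x$ use of \cref{cor:coload}, and without that step the inequality between the two competing expressions does not close. So: right framework, correct (and in fact policy-agnostic, as you note) hill argument, but the valley bound --- which is where the specific structure of \mserpt{} is actually used --- is left unproven.
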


\begin{proof}
  \label{pf:residence}
  We can partition $\R_{\geq 0}$ into \mserpt{} hills and valleys,
  meaning there exist
  \begin{align*}
    0 = \z_0 < \y_1 < \z_1 < \dots
  \end{align*}
  such that
  \begin{itemize}
  \item
    $(\y_i, \z_i]$ is an \mserpt{} valley for all $i \geq 1$,
  \item
    $(\z_i, \y_{i + 1}]$ is an \mserpt{} hill for all $i \geq 1$, and
  \item
    either $\z_0 = \y_0$ or $(\z_0, \y_1]$ is an \mserpt{} hill.
  \end{itemize}

  Let
  \begin{align*}
    \Delta_Q(a, b)
    &= \E{\waiting[\mserpt]{\min\{X, b\}}}
      - \E{\waiting[\mserpt]{\min\{X, a\}}} \\
    \Delta_R(a, b)
    &= \E{\residence[\mserpt]{\min\{X, b\}}}
      - \E{\residence[\mserpt]{\min\{X, a\}}} \\
    \Delta_{\log}(a, b)
    &= \frac{1}{\lambda}\log\frac{1}{\coload{b}}
      - \frac{1}{\lambda}\log\frac{1}{\coload{a}}.
  \end{align*}
  We wish to show
  $\Delta_R(0, \infty) \leq \Delta_Q(0, \infty) + \Delta_{\log}(0, \infty)$.
  It suffices to show that
  for each \mserpt{} hill $(\z_i, \y_{i + 1}]$,\footnote{%
    We use postfix $-$ and~$+$ to denote left and right limits,
    respectively.
    They are not needed for $\Delta_{\log}$,
    which is continuous.}
  \begin{align}
    \label{eq:delta_hill}
    \Delta_R(\z_i+, \y_{i + 1}-)
    \leq \Delta_Q(\z_i+, \y_{i + 1}-) + \Delta_{\log}(\z_i, \y_{i + 1}),
  \end{align}
  and that for each \mserpt{} valley $(\y_i, \z_i]$,
  \begin{align}
    \label{eq:delta_valley}
    \Delta_R(\y_i-, \z_i+)
    \leq \Delta_Q(\y_i-, \z_i+) + \Delta_{\log}(\y_i, \z_i).
  \end{align}
  We prove these bounds
  in \cref{step:delta_hill, step:delta_valley} below, respectively.
  In both steps we use the fact that
  \begin{align*}
    \dd{x}\E{\waiting[\mserpt]{\min\{X, x\}}}
    &= \tail{x} \cdot \dd{x}\E{\waiting[\mserpt]{x}} \\
    \dd{x}\E{\residence[\mserpt]{\min\{X, x\}}}
    &= \tail{x} \cdot \dd{x}\E{\residence[\mserpt]{x}}.
  \end{align*}

  \setcounter{step}{0}
  \begin{step}[{bound for \mserpt{} hills}]
    \label{step:delta_hill}
    We have $\y_\mserpt(x) = x = \z_\mserpt(x)$
    for all $x \in (\z_i, \y_{i + 1})$.
    Recalling \cref{def:load, def:excess},
    by \cref{prop:waiting_hills_valleys},
    \begin{align*}
      \dd{x}\Delta_Q(\z_i+, x)
      &= \dd{x}\E{\waiting[\mserpt]{\min\{X, x\}}} \\
      &= \tail{x} \cdot \dd{x}\frac{\excess{x}}{\coload{x}^2} \\
      &= \frac{\lambda x \tail{x}^2}{\coload{x}^2}
        + \frac{2\lambda \tail{x}^2 \cdot \excess{x}}{\coload{x}^3}.
    \end{align*}
    Similarly, by \cref{prop:residence_hills_valleys},
    \begin{align*}
      \dd{x}\Delta_R(\z_i+, x)
      &= \dd{x}\E{\residence[\mserpt]{\min\{X, x\}}} \\
      &= \tail{x} \cdot \dd{x}\frac{x}{\coload{x}} \\
      &= \frac{\tail{x}}{\coload{x}}
        + \frac{\lambda x \tail{x}^2}{\coload{x}^2}.
    \end{align*}
    Finally, we have
    \begin{align*}
      \dd{x}\Delta_{\log}(\z_i, x) = \frac{\tail{x}}{\coload{x}}.
    \end{align*}
    Examining the three derivatives, we see
    \begin{align*}
      \dd{x}\Delta_R(\z_i+, x)
      \leq \dd{x}\Delta_Q(\z_i+, x) + \dd{x}\Delta_{\log}(\z_i, x),
    \end{align*}
    which implies~\cref{eq:delta_hill}, as desired.
  \end{step}

  \begin{step}[{bound for \mserpt{} valleys}]
    \label{step:delta_valley}
    We have $\y_\mserpt(x) = \y_i$ and $\z_\mserpt(x) = \z_i$
    for all $x \in (\y_i, \z_i)$,
    which means
    \begin{align*}
      \dd{x}\Delta_Q(\y_i-, x) &= 0 \\
      \dd{x}\Delta_R(\y_i-, x) &= \frac{\tail{x}}{\coload{\y_i}} \\
      \dd{x}\Delta_{\log}(\y_i, x) &= \frac{\tail{x}}{\coload{x}}.
    \end{align*}
    However, we must still account for discontinuities
    at $x = \y_i$ and $x = \z_i$.

    We first prove a lower bound on $\Delta_Q(\y_i-, \z_i+)$.
    We have
    \begin{align}
      \label{eq:delta_q_progress}
      \Delta_Q(\y_i-, \z_i+)
      &= \Delta_Q(\y_i-, \y_i+) + \Delta_Q(\z_i-, \z_i+) \\
      &= \tail{y_i} \gp[\bigg]{%
          \frac{\excess{\z_i}}{\coload{\y_i} \cdot \coload{\z_i}}
          - \frac{\excess{\y_i}}{\coload{\y_i}^2}}
      \iftwocol{\\ &\qquad}{}
        + \tail{\z_i} \gp[\bigg]{%
          \frac{\excess{\z_i}}{\coload{\z_i}^2}
          - \frac{\excess{\z_i}}{\coload{\y_i} \cdot \coload{\z_i}}}.
    \end{align}
    Both terms in \cref{eq:delta_q_progress} are nonnegative
    by \cref{tab:monotonicity}.
    Applying \cref{cor:coload} with $a = \y_i$ and $b = \z_i$
    to the first term and dropping the second term yields
    \begin{align}
      \label{eq:delta_q_bound}
      \Delta_Q(\y_i-, \z_i+)
      \geq \tail{\z_i} \gp[\bigg]{%
        \frac{\excess{\z_i}}{\coload{\z_i}^2}
        - \frac{\excess{\y_i}}{\coload{\y_i} \cdot \coload{\z_i}}}.
    \end{align}

    We now turn to $\Delta_R(\y_i-, \z_i+)$.
    We have
    \begin{align}
      \label{eq:delta_r_progress}
      \Delta_R(\y_i-, \z_i+)
      &= \Delta_R(\y_i-, \y_i+) + \Delta_R(\y_i+, \z_i-)
        + \Delta_R(\z_i-, \z_i+) \\
      &= 0 + \int_{\y_i}^{\z_i} \frac{\tail{x}}{\coload{\y_i}} \d{x}
        + \tail{\z_i} \gp[\bigg]{%
          \frac{\z_i}{\coload{\z_i}} - \frac{\z_i}{\coload{\y_i}}} \\
      &= \Delta_{\log}(\y_i, \z_i)
        + \tail{\z_i} \gp[\bigg]{%
          \frac{\z_i}{\coload{\z_i}} - \frac{\z_i}{\coload{\y_i}}}
      \iftwocol{\\ &\qquad}{}
        - \int_{\y_i}^{\z_i}
          \tail{x} \gp[\bigg]{%
            \frac{1}{\coload{x}} - \frac{1}{\coload{\y_i}}}
          \d{x}.
    \end{align}
    Applying \cref{cor:coload} with $a = x$ and $b = \z_i$
    to the last term of \cref{eq:delta_r_progress} yields
    \begin{align}
      \label{eq:delta_r_bound}
      \Delta_R(\y_i-, \z_i+)
      \iftwocol{&}{}
      \leq \Delta_{\log}(\y_i, \z_i)
        + \tail{\z_i} \gp[\bigg]{%
          \frac{\z_i}{\coload{\z_i}} - \frac{\z_i}{\coload{\y_i}}}
      \iftwocol{\\ &\qquad}{}
        - \int_{\y_i}^{\z_i}
          \tail{\z_i} \gp[\bigg]{%
            \frac{1}{\coload{\z_i}}
            - \frac{\coload{x}}{\coload{\y_i} \cdot \coload{\z_i}}}
          \d{x}.
    \end{align}
    Using integration by parts one can compute
    \begin{align*}
      \int_{\y_i}^{\z_i} \coload{x} \d{x}
      = \z_i\coload{\z_i} - \y_i\coload{\y_i} + \excess{\z_i} - \excess{\y_i}.
    \end{align*}
    Substituting this into \cref{eq:delta_r_bound} causes many terms to cancel,
    leaving
    \begin{align*}
      \Delta_R(\y_i-, \z_i+)
      \leq \Delta_{\log}(\y_i, \z_i)
        + \tail{\z_i} \frac{
            \excess{\z_i} - \excess{\y_i}}{%
            \coload{\y_i} \cdot \coload{\z_i}}
    \end{align*}
    which combined with \cref{tab:monotonicity, eq:delta_q_bound}
    implies~\cref{eq:delta_valley}, as desired.
    \qedhere
  \end{step}
\end{proof}

\section{Additional Implications of \mserpt{}'s Approximation Ratio}
\label{sec:implications}

In this section we discuss additional implications of
the fact that \mserpt{} is a constant-factor approximation of Gittins,
resolving two open questions in M/G/1 scheduling theory.
\Cref{sub:fb_imrl} addresses the performance of FB
for job size distributions with the
\emph{increasing mean residual lifetime} (IMRL) property,
and \cref{sub:mlps} addresses the performance achievable
by policies in the \emph{multilevel processor sharing} (MLPS) class.

\subsection{Performance of FB for IMRL Job Size Distributions}
\label{sub:fb_imrl}

\begin{definition}
  \label{def:imrl}
  A job size distribution~$X$ has the
  (strictly) \emph{increasing mean residual lifetime} (IMRL) property
  if a job's expected remaining size $\E{X - a \given X > a}$
  is (strictly) increasing in its age~$a$.
\end{definition}

Consider the setting of an M/G/1 with an IMRL job size distribution.
In this IMRL setting, the greater a job's age,
the greater its expected remaining size.
We therefore might expect that the FB policy,
which prioritizes jobs of lower age,
would yield low mean response time.
In fact, it was believed for some time that
FB was optimal for the IMRL setting \citep{dhr_dmrl_optimality_righter}.
However, \citet{fb_nonoptimality_aalto} found a flaw in the proof,
along with a counterexample IMRL job size distribution for which
FB is not optimal.
While \citet{fb_nonoptimality_aalto} show that
FB has lower mean response time than PS in the IMRL setting,
whether FB is close to optimal for the IMRL setting
is an open question.

The following corollary resolves this question
for the case of strictly IMRL job size distributions.
It turns out that \mserpt{} and FB are equivalent in this case,
because the strictly IMRL property implies
\mserpt{}'s rank function is strictly increasing, just like FB's.
This means FB has the same approximation ratio as \mserpt{}
for strictly IMRL job size distributions.

\begin{corollary}
  For the problem of preemptive scheduling to minimize mean response time
  in an M/G/1 queue with unknown job sizes,
  if the job size distribution is strictly IMRL,
  FB is a constant-factor approximation.
\end{corollary}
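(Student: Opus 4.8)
The plan is to leverage the fact that, for a strictly IMRL job size distribution, \mserpt{} coincides with FB, so that the bound of \cref{thm:approximation_ratio} transfers directly. First I would observe that the strictly IMRL property makes SERPT's rank function strictly increasing: by \cref{def:serpt}, $\rank[\serpt]{a} = \E{X - a \given X > a}$ is exactly a job's expected remaining size at age~$a$, which \cref{def:imrl} makes strictly increasing in~$a$. Hence the increasing envelope in \cref{def:mserpt} is vacuous, $\rank[\mserpt]{a} = \max_{0 \leq b \leq a}\rank[\serpt]{b} = \rank[\serpt]{a}$ for every~$a$, so $\rank[\mserpt]{}$ is itself strictly increasing. (The word \emph{strictly} is essential here: under plain IMRL, $\rank[\mserpt]{}$ could be constant on an interval --- a valley --- on which \mserpt{} would not behave like FB.)

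Next I would argue that a SOAP policy with a strictly increasing rank function is FB. Because rank is strictly increasing in age, among the jobs in system the one of minimum rank is always the one of minimum age, which is precisely the FB scheduling rule; equivalently, $\rank[\mserpt]{}$ induces the same priority order on ages as FB's rank function (age itself, under which FB is a SOAP policy \citep{soap_scully}), and the SOAP response time distribution depends only on this order apart from the handling of ties. The one formal gap is that \mserpt{}, like every SOAP policy, breaks age ties in FCFS order, whereas FB as defined shares the server among tied jobs; since FB's response time distribution is insensitive to the tie-breaking rule, this does not matter, and we obtain $\response[\mserpt]{} = \response[\textsf{FB}]{}$ in distribution, hence $\E{\response[\mserpt]{}} = \E{\response[\textsf{FB}]{}}$.

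Finally, \cref{thm:approximation_ratio} gives $\E{\response[\mserpt]{}} \leq 5\,\E{\response[\gittins]{}}$ --- with the sharper load-dependent constants of \cref{cor:approximation_ratio} also available --- and Gittins minimizes mean response time, so the previous paragraph yields $\E{\response[\textsf{FB}]{}} \leq 5\,\E{\response[\gittins]{}}$, and FB inherits the same guarantees as \mserpt{} on strictly IMRL distributions. I expect the only delicate point to be the tie-breaking step: justifying that the FCFS ties of the SOAP realization of FB give the same mean response time as the processor-sharing ties in the stated definition of FB. Everything else is immediate from the definitions together with \cref{thm:approximation_ratio}.
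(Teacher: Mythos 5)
Your proposal matches the paper's own reasoning: strictly IMRL makes $\rank[\serpt]{}$ strictly increasing, so the envelope in \cref{def:mserpt} is trivial and $\rank[\mserpt]{}$ is strictly increasing, which makes \mserpt{} equivalent to FB; the bound then transfers from \cref{thm:approximation_ratio}. You are somewhat more careful than the paper about the FCFS-versus-processor-sharing tie-breaking issue, but the paper implicitly subsumes this under the SOAP framework's standard treatment of FB \citep{soap_scully}, so your argument is essentially the paper's.
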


\subsection{Performance Achievable by MLPS Policies}
\label{sub:mlps}

\emph{Multilevel processor sharing} (MLPS) policies
are a class of preemptive scheduling policies
introduced by \citet{book_kleinrock}.
An MLPS policy is specified by a list of threshold ages
$0 = a_0, a_1, a_2, \dots{}$,
where interval $[a_i, a_{i + 1}]$ is the \emph{$i$th level}.
Jobs with ages in lower levels have priority over those in higher levels,
and within each level,
jobs are scheduled using one of FCFS, FB, or PS.
While we know how to analyze the mean response time of any MLPS policy
\citep{mlps_analysis_kleinrock, book_kleinrock, mlps_analysis_guo},
\emph{optimizing} an MLPS policy,
meaning choosing the threshold ages and scheduling policies within each level
to minimize mean response time,
is an open problem \citep{ps_beyond_aalto, mlps_two-level_aalto}.\footnote{%
  We note that Gittins is the solution for the special case
  where all jobs are present at the start,
  because without arrivals,
  any SOAP policy, including Gittins \citep{mlps_gittins_aalto},
  acts like an MLPS policy based on its hills and valleys.}

The following corollary takes a major step towards solving this problem.
It turns out that \mserpt{} is an MLPS policy:
its levels are the hills and valleys,
with FB used within each hill and FCFS used within each valley.
While \mserpt{} is not always the optimal MLPS policy,
we know it performs within a constant factor of Gittins.

\begin{corollary}
  For any job size distribution,
  there exists an MLPS policy, namely \mserpt{},
  with mean response time a constant factor times that of Gittins.
\end{corollary}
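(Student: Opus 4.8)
The plan is to show directly that \mserpt{}, viewed as a SOAP policy, obeys the scheduling rule of a particular MLPS policy, and then to read off the constant factor from \cref{thm:approximation_ratio}. The MLPS level structure I would use is exactly the hill/valley decomposition of $\R_{\geq 0}$: taking the thresholds $0 = \z_0 < \y_1 < \z_1 < \y_2 < \dots$ as in the proof of \cref{lem:residence}, each hill $(\z_i, \y_{i+1}]$ and each valley $(\y_i, \z_i]$ becomes one level, with FB assigned to every hill level and FCFS to every valley level.

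To verify that this MLPS policy is \mserpt{}, recall from \cref{def:mserpt} that \mserpt{} is the SOAP policy with the increasing, piecewise monotonic rank function $\rank[\mserpt]{} = \rankup[\serpt]{}$, and that a SOAP policy always serves a job of minimum rank, breaking ties in FCFS order. I would check three points. First, cross-level priority: since $\rank[\mserpt]{}$ is increasing, a job whose age lies in a lower level has rank no larger than one whose age lies in a higher level, so lower levels have (weak) priority, and any tie across a hill/valley boundary is broken FCFS, matching MLPS. Second, within a hill $\rankup[\serpt]{}$ is strictly increasing (this is the defining property of a hill, \cref{def:hill_age_valley_age}), so among jobs whose ages lie in the same hill, minimum rank coincides with minimum age and SOAP serves the least-aged job --- exactly FB. Third, within a valley $\rankup[\serpt]{}$ is constant, so all such jobs share a rank and SOAP's FCFS tie-break serves them in arrival order --- exactly FCFS. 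Hence \mserpt{} is an MLPS policy, and the conclusion follows immediately: \cref{thm:approximation_ratio} (equivalently \cref{cor:approximation_ratio}) gives $\E{\response[\mserpt]{}} \leq 5\,\E{\response[\gittins]{}}$ at every load, so this MLPS policy has mean response time within a constant factor of Gittins's.

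The substance of the argument is the second paragraph, and the delicate part is matching the SOAP rule to the MLPS rule around level boundaries and in edge cases. One must confirm that a job ``graduating'' from level $i$ to level $i+1$ in the MLPS sense occurs precisely when its age crosses a threshold $\z_i$ or $\y_{i+1}$; that the degenerate cases flagged in \cref{sub:hills_valleys_defs} (such as $\z_0 = \y_1$, or $0$ being a hill age but not a hill size when $\z(0+) > 0$) produce only empty levels that can be discarded; and that the only discrepancy with the textbook description of MLPS --- SOAP's FCFS tie-break versus FB's even splitting among jobs of exactly equal age --- affects at most a measure-zero set of sample paths and hence neither the response time distribution nor its mean. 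All of these checks use only the continuity of the rank function, so that $\rankup[\serpt]{}$ attains a valley's constant value at both of its endpoints, and none of them touches the main estimate, which is supplied in full by \cref{thm:approximation_ratio}.
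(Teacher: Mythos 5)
Your proposal takes the same route as the paper: the paper states, immediately before this corollary, that \mserpt{} is an MLPS policy whose levels are its hills and valleys, with FB used on hills and FCFS on valleys, and then lets the corollary follow from \cref{thm:approximation_ratio}. Your write-up simply fills in the verification that the SOAP scheduling rule for the increasing rank function $\rank[\mserpt]{}$ matches the MLPS rule level by level, which the paper treats as an observation rather than a formal proof, so the two arguments coincide in substance.
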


Combining this with results on the RMLF policy \citep{rmlf_zwart}
implies the following additional corollary.\footnote{%
  RMLF resembles an MLPS policy,
  but it is not one because it uses randomization.}

\begin{corollary}
  For any job size distribution,
  there exists an MLPS policy, namely \mserpt{},
  whose mean response time ratio compared to SRPT
  is at most $O(\log(1/(1-\rho)))$ in the $\rho \to 1$ limit.
\end{corollary}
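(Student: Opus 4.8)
The plan is to compose three facts, so that essentially no new technical work is needed. (i) By the preceding corollary---equivalently, \cref{thm:approximation_ratio} together with the observation of \cref{sub:mlps} that \mserpt{} is an MLPS policy---we have $\E{\response[\mserpt]{}} \leq 5\,\E{\response[\gittins]{}}$ at every load~$\rho$. (ii) Gittins minimizes mean response time among all scheduling policies for the M/G/1 with unknown job sizes \citep{m/g/1_gittins_aalto}, and RMLF is admissible in that setting because it uses neither job sizes nor the job size distribution, so $\E{\response[\gittins]{}} \leq \E{T_{\policyname{RMLF}}}$. (iii) An M/G/1 is the special case of a GI/GI/1 with Poisson arrivals and i.i.d.\ sizes, so the result of \citet{rmlf_zwart} recalled in \cref{sub:related_work} applies and gives $\E{T_{\policyname{RMLF}}} = O(\log(1/(1-\rho)))\,\E{T_{\policyname{SRPT}}}$ in the $\rho \to 1$ limit.

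Chaining these three inequalities yields
\begin{align*}
  \E{\response[\mserpt]{}}
  &\leq 5\,\E{\response[\gittins]{}}
  \leq 5\,\E{T_{\policyname{RMLF}}} \\
  &= O\gp[\bigg]{\log\frac{1}{1 - \rho}}\E{T_{\policyname{SRPT}}},
\end{align*}
which is exactly the claimed asymptotic ratio, and the MLPS structure of \mserpt{} is supplied by~(i). The reason for routing through RMLF, rather than comparing \mserpt{} or Gittins to SRPT directly, is that we have no direct handle on the Gittins-to-SRPT ratio: \cref{prop:response_srpt} only lower-bounds $\E{T_{\policyname{SRPT}}}$, which is the wrong direction. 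The bound of \citet{rmlf_zwart} bridges precisely the gap between a Gittins-admissible policy and SRPT, and Gittins's optimality lets us stack \mserpt{}'s constant-factor bound on top of it.

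I expect no genuine obstacle here; all the substance lives in \cref{thm:approximation_ratio} and the already-established fact that \mserpt{} is an MLPS policy. The one point meriting care is checking that steps~(ii) and~(iii) are stated in the same model: Gittins's optimality \citep{m/g/1_gittins_aalto} is for the M/G/1 with unknown sizes, where RMLF is a legitimate policy, so $\E{\response[\gittins]{}} \leq \E{T_{\policyname{RMLF}}}$ is valid; and Poisson-plus-i.i.d.\ input is GI/GI/1 renewal input, so the asymptotic ratio of \citet{rmlf_zwart} transfers verbatim. I would also note explicitly that the $O(\cdot)$, the precise sense of the $\rho \to 1$ limit, and any regularity conditions on~$X$ are inherited directly from \citet{rmlf_zwart}, so that the corollary's ``for any job size distribution'' claims no more than the cited result delivers.
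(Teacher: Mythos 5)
Your proof is correct and matches the paper's intended argument exactly: the paper leaves the proof implicit (``Combining this with results on the RMLF policy \citep{rmlf_zwart} implies the following additional corollary''), and the chain $\E{\response[\mserpt]{}} \leq 5\E{\response[\gittins]{}} \leq 5\E{T_{\policyname{RMLF}}} = O(\log(1/(1-\rho)))\E{T_{\policyname{SRPT}}}$ is precisely that combination. Your care in checking that RMLF is admissible in the class over which Gittins is optimal, and that M/G/1 is a special case of GI/GI/1, are the right points to flag.
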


\section{Lower Bound on \mserpt's Approximation Ratio}
\label{sec:pathological}

We have shown that \mserpt{} is a $5$\=/approximation
for minimizing mean response time.
The natural followup question is: what case is worst for \mserpt{}?
We have yet to find a scenario
in which \mserpt{} performs $5$~times worse than Gittins.
Instead, the largest ratio we have observed so far is~$2$.
This occurs with the following pathological job size distribution,
where $\delta \in (0, 1)$ is small:
\begin{align*}
  X =
  \begin{cases}
    1 - \delta & \text{w.p. } 1 - \delta \\
    1 & \text{w.p. } \delta - \delta^2 \\
    \delta^{-1} + 1 & \text{w.p. } \delta^2.
  \end{cases}
\end{align*}
That is, nearly all jobs are size $1 - \delta$,
and nearly all the rest are size~$1$.

How do the \mserpt{} and Gittins rank functions differ for~$X$?
Computing ranks using \cref{def:gittins, def:mserpt}, we find
\begin{alignat*}{2}
  \rank[\mserpt]{0}
  <{}& \rank[\mserpt]{1 - \delta}
  &&< \rank[\mserpt]{1} \\
  \rank[\gittins]{1 - \delta}
  <{}& \centermathcell{\rank[\gittins]{0}}
  &&< \rank[\gittins]{1}
\end{alignat*}
In terms of hills and valleys,
both \mserpt{} and Gittins have a hill age at~$1$,
but \mserpt{} has an additional hill age at $1 - \delta$.
But \mserpt{}'s extra hill age increases mean response time:
a job of age $1 - \delta$ will almost always finish
with just~$\delta$ more work,
so it would be better to give those jobs priority over jobs at age~$0$.
Gittins does not make this mistake.

We now compute the mean response times of \mserpt{} and Gittins
for a system with job size distribution~$X$.
Suppose the load is $\rho = 1 - \epsilon$,
where $\epsilon \in (0, 1)$ is small.
We have
\begin{align*}
  \coload{0} &= 1 &
  \coload{1 - \delta} &\approx \delta + \epsilon &
  \coload{1} &\approx \delta + \epsilon &
  \coload{\infty} &= \epsilon \\
  \excess{0} &= 0 &
  \excess{1 - \delta} &\approx \tfrac{1}{2} &
  \excess{1} &\approx \tfrac{1}{2} &
  \excess{\infty} &\approx 1,
\end{align*}
where the approximations assume $\delta, \epsilon \ll 1$.
By \cref{prop:waiting_hills_valleys, prop:residence_hills_valleys},
the mean response time of \mserpt{} is
\begin{align*}
  \E{\response[\mserpt]{}}
  &\approx \E{\response[\mserpt]{1 - \delta}}
    + \delta\E{\response[\mserpt]{1}}
    + \delta^2\E{\response[\mserpt]{\delta^{-1} + 1}} \\
  &= \gp[\bigg]{
      \frac{\excess{1 - \delta}}{\coload{0} \cdot \coload{1 - \delta}}
      + \frac{1 - \delta}{\coload{0}}
    } + \delta\gp[\bigg]{
      \frac{\excess{1}}{\coload{1 - \delta} \cdot \coload{1}}
      + \frac{1}{\coload{1 - \delta}}
    } \iftwocol{\\ &\qquad}{} + \delta^2\gp[\bigg]{
      \frac{\excess{\infty}}{\coload{1} \cdot \coload{\infty}}
      + \frac{\delta^{-1} + 1}{\coload{1}}
    } \\
  &\approx \gp[\bigg]{
      \frac{\tfrac{1}{2}}{\delta + \epsilon}
      + 1 - \delta
    } + \delta\gp[\bigg]{
      \frac{\tfrac{1}{2}}{(\delta + \epsilon)^2}
      + \frac{1}{\delta + \epsilon}
    } \iftwocol{\\ &\qquad}{} + \delta^2\gp[\bigg]{
      \frac{1}{\epsilon \cdot (\delta + \epsilon)}
      + \frac{\delta^{-1} + 1}{\delta + \epsilon}
    } \\
  &\approx \frac{1}{2(\delta + \epsilon)}\gp[\bigg]{
      \frac{2\delta + \epsilon}{\delta + \epsilon}
      + \frac{2\delta^2}{\epsilon}
    }.
\end{align*}
We now analyze the mean response time of Gittins.
One can show using the full SOAP analysis \citep{soap_scully}
that when $\delta, \epsilon \ll 1$,
\cref{prop:waiting_hills_valleys, prop:residence_hills_valleys}
give approximate equalities for Gittins, so
\begin{align*}
  \E{\response[\gittins]{}}
  &\approx \E{\response[\gittins]{1 - \delta}}
    + \delta\E{\response[\gittins]{1}}
    + \delta^2\E{\response[\gittins]{\delta^{-1} + 1}} \\
  &\approx \gp[\bigg]{
      \frac{\excess{1}}{\coload{0} \cdot \coload{1}}
      + 1 - \delta
    } + \delta\gp[\bigg]{
      \frac{\excess{1}}{\coload{0} \cdot \coload{1}}
      + 1
    } \iftwocol{\\ &\qquad}{} + \delta^2\gp[\bigg]{
      \frac{\excess{\infty}}{\coload{1} \cdot \coload{\infty}}
      + \delta^{-1} + 1
    } \\
  &\approx \gp[\bigg]{
      \frac{\tfrac{1}{2}}{\delta + \epsilon}
      + 1 - \delta
    } + \delta\gp[\bigg]{
      \frac{\tfrac{1}{2}}{\delta + \epsilon}
      + 1
    } + \delta^2\gp[\bigg]{
      \frac{1}{\epsilon \cdot (\delta + \epsilon)}
      + \delta^{-1} + 1
    } \\
  &\approx \frac{1}{2(\delta + \epsilon)}\gp[\bigg]{
      1 + \frac{2\delta^2}{\epsilon}
    }.
\end{align*}
This makes the mean response time ratio approximately
\begin{align*}
  \frac{\E{\response[\mserpt]{}}}{\E{\response[\gittins]{}}}
  \approx \frac{
      2\delta^3 + 2\delta\epsilon + \epsilon^2
    }{
      2\delta^3 + \delta\epsilon + \epsilon^2
    }.
\end{align*}
This ratio is at most~$2$,
and it can approach $2$ in any limit where the $\delta\epsilon$ term dominates.
This happens if we set $\epsilon = \delta^{3/2}$ in the $\delta \to 0$ limit,
so \mserpt{}'s approximation ratio is at least~$2$.

\section{Why Closing the Gap is Hard}
\label{sec:tightening}

In preliminary numerical studies,
omitted for lack of space,
we have computed the mean response time ratio
between \mserpt{} and Gittins for a variety of job size distributions.
We have yet to observe a ratio greater than~$2$,
with \cref{sec:pathological} describing the worst case we have found,
motivating the following conjecture.

\begin{conjecture}
  \label{conj:approximation_ratio}
  For the problem of preemptive scheduling to minimize mean response time
  in an M/G/1 queue with unknown job sizes,
  the approximation ratio of \mserpt{} is~$2$.
\end{conjecture}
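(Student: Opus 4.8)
Proving \cref{conj:approximation_ratio} amounts to sharpening the upper bound of \cref{thm:approximation_ratio} from~$5$ to~$2$, since the matching lower bound is already established in \cref{sec:pathological}. The factor~$5$ is lossy because the proof of \cref{thm:approximation_ratio} first bounds $\E{\residence[\mserpt]{}}$ by $\E{\waiting[\mserpt]{}}$ plus an SRPT term (\cref{lem:residence}) and then separately bounds $\E{\waiting[\mserpt]{}}$ by $2\E{\waiting[\gittins]{}}$ (\cref{lem:waiting}), in effect paying for the waiting time twice. Instead, the plan is to bound the response time ratio $\E{\response[\mserpt]{}}/\E{\response[\gittins]{}}$ directly: writing each mean response time as $\int_0^\infty \E{\response{x}} \density{x} \d{x}$ with $\E{\response{x}} = \E{\waiting{x}} + \E{\residence{x}}$, split the integration region into Gittins hills and Gittins valleys as in the proof of \cref{lem:waiting}, and prove that the ratio of the \mserpt{} and Gittins contributions is at most~$2$ on each chunk.

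The two chunk types present different difficulties. On a Gittins valley $(u, v]$, \cref{lem:hill_subset} guarantees that \mserpt{}'s sub-hills and sub-valleys refine $(u, v]$ rather than straddle it, so \cref{prop:waiting_hills_valleys, prop:residence_hills_valleys} give exact formulas for $\E{\waiting[\mserpt]{x}}$ and $\E{\residence[\mserpt]{x}}$; for Gittins we have $\E{\waiting[\gittins]{x}} \geq \excess{v}/(\coload{u}\coload{v})$ by \cref{prop:waiting_hills_valleys}, but because Gittins's rank function is not monotonic, \cref{prop:residence_hills_valleys} only \emph{upper}-bounds $\E{\residence[\gittins]{x}}$. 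The missing ingredient is a matching \emph{lower} bound on Gittins's residence time over a valley --- morally, that work accumulated inside a valley adds at least its own amount to residence time --- which, combined with \cref{cor:coload} and an $x_*$-splitting in the spirit of the proof of \cref{lem:waiting}, should reduce each Gittins-valley chunk to a one-parameter inequality in $\zeta = \tail{v}/\tail{u}$ to be checked at most~$2$. On a Gittins hill size~$x$ the obstacle is subtler: there $\E{\waiting[\mserpt]{x}} \leq \E{\waiting[\gittins]{x}}$, but $\E{\residence[\mserpt]{x}} \geq \E{\residence[\gittins]{x}}$ because \cref{prop:residence_hills_valleys} is an equality for \mserpt{} yet only an upper bound for Gittins, so the residence comparison can run the wrong way. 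One therefore needs a new two-sided estimate showing that, averaged over the hill, Gittins's waiting-time excess over $\excess{x}/\coload{x}^2$ offsets its residence-time deficit below $x/\coload{x}$ --- equivalently, that $\E{\response[\gittins]{x}}$ is at least about half of $\E{\response[\mserpt]{x}} = \excess{x}/\coload{x}^2 + x/\coload{x}$ on each Gittins hill. I expect this hill-chunk bound, rather than the valley-chunk bound, to be the main obstacle, since the Gittins lower bounds currently available --- \cref{prop:residence_size, prop:response_srpt} and $\E{\residence[\gittins]{x}} \geq x$ --- are far too weak when $\coload{x}$ is small.

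A final structural issue is that the target leaves no slack: the distribution of \cref{sec:pathological} with $\rho = 1 - \delta^{3/2}$ drives a single Gittins-valley chunk to ratio~$2$ as $\delta \to 0$, so every per-chunk inequality must be essentially tight, and a successful proof will likely have to be organized around a potential that vanishes exactly on that extremal configuration --- one \mserpt{} sub-hill splitting a Gittins valley with the tail mass concentrated as in the example. Accordingly, before committing to the full program, I would first stress-test \cref{conj:approximation_ratio} numerically against nested and rescaled variants of the \cref{sec:pathological} family and against heavy-tailed distributions, where residence time is not negligible, since it remains conceivable that the true approximation ratio lies strictly between~$2$ and~$5$ --- in which case both the strategy above and the conjectured constant would need to be revised.
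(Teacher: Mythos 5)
This statement is a \emph{conjecture}, not a theorem: the paper deliberately leaves it unproved, and \cref{sec:tightening} is devoted to explaining why the gap between the lower bound of~$2$ and the upper bound of~$5$ is hard to close. Your ``proof'' is therefore necessarily a research plan rather than a proof, and to your credit you present it as such --- you flag the missing lower bound on Gittins's residence time over a valley as ``the missing ingredient,'' you flag the hill-chunk residence comparison as possibly ``running the wrong way,'' and you close by proposing to stress-test the conjecture numerically before committing. None of those gaps is filled, so this does not establish the statement.

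That said, your diagnosis of the obstacles tracks the paper's own discussion in \cref{sec:tightening} quite closely, and in one respect goes a bit further. Your observation that the published proof ``pays for waiting time twice'' (once via \cref{lem:waiting} and once inside \cref{lem:residence}) is exactly why the constant inflates to~$5$. Your identification of the key missing piece --- a \emph{lower} bound on $\E{\residence[\gittins]{}}$ over a valley, since \cref{prop:residence_hills_valleys} only gives an upper bound for a nonmonotonic rank function --- is precisely the obstacle the paper names in \cref{sub:tightening_lower}. Where you add something is the proposal to chunk $\E{\response{}}$ (rather than $\E{\waiting{}}$ and $\E{\residence{}}$ separately) by Gittins hills and valleys, aiming for a per-chunk ratio of~$2$, and the observation that the \cref{sec:pathological} example already saturates the target on a single Gittins-valley chunk, so any successful argument must be tight per chunk. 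That is a sharper framing of the difficulty than \cref{sec:tightening} gives, but it is still a framing, not a proof: until you can actually produce the Gittins residence lower bound and the two-sided hill estimate, \cref{conj:approximation_ratio} remains open, exactly as the paper states.
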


The lower bound of~$2$ on \mserpt{}'s approximation ratio is less than
the upper bound of~$5$ from \cref{thm:approximation_ratio}.
What would it take to close the gap?
Recall from \cref{fig:outline} that we prove \cref{thm:approximation_ratio}
by combining the four following bounds.
The main obstacle to closing the gap is that
\emph{each of the four bounds is tight in some setting}.
\begin{enumerate}[(i)]
\item
  \label{item:waiting}
  \Cref{lem:waiting} gives an upper bound on
  $\E{\waiting[\mserpt]{}}/\E{\waiting[\gittins]{}}$.
  \begin{itemize}[\labelitemii]
  \item
    It is tight for the scenario described in
    \cref{sec:pathological}.
  \end{itemize}
\item
  \label{item:residence}
  \Cref{lem:residence} gives an upper bound on $\E{\residence[\mserpt]{}}$.
  \begin{itemize}[\labelitemii]
  \item
    It is tight in the $\rho \to 1$ limit
    for Pareto job size distributions
    with shape parameter $\alpha \approx 1$ \citep{fb_heavy_zwart}.
  \end{itemize}
\item
  \label{item:residence_size}
  \Cref{prop:residence_size} gives a lower bound on
  $\E{\residence[\gittins]{}}$.
  \begin{itemize}[\labelitemii]
  \item
    It is tight when Gittins is equivalent to FCFS,
    which occurs for some job size distributions
    \citep{m/g/1_gittins_aalto}.
  \end{itemize}
\item
  \label{item:response_srpt}
  \Cref{prop:response_srpt} gives a lower bound on $\E{\response[\gittins]{}}$.
  \begin{itemize}[\labelitemii]
  \item
    It is tight in the $\rho \to 1$ limit
    for Pareto job size distributions
    with shape parameter $\alpha \approx 1$ \citep{fb_heavy_zwart}.
  \end{itemize}
\end{enumerate}
The fact that each bound is tight
means that tightening \cref{thm:approximation_ratio}
requires new insight.

\newcommand{\bounds}{%
  bounds~\cref{%
    item:waiting,
    item:residence,
    item:residence_size,
    item:response_srpt}}

Although \bounds{} are all tight,
they are tight in \emph{different settings},
meaning for different loads~$\rho$ and job size distributions~$X$.
This hints at a possible approach
to tightening \cref{thm:approximation_ratio}:
we could refine \bounds{} in a way that makes them
more sensitive to the setting, especially the job size distribution.
As an example of what this might mean,
the settings in which bounds~\cref{item:waiting, item:residence_size} are tight
have $\Var{X^2} < \infty$,
while those in which bounds~\cref{item:residence, item:response_srpt} are tight
have $\Var{X^2} = \infty$.
Thus, we might be able to improve on \cref{thm:approximation_ratio}
if we refine each of \bounds{}
by ``conditioning'', meaning splitting into cases,
on whether $\Var{X}$ is finite.

With that said, we suspect that refining \bounds{}
is more involved than simply conditioning on whether $\Var{x} = \infty$.
In the rest of this section we review each bound,
explain the settings in which they are tight in more detail,
and discuss opportunities for refining or replacing them.

\subsection{Tightening the \mserpt{} Upper Bounds}

We begin with bound~\cref{item:waiting}, \cref{lem:waiting},
which implies $\E{\waiting[\mserpt]{}} \leq 2\E{\waiting[\gittins]{}}$.
This bound is tight for the scenario described in \cref{sec:pathological}.
To find opportunities for tightening,
recall that the proof of \cref{lem:waiting} works
by looking at one valley at a time,
showing a ratio bound for each valley separately.
When proving the bound for valley $(u, v]$,
we use the fact that $\coload{u} \leq 1$,\footnote{%
  Specifically, we apply \cref{lem:coload} with $a = u$,
  and \cref{lem:coload}'s proof uses $\coload{a} \leq 1$.}
but this is tight for at most one valley.
In the job size distribution from \cref{sec:pathological},
nearly every job's size is in a valley with $\coload{u} = 1$,
which is why \cref{lem:waiting} is tight in that scenario.
But many job size distributions do not have nearly all job sizes in one valley.
We could perhaps refine \cref{lem:waiting}
by conditioning on a parameter related to valleys,
such as a bound $\zeta \in [0, 1]$ such that
$\P{X \in (u, v]} \leq \zeta$ for all valleys $(u, v]$.

Bound~\cref{item:residence}, \cref{lem:residence}, says
$\E{\residence[\mserpt]{}} \leq \E{\waiting[\mserpt]{}} + \ell_\rho$,
where
\begin{align*}
  \ell_\rho = \gp[\bigg]{\frac{1}{\rho}\log\frac{1}{1 - \rho}}\E{X}.
\end{align*}
\Cref{lem:residence} can be tight in the $\rho \to 1$ limit
when $X$ has a Pareto job size distribution.
For shape parameter $\alpha \in (1, 2)$,
if $\tail{x} = (1 + x)^{-\alpha}$,
a result of \citet[Section~4.2.1]{fb_heavy_zwart} implies\footnote{%
  \Citet{fb_heavy_zwart} consider the FB policy,
  but \mserpt{} and FB are equivalent for this job size distribution
  because it has the IMRL property (\cref{def:imrl}).}
\begin{align}[c]
  \label{eq:pareto_heavy}
  \E{\waiting[\mserpt]{}}
  &\approx \frac{\alpha(\alpha - 1)}{2 - \alpha} \cdot \ell_\rho \\
  \E{\residence[\mserpt]{}}
  &\approx \alpha \cdot \ell_\rho
\end{align}
as $\rho \to 1$.
This means the tightness of \cref{lem:residence} in the $\rho \to 1$ limit
depends on~$\alpha$:
it is tight for $\alpha \approx 1$ but extremely loose for $\alpha \approx 2$.
Similar reasoning shows the bound is also loose for $\alpha > 2$
\citep[Section~4.1.1]{fb_heavy_zwart}.
This suggests that we could try to refine \cref{lem:residence}
by conditioning on the tail behavior of~$X$.
A concrete opportunity for tightening is in
\cref{step:delta_hill} of the proof:
the difference between the two sides of the final inequality is
$2 \lambda \tail{x}^2 \cdot \excess{x} / \coload{x}^3$,
whose contribution is negligible for $\alpha \approx 1$
but dominates for larger~$\alpha$ \citep{fb_heavy_zwart}.
\Cref{step:delta_valley} of the proof has a similar opportunity,
but the difference term is more complicated.
Another obstacle to this approach
is the lack of results in the style of \citet{fb_heavy_zwart}
that hold outside the $\rho \to 1$ limit.

\subsection{Tightening the Gittins Lower Bounds}
\label{sub:tightening_lower}

Bound~\cref{item:residence_size}, \cref{prop:residence_size},
gives a trivial lower bound
on Gittins's mean residence time,
namely $\E{\residence[\gittins]{}} \geq \E{X}$.
But even this trivial bound is tight for some job size distributions,
namely those with the
\emph{new better than used in expectation} property
\citep{m/g/1_gittins_aalto}.
This is because the Gittins policy is equivalent to FCFS for such distributions
\citep{m/g/1_gittins_aalto},
and FCFS has mean residence time $\E{X}$.
However, a result of \citet[Proposition~9]{mlps_gittins_aalto} implies that
if Gittins is equivalent to FCFS for some distribution~$X$,
then \mserpt{} is also equivalent to FCFS.
That is, when Gittins has very low residence time, so does \mserpt{}.
This hints that what we would really like is
a direct bound on $\E{\residence[\mserpt]{}}/\E{\residence[\gittins]{}}$.
Unfortunately,
the residence time formula in \cref{prop:residence_hills_valleys}
gives an upper bound on $\E{\residence[\gittins]{}}$,
whereas we need a lower bound.
Even if we could bound the gap between
$\E{\residence[\gittins]{}}$
and the upper bound in \cref{prop:residence_hills_valleys},
bounding $\E{\residence[\mserpt]{}}/\E{\residence[\gittins]{}}$
would likely still be at least as challenging as proving \cref{lem:waiting}.

We finally turn to bound~\cref{item:response_srpt}, \cref{prop:response_srpt},
which is a corollary of a result of
\citet[Theorem~5.8]{smart_insensitive_wierman}.
It says $\E{T} \geq \ell_\rho$ for \emph{any} scheduling policy,
including size-based policies like SRPT.
Despite this, by \cref{eq:pareto_heavy},
\cref{prop:response_srpt} is tight in the $\rho \to 1$ limit
when $X$ has a Pareto job size distribution
with shape parameter $\alpha \approx 1$.
We are not aware of any other simple lower bound on SRPT's mean response time
that holds for all job size distributions.
One possibility for refining the bound would be to
parametrize them along similar lines as further results of
\citet[Theorems~5.4, 5.7, and~5.9]{smart_insensitive_wierman}.
Of course, we would prefer a bound that holds
only for policies that, like Gittins, do not use job size information,
but we suspect such a result requires new techniques.

\section{Conclusion}

We introduce \mserpt{},
the first non-Gittins policy
proven to achieve mean response time within a constant factor of Gittins's.
Specifically, we show that \mserpt{} is a $5$\=/approximation of Gittins,
with an even smaller approximation ratio at lower loads
(\cref{thm:approximation_ratio}).
In addition to being an important result in its own right,
the fact that \mserpt{} has near-optimal mean response time
resolves two open questions in M/G/1 scheduling theory
(\cref{sec:implications}).

An open question is whether \mserpt{}'s approximation ratio is less than~$5$.
We conjecture that the true approximation ratio is~$2$
(\cref{conj:approximation_ratio}).
Another open question is how SERPT's mean response time
compares to \mserpt{}'s.
In preliminary numerical studies,
we have observed very similar performance from SERPT and \mserpt{},
with each sometimes outperforming the other,
so we conjecture that SERPT is also a constant-factor approximation of Gittins.

\begin{acks}
  This work was supported by NSF-CSR-1763701, NSF-XPS-1629444 and
  a Microsoft Faculty Award 2018.
  Ziv Scully was supported by the NSF GRFP
  under grants DGE-1745016 and DGE-125222
  and an ARCS Foundation scholarship.
  We thank the anonymous referees for their helpful comments.
\end{acks}

\bibliographystyle{ACM-Reference-Format}
\bibliography{refs}

\appendix

\section{No Approximation Ratio for Traditional Policies}
\label{app:infinite_ratio}

In this appendix we discuss the performance
of three traditional policies: FCFS, FB, and PS.
We will show that none of these policies are constant-factor approximations
for mean response time.
That is, the ratio of each policy's mean response times to that of Gittins
can be unboundedly large.

FCFS has mean response time \citep{book_harchol-balter}
\begin{align*}
  \E{\response[\policyname{FCFS}]{}}
  = \frac{\lambda\E{X^2}}{2(1 - \rho)} + \E{X}.
\end{align*}
This is infinite if $X$ has infinite variance,
but other policies have finite mean response time
for all job size distributions,
so FCFS has no constant-factor approximation ratio.

For the specific case where all jobs have size~$x$,
FB has mean response time \citep{book_harchol-balter}
\begin{align*}
  \E{\response[\policyname{FB}]{}}
  = \frac{\lambda x^2}{2(1 - \rho)^2} + \frac{x}{1 - \rho}.
\end{align*}
This is worse than FCFS's mean response time in the same case
by a factor of $1/(1 - \rho)$,
which becomes arbitrarily large in the $\rho \to 1$ limit,
so FB has no constant-factor approximation ratio.

PS has mean response time \citep{book_harchol-balter}
\begin{align*}
  \E{\response[\policyname{PS}]{}} = \frac{\E{X}}{1 - \rho}.
\end{align*}
That is, the response time of PS is insensitive
to the details of the job size distribution,
depending only on the mean.
While PS is thus generally considered to have reasonable performance
for all job size distributions,
there are certain distributions where other policies outperform PS by far.
For example, \citet{fb_heavy_zwart} show that
when $X$ is a Pareto distribution with shape parameter $\alpha \in (1, 2)$,
FB has mean response time that scales as
\begin{align*}
  \E{\response[\policyname{FB}]{}}
  \approx \frac{\alpha \E{X}}{2 - \alpha}\log\frac{1}{1 - \rho}
\end{align*}
in the $\rho \to 1$ limit.
Thus, the mean response time ratio between PS and FB
becomes arbitrarily large in the $\rho \to 1$ limit,
so PS has no constant-factor approximation ratio.

\section{Difficulty of Computing the Gittins Policy}
\label{app:gittins_hard}

In this appendix we discuss in more detail
why it is difficult to compute the Gittins rank function.
We begin with the simpler case of discrete job size distributions
(\cref{sub:gittins_hard_discrete})
before turning to continuous job size distributions
(\cref{sub:gittins_hard_continuous}).

\subsection{Discrete Job Size Distributions}
\label{sub:gittins_hard_discrete}

All the algorithms discussed in this section
assume input in the form of a list of $(x, p)$ pairs sorted by~$x$,
where $x$ is a support point and $p$ is the probability of outcome~$x$.

The problem of computing the Gittins rank\footnote{%
  Most literature refers to the \emph{Gittins index},
  which is simply the reciprocal of the Gittins rank.}
of all states in a finite Markov chains
is a well studied problem
for which the best known algorithms take $O(n^3)$ time,
where $n$ is the number of states in the Markov chain
\citep{gittins_index_computation_chakravorty}.
The reader may recall that we claim in \cref{tab:complexity}
that Gittins takes $O(n^2)$ time to compute.
This is due to two discrepancies between algorithms in the literature
and the problem we consider,
namely computing the Gittins rank function
for a discrete job size distribution.
\begin{itemize}
\item
  Algorithms in the literature assume an arbitrary finite Markov chain.
  However, a discrete job size distribution has a very simple structure
  when viewed as a Markov chain.
  Each support point is a state, and each has only two transitions
  with nonzero probability:
  to the next support point and to a terminal state.\footnote{%
    The terminal state is the maximum support point.
    Additionally, there is an initial state at age~$0$.
    In the following discussion, any mention of ``adjacent support points''
    also applies to the interval between $0$ and the first support point.}
  In this respect, our problem is \emph{easier}
  than the one solved in the literature.
\item
  Algorithms in the literature compute the Gittins rank at each state,
  which in our case corresponds to each support point.
  However, the full Gittins rank function assigns ranks to all ages,
  and ages between adjacent support points are not covered
  by algorithms in the literature.
  In this respect, our problem is \emph{harder}
  than the one solved in the literature.
\end{itemize}

It turns out that the former difference has the greater impact.
Specifically, if one uses sparse matrix operations,
algorithms in the literature can be implemented such that
they take only $O(n^2)$ time \citep{gittins_index_computation_chakravorty},
because the Markov chain of a discrete job size distribution has only
$O(n)$ transitions with nonzero probability.
The output of this algorithm is the Gittins rank of each support point,
but it remains to compute the rank function at other ages.
Between each pair of adjacent support points,
the Gittins rank function is piecewise linear with at most $O(n)$ segments.
This means a post-processing step taking $O(n)$ time per support point,
and thus $O(n^2)$ time total,
can fill in the gaps between adjacent support points.

We have summarized how to use state-of-the-art algorithms from the literature
to compute the Gittins rank function in $O(n^2)$ time.
Whether there exists an algorithm computing the Gittins rank function
in $o(n^2)$ time remains an open problem.

Finally, we briefly sketch an algorithm
that computes the SERPT and \mserpt{} rank functions in $O(n)$ time.
Computing $\rank[\serpt]{x}$ at each support point~$x$
can be done with a table containing
$\tail{x}$ and $\int_x^\infty \tail{t}\d{t}$ for each support point~$x$,
which can be generated with scans that take $O(n)$ time each.
This yields the SERPT rank at each support point,
and an additional $O(n)$ scan yields the same for \mserpt{}.
Between adjacent support points,
SERPT's rank function simply decreases at slope~$1$
while \mserpt{}'s is constant.

\subsection{Continuous Job Size Distributions}
\label{sub:gittins_hard_continuous}

The Gittins policy for continuous job size distributions
has received some attention,
with results characterizing the Gittins rank function available
under various assumptions on the job size distribution
\citep{m/g/1_gittins_aalto, mlps_gittins_aalto}.
However, none of the prior work explicitly addresses
computing the Gittins policy for a general continuous job size distribution.
Here we review the most general characterization result
and show why it does not solve
the problem of computing the Gittins rank function.

\Citet[Propositions~1 and~11]{mlps_gittins_aalto} show the following result.
Suppose there exist ages $0 = v_0, u_1, v_1, u_2, v_2, \dots$ such that
for all $i \geq 1$, the job size distribution's hazard rate~$\hazard{}$ is
\begin{itemize}
\item
  strictly decreasing for $(u_i, v_i)$ and
\item
  increasing for $(v_{i - 1}, u_i)$.
\end{itemize}
Then for all $i \geq 1$, there exists an age $w_i \in [u_i, v_i]$ such that
the Gittins rank function~$\rank[\gittins]{}$ is
\begin{itemize}
\item
  strictly increasing for $(u_i, v_i)$ and
\item
  decreasing for $(w_{i - 1}, u_i)$.\footnote{%
    We define $w_0 = 0$.}
\end{itemize}
Knowing something about the monotonicity of the Gittins rank function
is potentially helpful for computing it.
However, the results of \citet{mlps_gittins_aalto}
do not provide a way to \emph{compute} the critical ages~$w_i$.
Moreover, even if we could compute the ages $w_i$,
as we explain below,
computing the rank function can be at least as hard as in the discrete case.

For each age~$a$, there is an optimal stopping age $b_*(a)$
that solves the optimization problem in $\rank[\gittins]{a}$
(\cref{def:gittins}).
We know by results of \citet{mlps_gittins_aalto}
that if $b_*(a) > a$,
then $b_*(a)$ lies in interval $[u_i, w_i]$ for some~$i$,
but we do not know which~$i$.
This makes the search for $b_*(a)$
intractable if there are infinitely many intervals $[u_i, w_i]$
and at least as hard as the discrete case if there are finitely many.

\section{SOAP Mean Response Time Using Hills and Valleys}
\label{app:soap_hills_valleys}

\Cref{prop:waiting_hills_valleys, prop:residence_hills_valleys}
follow immediately from results of
\citet[Theorem~5.5, see also Lemmas~5.2 and~5.3]{soap_scully}.
The main obstacle is a difference in notation.
Below we translate from the notation in our paper
to the notation of \citet{soap_scully}:
\begin{align*}
  \coload{\y(x)}
  &= 1 - \rho^{\text{new}}[r^{\text{worst}}_x(0)]
  \geq 1 - \rho^{\text{new}}[r^{\text{worst}}_x(a)] \\
  \coload{\z(x)} &= 1 - \rho^{\text{old}}_0[r^{\text{worst}}_x(0)] \\
  \excess{\z(x)}
  &= \frac{\lambda}{2}\E{X^{\text{old}}_0[r^{\text{worst}}_x(0)]}
  \leq \frac{\lambda}{2}\sum_{i = 0}^\infty
    \E{X^{\text{old}}_i[r^{\text{worst}}_x(0)]}.
\end{align*}
When the rank function is monotonic,
showing that the bounds in
\cref{prop:waiting_hills_valleys, prop:residence_hills_valleys}
become equalities
boils down to proving that the two inequalities above become equalities.
We first note that any decreasing rank function is equivalent to FCFS.
But FCFS can also be expressed by a constant rank function,
which is weakly increasing.
We therefore restrict our attention to increasing rank functions,
for which the following properties are easily shown:
\begin{itemize}
\item
  $r^{\text{worst}}_x(a) = r^{\text{worst}}_x(0)$ for all ages~$a$
  \citep[Definition~4.1]{soap_scully}, and
\item
  $X^{\text{old}}_i[r] = 0$ with probability~$1$
  for all ranks~$r$ and integers $i \geq 1$.
  \citep[Definition~4.3]{soap_scully}.
\end{itemize}
Thus, both inequalities above become equalities for monotonic rank functions.

\section{Deferred Proofs}

\thmapproximationratio

\begin{proof}
  \label{pf:approximation_ratio}
  Bounding mean response time amounts to bounding
  mean waiting and residence times.
  By \cref{lem:waiting},
  \begin{align*}
    \E{\waiting[\mserpt]{}}
    \leq \frac{2}{1 + \sqrt{1 - \rho}} \E{\waiting[\gittins]{}},
  \end{align*}
  and by \cref{lem:residence},
  \begin{align}
    \label{eq:residence_bound}
    \E{\residence[\mserpt]{}}
    \leq \E{\waiting[\mserpt]{}}
      + \gp[\bigg]{\frac{1}{\rho}\log\frac{1}{1 - \rho}}\E{X}.
  \end{align}
  We can give two different bounds on
  the last term of~\cref{eq:residence_bound},
  each of which yields a bound on the mean response time ratio.
  Applying \cref{prop:residence_size} yields
  \begin{align*}
    \frac{\E{\response[\mserpt]{}}}{\E{\response[\gittins]{}}}
    &\leq \frac{%
      \frac{4}{1 + \sqrt{1 - \rho}} \E{\waiting[\gittins]{}}
      + \gp[\Big]{\frac{1}{\rho}\log\frac{1}{1 - \rho}}
        \E{\residence[\gittins]{}}}{%
      \E{\waiting[\gittins]{}}
      + \E{\residence[\gittins]{}}} \\
    &\leq \max\curlgp[\bigg]{%
      \frac{4}{1 + \sqrt{1 - \rho}},
      \frac{1}{\rho}\log\frac{1}{1 - \rho}}.
  \end{align*}
  Applying \cref{prop:response_srpt} instead yields
  \begin{align*}
    \frac{\E{\response[\mserpt]{}}}{\E{\response[\gittins]{}}}
    \leq \frac{%
      \gp[\Big]{1 + \frac{4}{1 + \sqrt{1 - \rho}}}\E{\waiting[\gittins]{}}
      + \E{\residence[\gittins]{}}}{%
      \E{\waiting[\gittins]{}}
      + \E{\residence[\gittins]{}}}
    \leq 1 + \frac{4}{1 + \sqrt{1 - \rho}}.
  \end{align*}
  Taking the minimum of these two bounds gives us
  \begin{align*}
    \frac{\E{\response[\mserpt]{}}}{\E{\response[\gittins]{}}}
    \leq \min\curlgp[\bigg]{\max\curlgp[\bigg]{%
        \frac{4}{1 + \sqrt{1 - \rho}},
        \frac{1}{\rho}\log\frac{1}{1 - \rho}},
      1 + \frac{4}{1 + \sqrt{1 - \rho}}},
  \end{align*}
  which expands to the desired piecewise bound.
\end{proof}

\lemhillsubset*

\begin{proof}
  \label{pf:hill_subset}
  We prove the result for hill ages.
  The corresponding result for hill sizes then follows immediately
  from the observation that $x$ is a hill size
  if and only if there exists $\epsilon > 0$ such that
  all ages in $[x, x + \epsilon)$ are hill ages,
  so we can simply apply the hill age result to those intervals.

  It is immediate from \cref{def:mserpt}
  that SERPT and \mserpt{} have the same hill ages,
  so in this proof, we work with SERPT instead of \mserpt{}.

  At the core of our argument is the following definition.
  For ages $a < b$, let
  \begin{align*}
    \eta(a, b)
    &= \frac{\int_a^b \tail{t} \d{t}}{\tail{a} - \tail{b}} \\
    \eta(a, a)
    &= \lim_{b \to a} \eta(a, b)
    = \frac{\tail{a}}{\density{a}}
    = \frac{1}{\hazard{a}} \\
    \eta(a, \infty)
    &= \lim_{b \to \infty} \eta(a, b)
    = \E{X - a \given X > a}.
  \end{align*}
  The function~$\eta$ is a version of the \emph{efficiency function}
  commonly used in the M/G/1 Gittins policy literature
  \citep{m/g/1_gittins_aalto, mlps_gittins_aalto}.
  Its continuity is inherited from the fact that $X$ has a density function
  (\cref{sec:model}).
  It is closely related to the rank functions of SERPT and Gittins:\footnote{%
    The minimum in $\rank[\gittins]{a}$ always exists because
    we allow $b = a$ and $b = \infty$.}
  \begin{align}
    \label{eq:rank_cmp}
    \rank[\serpt]{a} &= \eta(a, \infty) \\
    \rank[\gittins]{a} &= \min_{b \geq a} \eta(a, b) \leq \rank[\serpt]{a}
  \end{align}
  It is simple to verify that for any ages $a \leq b \leq c$,
  \begin{alignat}[c]{3}
    \label{eq:score_cmp}
    \eta(a, b) &\leq{}& \eta(a, c) &\leq{}& \eta(b, c) \\
    && \centermathcell{\Updownarrow} && \\
    \eta(a, b) &\leq{}& \eta(a, c) && \\
    && \centermathcell{\Updownarrow} && \\
    && \eta(a, c) &\leq{}& \eta(b, c) \\
    && \centermathcell{\Updownarrow} && \\
    \eta(a, b) &&\centermathcell{{}\leq{}} && \eta(b, c).
  \end{alignat}
  and similarly for strict inequalities when $a < b < c$.

  A useful intuition is that
  $\eta(a, b)$ gives a ``score'' to the interval $[a, b]$,
  where lower scores are better.
  SERPT gives a job at age~$a$ rank equal to the score of $[a, \infty]$,
  while Gittins is pickier,
  choosing the best score among all intervals that start at~$a$.
  What \cref{eq:score_cmp} says is that
  if we divide an interval into two pieces,
  the score of the interval is between scores of its pieces.

  Let $v$ be a Gittins hill age and consider any age $u < v$.
  We want to show that $v$ is a SERPT hill age,
  which amounts to showing $\rank[\serpt]{u} < \rank[\serpt]{v}$.
  By \cref{eq:rank_cmp, eq:score_cmp}, it suffices to show
  \begin{align}
    \label{eq:score_goal}
    \eta(u, v) \leq \rank[\gittins]{v}.
  \end{align}
  For simplicity, we show \cref{eq:score_goal} only for $u = 0$,
  explaining at the end of the proof why we can do so
  without loss of generality.

  To show \cref{eq:score_goal} with $u = 0$,
  we need to understand $\eta(0, v)$.
  We can partition $[0, v]$ into Gittins hills and valleys,
  meaning there exist
  \begin{align*}
    0 = \z_0 \leq \y_1 < \z_1 < \dots
    < \y_n < \z_n \leq \y_{n + 1} = v
  \end{align*}
  such that
  \begin{itemize}
    \item
      $(\y_i, \z_i]$ is a Gittins valley for all $i \in \{1, \dots, n\}$,
    \item
      $(\z_i, \y_{i + 1}]$ is a Gittins hill
      for all $i \in \{1, \dots, n\}$, and
    \item
      either $\z_0 = \y_1$ or $(\z_0, \y_1]$ is a Gittins hill.
  \end{itemize}
  By repeatedly applying \cref{eq:score_cmp},
  it suffices to show that for each hill $(\z_i, \y_{i + 1}]$,
  \begin{align}
    \label{eq:score_hill}
    \eta(z_i, y_{i + 1}) < \rank[\gittins]{v},
  \end{align}
  and that for each valley $(\y_i, \z_i]$,
  \begin{align}
    \label{eq:score_valley}
    \eta(y_i, z_i) < \rank[\gittins]{v}.
  \end{align}
  We prove these bounds
  in \cref{step:score_hill, step:score_valley} below, respectively.

  \setcounter{step}{0}
  \begin{step}[bound for Gittins hills]
    \label{step:score_hill}
    Let $(\z_i, \y_{i + 1}]$ be a Gittins hill.
    Continuity of $\rank[\gittins]{}$ (\cref{sub:policies})
    and a result of \citet[Proposition~3]{mlps_gittins_aalto}
    together imply that for all $a \in [\z_i, \y_{i + 1})$,
    \begin{align*}
      \rank[\gittins]{a} = \frac{1}{\hazard{a}}
      < \frac{1}{\hazard{\y_{i + 1}}}
      = \rank[\gittins]{\y_{i + 1}},
    \end{align*}
    from which another result \citep[Lemma~5]{mlps_gittins_aalto} yields
    \begin{align*}
      \rank[\gittins]{\z_i} = \eta(\z_i, \z_i) \leq \eta(\z_i, \y_{i + 1}).
    \end{align*}
    By \cref{eq:score_cmp}, we also have
    \begin{align*}
      \eta(\z_i, \y_{i + 1})
      \leq \eta(\y_{i + 1}, \y_{i + 1})
      = \rank[\gittins]{\y_{i + 1}}
    \end{align*}
    Combining this with the fact that $v > \y_{i + 1}$ is a Gittins hill age
    implies~\cref{eq:score_hill}, as desired.
  \end{step}

  \begin{step}[bound for Gittins valleys]
    \label{step:score_valley}
    Let $(\y_i, \z_i]$ be a Gittins valley.
    A fundamental property of Gittins
    \citep[Lemma~2.2]{book_gittins}
    implies\footnote{%
      \Citet{book_gittins} focus on a discrete setting,
      but essentially the same proof holds in our continuous setting.}
    \begin{align*}
      \rank[\gittins]{\y_i} = \eta(\y_i, \z_i).
    \end{align*}
    Combining this with the fact that $v > \y_i$ is a Gittins hill age
    implies~\cref{eq:score_valley}, as desired.
  \end{step}

  With \cref{step:score_hill, step:score_valley}
  we have shown \cref{eq:score_goal} for $u = 0$.
  To generalize the argument to $u > 0$,
  we observe that
  the rank functions of SERPT and Gittins at ages $u$ and later
  do not depend on ages earlier than~$u$.
  Consider a modified job size distribution $X' = (X - u \given X > u)$.
  Writing $\rankp{}$ for rank functions with distribution~$X'$,
  we have
  \begin{align*}
    \rankp[\serpt]{a} &= \rank[\serpt]{a + u} \\
    \rankp[\gittins]{a} &= \rank[\gittins]{a + u}
  \end{align*}
  for all ages~$a$.
  Switching job size distributions from $X$ to~$X'$
  simply shifts the rank functions by~$u$,
  so $v - u$ is a Gittins hill age for~$X'$.
  This transforms the $u > 0$ case for~$X$ into the $u = 0$ case for~$X'$.
\end{proof}

\begin{anononly}

\section{Responses to Reviewer Comments}
\label{app:rebuttal}

We begin by addressing the three main points we were asked to address
for this one-shot revision.
We then address the individual reviewer comments
that are not covered by the three main points.
We have updated references in the reviewer comments
to match the current version of the paper.

\subsection{Main Points for One-Shot Revision}
\label{sub:main_points_one-shot}

\begin{enumerate}[(1)]
  \item
    The reviewers asked us to better substantiate the claim
    that Gittins is hard to compute.
    \begin{itemize}[\labelitemii]
    \item
      We address this in \cref{app:gittins_hard}.
      Briefly, while the work of
      \citet{m/g/1_gittins_aalto, mlps_gittins_aalto}
      provides some characterization of
      what the Gittins rank function looks like,
      it does not completely solve the problem of computing it.
    \item
      We also include a brief discussion of
      what advances would be necessary
      to make the computation of Gittins more tractable
      (\cref{sub:gittins_easier}).
    \end{itemize}
  \item
    The reviewers asked whether we can close the gap between
    our bounds on \mserpt{}'s approximation ratio,
    which we show is at least~$2$ (\cref{sec:pathological})
    and at most~$5$ (\cref{thm:approximation_ratio}).
    \begin{itemize}[\labelitemii]
    \item
      We first reemphasize that our result that \mserpt{}
      is a $5$\=/approximation for mean response time is
      the first constant-factor approximation ratio
      proven for any policy other than Gittins.
      In fact, we can say something stronger.
      Suppose one used a set of traditional blind policies from the literature
      (FCFS, FB, PS, RMLF, etc.)
      to create a hybrid policy that
      for each job size distribution~$X$
      used whichever policy had the best mean response time for~$X$.
      There is no proof or combination of proofs in the literature
      showing that the resulting hybrid policy
      is a constant-factor approximation of Gittins.
    \item
      We spent significant effort attempting
      to improve our bound on the approximation ratio,
      including trying Reviewer~A's suggestion of using the connection between
      Gittins and MLPS policies,
      but we were not able to.
      The result of our effort is \cref{sec:tightening},
      which discusses in detail what obstacles there are to closing the gap
      and what approaches future work could take
      to try to overcome those obstacles.
    \end{itemize}
  \item
    The reviewers asked us to make the proofs easier to follow.
    \begin{itemize}[\labelitemii]
    \item
      We have restructured the whole paper to make it easier to follow.
      The most important change for the technical presentation
      is that \cref{sec:outline, sec:hills_valleys}
      have been completely rewritten.
      \Cref{sec:outline} now gives a high-level outline of the entire paper,
      including detailed intuitive overviews of the strategies behind
      the proofs of \cref{lem:waiting, lem:residence}.
      This high-level discussion sets the stage for \cref{sec:hills_valleys}
      to give more rigorous definitions of hills, valleys, and related concepts
      than in the previous version of the paper.
    \end{itemize}
\end{enumerate}

\subsection{Reviewer~A}
\label{sub:reviewer_a}

\begin{itemize}
\item
  \textit{Figure 3.1. When the rank is flat, M-SERPT implements FCFS. However, PS could also be a viable option. If there is any reason as to why FCFS was selected? Would the results change in case PS were implemented instead?}
  \begin{itemize}
  \item
    The results would change if PS tiebreaking was used instead,
    and there is no guarantee that \mserpt{} would still have
    constant approximation ratio with PS tiebreaking.
    Our motivation for using FCFS is two-fold.
    First, regions where \mserpt{}'s rank is constant
    correspond to SERPT valleys,
    in which a job can have better SERPT rank than at the start of the valley,
    so it is natural to choose a nonpreemptive tiebreaking rule for \mserpt{}.
    Second, analyzing SOAP policies with PS tiebreaking is an open problem
    \citep{soap_scully}.
  \end{itemize}
\item
  \textit{As explained in Section 7.2, M-SERPT belongs to the so-called multilevel processor sharing (MLPS) policies introduced and analyzed by Kleinrock. In the case of no arrivals, Gittins' index policy also belongs to MLPS set of policies, ref. \citep{mlps_gittins_aalto}. Thus, in the case of no-arrivals, the difference between Gittins' and M-SERPT will be on the location of thresholds, and possibly on the policy implemented within an interval. In the case of arrivals, Gittins' is not an MLPS policy, as jobs with positive attained service might have higher priority than new jobs.}
  \begin{itemize}
  \item
    We have added a footnote discussing this in \cref{sub:mlps}.
    Note that essentially any SOAP policy,
    specifically any with a piecewise monotonic rank function,
    becomes an MLPS policy with no arrivals.
    We discuss this in more detail below.
  \end{itemize}
\item
  \textit{The performance of Gittins index is bounded by a general result that might be very loose, since it covers size aware policies such as SRPT.}
  \begin{itemize}
  \item
    We address this in \cref{sub:tightening_lower}.
  \end{itemize}
\item
  \textit{The connection between Gittins and MLPS is worth exploring and it could perhaps help obtaining tighter bounds.}
  \begin{itemize}
  \item
    One can think of our proof as using a connection
    between MLPS policies and SOAP policies,
    of which the Gittins-MLPS connection is a special case.
    The more general SOAP-MLPS connection is as follows.
    The class of SOAP policies with monotonic rank functions
    is the same as the class of MLPS policies that use only FCFS and FB levels.
    In particular, any SOAP policy with general rank function~$\rank{}$
    has a related MLPS policy with monotonic rank function~$\rankup{}$
    (\cref{def:increasing_envelope}).
    Thus, one can think of
    \cref{prop:waiting_hills_valleys, prop:residence_hills_valleys} as relating
    the mean waiting and residence times of an arbitrary SOAP policy
    in terms of the corresponding quantities for its related MLPS policy.
  \item
    Our proof currently uses the SOAP-MLPS connection for waiting time
    by way of \cref{prop:waiting_hills_valleys},
    which gives $\E{\waiting[\mserpt]{}}$ exactly
    and a useful lower bound on $\E{\waiting[\gittins]{}}$.
    However, we have not been able to exploit the connection for residence time
    because the bound in \cref{prop:waiting_hills_valleys}
    goes the other direction.
  \item
    We would be open to adding an appendix with
    a more detailed version of the discussion above.
  \end{itemize}
\item
  \textit{In the case of no-arrivals, it would be interesting to characterize the approximation ratio between Gittins and M-SERPT.}
  \begin{itemize}
  \item
    It is possible that a tighter bound is attainable
    in the arrival-free setting.
    However, such a bound would have to use techniques
    fundamentally different from ours.
    This is because the SOAP analysis \citep{soap_scully},
    on which our work is based,
    applies only to the M/G/1 with arrivals.
    A first step might be an arrival-free variation of the SOAP analysis.
  \end{itemize}
\item
  \textit{In the example of \cref{fig:mserpt}, it would be nice to have represented the Gittins' index. This would be useful in understanding how Gittins would operate.}
  \begin{itemize}
  \item
    The purpose of the figure is to illustrate what \mserpt{}
    rather than to compare it to Gittins or SERPT,
    so we prefer leaving the figure uncluttered.
  \end{itemize}
\item
  \textit{\Cref{lem:hill_subset} seems to be the same result as Prop 7 in \citep{mlps_gittins_aalto}.}
  \begin{itemize}
  \item
    The results are not the same, though they use some similar ideas.
    We have added a note clarifying this point
    before the statement of \cref{lem:hill_subset}.
    The difference between the results is that Gittins hill ages
    are not the same as ages at which $\rank[\gittins]{}$ is increasing.
    \Cref{lem:hill_subset} deals with the former,
    while \citet[Proposition~7]{mlps_gittins_aalto} deals with the latter.
  \end{itemize}
\item
  \textit{\Cref{sec:pathological}. For the particular distribution of this section, it would be interesting to depict the rank of M-SERPT and Gittins index, to better understand how both policies differ.}
  \begin{itemize}
  \item
    We chose to explain the key difference in words.
    Specifically,
    the second equation in \cref{sec:pathological}
    shows the key difference between
    $\rank[\mserpt]{}$ and $\rank[\gittins]{}$,
    and the text immediately following it explains
    why this difference causes \mserpt{} to make bad scheduling decisions.
  \end{itemize}
\end{itemize}

\subsection{Reviewer~B}

\begin{itemize}
\item
  \textit{The results, though new, appear a bit limited compared to those in \citep{soap_scully}.}
  \begin{itemize}
  \item
    We assume this is referring to
    \cref{prop:waiting_hills_valleys, prop:residence_hills_valleys},
    which in the previous version were presented as a
    ``new analysis technique''.
    We now clarify that
    we are not replacing the SOAP analysis of \citet{soap_scully}
    but rather presenting a new simplification of it
    (\cref{sub:hills_valleys_outline}).
  \item
    We would like to emphasize that the main result of our work,
    namely \cref{thm:approximation_ratio},
    is absolutely not subsumed by the results of \citet{soap_scully}.
    We address this in \cref{sub:challenges, sub:why_not_soap}.
  \end{itemize}
\item
  \textit{The reviewer wonders how M-SERPT compares with SERPT? M-SERPT has a higher complexity for computing the rank function. It seems fair to expect that it has a better performance. Would it be possible to demonstrate this, either analytically or numerically? \Cref{fig:response_comparison} also does not have the comparison.}
  \begin{itemize}
  \item
    The analytical question remains open.
    We have some preliminary numerical results,
    but have omitted them for lack of space.
    Roughly speaking, SERPT seems never to be
    more than a few percentage points better than \mserpt{},
    including the distribution in \cref{fig:response_comparison},
    but there are a few corner cases where
    SERPT is noticeably worse than \mserpt{}.
    It is possible that the reverse corner cases also exist,
    but we have not yet found them.
  \item
    We omit SERPT from \cref{fig:response_comparison} to reduce clutter.
    The curve is almost on top of the Gittins baseline,
    which makes \mserpt{} at most $4\%$ worse.
  \end{itemize}
\item
  \textit{Given that the scheme M-SERPT depends on distribution information of the job size, it would be useful if the authors elaborate on how sensitive the scheme is to estimation error and how practical it would be for implementation. Schemes like FCFS or PS, though not having strong theoretical guarantees, are immune to such issues. If it is hard to do so analytically, numerical results would help.}
  \begin{itemize}
  \item
    We think that measuring the sensitivity of Gittins, SERPT, and \mserpt{}
    to errors in the specification of the job size distribution~$X$
    is an important open problem.
    It is a significant departure from the goals of this paper,
    so we leave it for future work.
  \end{itemize}
\item
  \textit{Also, an approximation factor 3--5 does not appear to be ``near-optimal'' to the reviewer.}
  \begin{itemize}
  \item
    We still use this phrase because we worried that changing the title
    might be confusing for a one-shot revision,
    but we are open to changing the exact phrase.
  \end{itemize}
\end{itemize}

\subsection{Reviewer~C}

The reviewer had several complaints about the clarity of the writing.
Broadly speaking, these are addressed by improvements we made
in response to point (3) of \cref{sub:main_points_one-shot}.
Here we point out a few of the specific things we have clarified.
\begin{itemize}
\item
  We now clarify that we are not replacing the SOAP analysis
  but rather providing a new simplification of it
  (\cref{sub:hills_valleys_outline}).
  We thus no longer use the phrase ``new analysis technique''.
\item
  Hills and valleys are now defined twice:
  first in \cref{sub:hills_valleys_outline}
  emphasizing intuition and illustrating with a figure,
  then again in \cref{sub:hills_valleys_defs},
  which is more rigorous and covers all the corner cases.
  The reviewer's feedback helped us design this new presentation.
\item
  We now emphasize multiple times
  that \cref{lem:waiting, lem:residence} are
  the core technical steps of our main result.
  To aid the reader in following their complicated proofs,
  we have added new high-level outlines of their respective proof strategies
  in \cref{sub:waiting_outline, sub:residence_outline}.
\item
  We thank the reviewer for their feedback about
  our discussion of the Pessimism Principle.
  We ultimately decided to remove it,
  which helped create space for the high-level outlines mentioned above.
\end{itemize}

The reviewer had one additional comment that does not fit under point~(3).
\begin{itemize}
\item
  \textit{The approach in the paper seems related to older ideas in MDPs of approximate indexability - the authors should in particular take a look at Weber's prevailing charge approach for proving the optimality of the Gittins index policy, and also more recent work on using similar ideas for approximate index policies (for example, the work of Kleinberg, Waggoner and Weyl).}
  \begin{itemize}
  \item
    We are familiar with Weber's prevailing charge argument
    and related work by Kleinberg et al., Dumitriu et al., Singla et al.,
    and others.
    We have not yet been able to find a direct connection.
    It is possible that there is a relationship between this
    and the Gittins-MLPS connection mentioned by Reviewer~A
    (\cref{sub:reviewer_a}).
  \end{itemize}
\end{itemize}

\subsection{Reviewer~D}

The reviewer says
they believe SIGMETRICS is not the appropriate venue for our work.
We disagree: SIGMETRICS has for more than a decade been a premier venue
for work in queueing theory and scheduling.
Specifically, since 2001,
SIGMETRICS has had at least one full session dedicated to
topics in queueing theory,
with over half featuring a full session specifically on scheduling.

\end{anononly}

\end{document}